\theoremstyle{plain}
\newtheorem{thm}{Theorem}
\theoremstyle{plain}
\theoremstyle{definition}
\newtheorem{definition}{Definition}
\newtheorem{algo}{Algorithm}
\newtheorem{lemma}{Lemma}
\newtheorem{example}{Example}
\newtheorem*{lemma*}{Lemma}
\newtheorem{corollary}{Corollary}
\theoremstyle{remark}
\newtheorem{remark}{Remark}
\newcommand\T{\rule{0pt}{3.0ex}}
\newcommand\B{\rule[-0.8ex]{0pt}{0pt}}
\newcommand\K{\mathrm{K}}
\newcommand\U{\mathrm{U}}
\newcommand\I{\mathrm{I}}
\newcommand\A{\mathrm{A}}
\title{{\huge Cognitive Access Policies under a Primary ARQ process
 via Forward-Backward Interference Cancellation}}
\author{
\textbf{
Nicol\`o Michelusi, 
Petar Popovski, 
Osvaldo Simeone,\\ 
Marco Levorato, 
Michele Zorzi 
}
\thanks{
N. Michelusi and M. Zorzi are with the 
Department of Information Engineering,
University of Padova, Italy (\texttt{\{michelusi,zorzi\}@dei.unipd.it});
P. Popovski is with the Department of Electronic Systems,
Aalborg University, Denmark (\texttt{petarp@es.aau.dk});
O. Simeone is with the 
Center for Wireless Communication and Signal Processing Research (CWCSPR), New Jersey Institute of Technology,
 New Jersey, USA (\texttt{osvaldo.simeone@njit.edu});
M. Levorato is with the Ming Hsieh Department of Electrical Engineering, University of Southern California,
Los Angeles, USA,
and with Stanford University, USA (\texttt{levorato@stanford.edu}).

This work is a generalization of~\cite{MichelusiITA11},
 presented at the \emph{Information Theory and Applications Workshop} in February 2011, and 
of~\cite{MichelusiBIC11}, presented at the
 \emph{49th Annual Allerton Conference on Communication, Control, and Computing} in September 2011.

 Manuscript received date: Apr. 15, 2012.

 Manuscript revised date: Sep. 1, 2012.
}}
\begin{document}
\maketitle
\begin{abstract}
This paper introduces a novel technique for access by a cognitive Secondary User (SU)
using best-effort transmission
to a spectrum with an incumbent Primary User (PU), which uses Type-I
Hybrid ARQ. The technique leverages the primary ARQ protocol to perform Interference Cancellation
(IC) at the SU receiver (SUrx). Two IC mechanisms that work in concert are introduced: \emph{Forward
IC}, where SUrx, after decoding the PU message, cancels its interference in the (possible) following
PU retransmissions of the same message, to improve the SU throughput; \emph{Backward IC}, where SUrx performs IC on
previous SU transmissions, whose decoding failed due to severe PU interference. Secondary access policies are
designed that determine the secondary access probability in each state of the network so as to maximize
the average long-term SU throughput by opportunistically leveraging IC, while causing bounded average
long-term PU throughput degradation and SU power expenditure. It is proved that the optimal policy
prescribes that the SU prioritizes its access in the states where SUrx knows the PU message, thus enabling IC.
 An algorithm is provided to optimally allocate additional secondary access
opportunities in the states where the PU message is unknown. Numerical results are shown to assess
the throughput gain provided by the proposed techniques.
\end{abstract}
\begin{IEEEkeywords}
Cognitive radios, resource allocation, Markov decision
processes, ARQ, interference cancellation
\end{IEEEkeywords}
\section{Introduction}
Cognitive Radios (CRs)~\cite{Mitola} offer a novel paradigm for improving the efficiency of spectrum
usage in wireless networks. Smart users, referred to as \emph{Secondary Users} (SUs),
adapt their operation in order to opportunistically leverage the channel resource while generating
bounded interference to the Primary Users (PUs)~\cite{FCC,spectrumsharing,Peha}.
For a survey on cognitive radio, dynamic spectrum access and the related research challenges,
we refer the interested reader to~\cite{Peha,cognitiveStart2,goldsmith,DySpAN}.

In a standard model for cognitive radio,
the PU is a legacy system oblivious to the presence of the SU, which needs to satisfy 
given constraints
on the performance loss caused to the PU
(underlay cognitive radio paradigm~\cite{goldsmith}). 
Within this framework, we propose to exploit the intrinsic redundancy,
in the form of copies of PU packets, introduced by the 
Type-I Hybrid Automatic Retransmission reQuest (Type-I HARQ~\cite{Comroe}) protocol
 implemented by the PU by enabling Interference Cancellation (IC)
at the SU receiver (SUrx). 
We introduce two IC schemes that work in concert, both
 enabled by the underlying retransmission process of the PU.
With \emph{Forward IC} (FIC),
 SUrx, after decoding the PU message,
performs IC in 
the next PU retransmission attempts, if these occur. While FIC provides IC on 
SU transmissions performed in future
time-slots, \emph{Backward IC} (BIC) provides IC on 
SU transmissions performed in previous time-slots within the same  primary ARQ
retransmission window, whose decoding failed due to severe interference from the PU.
BIC relies on buffering of the received signals.
Based on these IC schemes, we model the state evolution of the PU-SU network 
as a Markov Decision Process~\cite{Bertsekas,DJWhite},
induced by the specific access policy used by the SU,
which determines its access probability in each state
 of the network.
Following the approach put forth by~\cite{IT_ARQ},
we study the problem of designing optimal secondary access policies 
that maximize the average long-term SU throughput by opportunistically 
leveraging FIC and BIC,
while causing 
a bounded average long-term throughput loss to the PU and a bounded average long-term SU power expenditure.
We show that the optimal strategy dictates that the SU prioritizes its channel access in the states
where SUrx knows the PU message, thus enabling IC; moreover,
we provide an algorithm to optimally allocate
additional secondary access opportunities in the states 
where the PU message is unknown.

The idea of exploiting PU retransmissions to perform IC on future packets (similar to our FIC mechanism) was 
put forth by~\cite{Nosratinia},
which devises several cognitive radio protocols exploiting the hybrid ARQ retransmissions of the PU.
Therein, the PU employs hybrid ARQ with incremental redundancy and the ARQ mechanism is limited to at most one retransmission.
The SU receiver attempts to decode the PU message in the first time-slot. If successful, the SU transmitter
sends its packet and the SU receiver decodes it by using IC on the received signal.
In contrast, in this work, we address the more general case of an arbitrary number of
primary ARQ retransmissions, and we allow a more general access pattern for the SU pair over the entire 
primary ARQ window. We also model the interplay between
the primary ARQ protocol and the activity of the SU, by allowing for BIC.
 It should be noted that IC-related schemes are also used in other context, \emph{e.g.}, decoding for graphical codes
~\cite{LT} and multiple access protocols~\cite{Katabi}.

Other related works include~\cite{Nosratinia2}, which devises an opportunistic sharing scheme with channel probing
 based on the ARQ feedback from the PU receiver.
An information theoretic framework for cognitive radio is investigated in~\cite{Jovicic}, where the SU transmitter has non-casual 
knowledge of the PU's codeword. In~\cite{Devroye}, the data transmitted by the PU is obtained causally at the SU receiver.
However, this model requires a joint design of the PU and SU signaling and channel state information at the transmitters.
In contrast, in our work we explicitly model the dynamic acquisition of the PU message at the SU receiver, which enables IC.
Moreover, the PU is oblivious to the presence of the SU.

The paper is organized as follows. 
Sec.~\ref{sec:sys_model} presents the system model.
Sec.~\ref{sec:performance} introduces the
secondary access policy, the
 performance metrics and the optimization problem,
which is addressed in
Sec.~\ref{sec:SU_tx_policy}. 
Sec.~\ref{sec:numres} presents and discusses the numerical results.
Finally, Sec.~\ref{sec:remarks} concludes the paper. 
The proofs of the lemmas and theorems are provided in the appendix.
\section{System Model}
\label{sec:sys_model}
We consider a two-user interference network, as depicted in Fig.~\ref{fig:cog_net},
 where a primary transmitter and a secondary transmitter, denoted by PUtx 
 and SUtx, respectively,
transmit to their respective receivers, PUrx and SUrx, over 
the direct links PUtx$\rightarrow$PUrx and SUtx$\rightarrow$SUrx.
 Their transmissions generate mutual interference over the links
PUtx$\rightarrow$SUrx and SUtx$\rightarrow$PUrx.

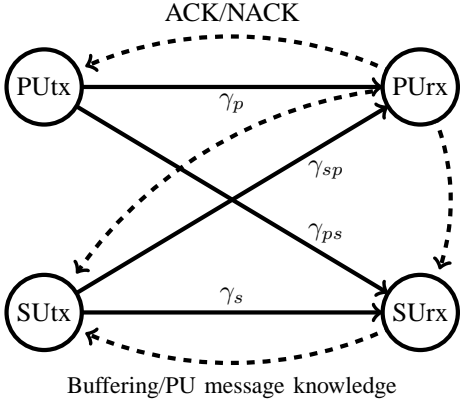
\begin{figure}
    \centering
    {
\begin{tikzpicture}
\draw [ultra thick, ->] (0,0) -- (5-0.5,0);
\draw [ultra thick, ->] (0,3) -- (5-0.5,3);
\draw [ultra thick, ->] (0,0) -- (5-5*0.0857,3-3*0.0857);
\draw [ultra thick, ->] (0,3) -- (5-5*0.0857,0+3*0.0857);
\draw [dashed, ultra thick, ->] (5,3) arc [radius=5, start angle=60, end angle= 113];
\draw [dashed, ultra thick, ->] (5,3) arc [radius=5.83, start angle=92, end angle= 145];
\draw [dashed, ultra thick, ->] (5,0) arc [radius=5, start angle=-60, end angle=-113];
\draw [dashed, ultra thick, ->] (5,3) arc [radius=3, start angle=30, end angle=-18];
\draw [fill=white, ultra thick] (0,0) circle [radius=0.5];;
\draw [fill=white, ultra thick] (5,0) circle [radius=0.5];;
\draw [fill=white, ultra thick] (0,3) circle [radius=0.5];;
\draw [fill=white, ultra thick] (5,3) circle [radius=0.5];;
\node at (0,0) {SUtx};
\node at (5,0) {SUrx};
\node at (0,3) {PUtx};
\node at (5,3) {PUrx};
\node at (2.5,+0.22) {$\gamma_s$};
\node at (2.5,3-0.22) {$\gamma_p$};
\node at (5-1.25,3-1.1) {$\gamma_{sp}$};
\node at (5-1.25,1.05) {$\gamma_{ps}$};
\node at (2.5,3+1) {ACK/NACK};
\node at (2.5,-1) {\small Buffering/PU message knowledge};
\end{tikzpicture}}
\caption{System model}
\label{fig:cog_net}
\end{figure}

Time is divided into time-slots of fixed duration. Each time-slot matches the length of 
the PU and SU packets,
and the transmissions of the PU and SU are assumed to be perfectly synchronized.
We adopt the block-fading channel model, \emph{i.e.}, the channel gains are
constant within the time-slot duration, and change from time-slot to time-slot.
Assuming that the SU and the PU transmit with constant power $P_s$ and $P_p$,
 respectively, and that noise at the receivers is zero mean Gaussian with variance $\sigma_w^2$,
we define the  instantaneous Signal to Noise Ratios (SNR)
of the links SUtx$\rightarrow$SUrx, PUtx$\rightarrow$PUrx, SUtx$\rightarrow$PUrx and PUtx$\rightarrow$SUrx,
during the $n$th time-slot,
 as $\gamma_s(n)$, $\gamma_p(n)$, $\gamma_{sp}(n)$ and $\gamma_{ps}(n)$, respectively.
We model the SNR process $\left\{\gamma_x(n),n=0,1,\dots\right\}$,
where $x\in\{s,p,sp,ps\}$, as i.i.d. over time-slots and independent over the different links,
and we denote the average SNR as $\bar\gamma_x=\mathbb E[\gamma_x]$.

We assume that no Channel State Information (CSI) is available at the transmitters, so that
the latter cannot allocate their rate
based on the instantaneous link quality, to ensure correct delivery
of the packets to their respective receivers. Transmissions
may thus undergo outage, when the selected rate is not supported by the current
channel quality.

In order to improve reliability, the PU employs Type-I HARQ~\cite{Comroe}
with deadline $D\geq 1$,
\emph{i.e.}, at most $D$ transmissions of the same PU message can be performed,
after which the packet is discarded
and a new transmission is performed (the PU is assumed to be backlogged).
We define the
\emph{primary ARQ state} $t\in\mathbb{N}(1,D)$\footnote{We define $\mathbb{N}(n_{0},n_{1})=\left\{ t\in\mathbb{N},n_{0}\leq t\leq n_{1}\right\}$
for $n_{0}\leq n_{1}\in\mathbb{N}$} as the number of ARQ
transmission attempts already performed on the current PU message, plus the current one. Namely, $t=1$ indicates a new PU
transmission, and the counter $t$ is increased at each ARQ retransmission,
until the deadline $D$ is reached.
We assume that the ARQ feedback is received at the PU transmitter by the end of the time-slot, so that, if requested, a retransmission
can be performed in the next time-slot.

On the other hand, the SU, in each time-slot,
either accesses the channel by transmitting its own message, or stays idle.
This decision is based on the access policy $\mu$, defined in Sec.~\ref{sec:performance}.
The activity of the SU, which is governed by $\mu$, 
affects the outage performance of the PU, by creating interference to the PU over the link SUtx$\rightarrow$PUrx.
We denote the primary outage probability when the SU is idle and accesses the channel, respectively, as\footnote{
Herein, we denote the outage probability as $q_{xy}^{(Z)}$,
where $x$ and $y$ are the source and the recipient of the message, respectively
 (PU if $x,y=p$, SU if $x,y=s$), and $Z\in\{\mathrm{A},\mathrm{I}\}$ denotes
the action of the SU ($\mathrm{A}$ if the SU is active and it accesses the channel,
 $\mathrm{I}$ if the SU remains idle).
For example, $q_{ps}^{(\mathrm{A})}$
is the probability that the  PU message is in outage at SUrx, when SUtx transmits.
}
\begin{align}
&q_{pp}^{(\mathrm{I})}(R_{p})\triangleq\mbox{Pr}\left(R_{p}>C\left(\gamma_{p}\right)\vphantom{R_{p}>C\left(\frac{\gamma_{p}}{1+\gamma_{sp}}\right)}\right),
\nonumber\\&
q_{pp}^{(\mathrm{A})}(R_{p})\triangleq\mbox{Pr}\left(R_{p}>C\left(\frac{\gamma_{p}}{1+\gamma_{sp}}\right)\right)\label{PUoutage2},
\end{align}
where $R_p$ denotes the PU transmission rate, measured
in bits/s/Hz,
$C(x)\triangleq\log_{2}(1+x)$ is the (normalized) capacity
of the Gaussian channel with SNR $x$ at the receiver~\cite{Cover}.
This outage definition, as well as the ones introduced later on, assume the use of Gaussian signaling and capacity-achieving
 coding with sufficiently long codewords.
However, our analysis can be extended to include practical codes by
computing the outage probabilities for the specific code considered.
In~(\ref{PUoutage2}), it is assumed that 
 SU transmissions are treated as background Gaussian noise by the PU. This is a reasonable assumption in CRs in which the PU is oblivious to the presence of SUs. 
In general,
 we have $q_{pp}^{(\mathrm{A})}(R_{p})\geq q_{pp}^{(\mathrm{I})}(R_{p})$,
where equality holds if and only if $\gamma_{sp}\equiv 0$ deterministically.
We denote the expected PU throughput
accrued in each time-slot, when the SU is idle and accesses the channel,
as $T_{p}^{(\mathrm{I})}(R_p)=R_p[1-q_{pp}^{(\mathrm{I})}(R_p)]$
 and $T_{p}^{(\mathrm{A})}(R_p)=R_p[1-q_{pp}^{(\mathrm{A})}(R_p)]$, respectively.
\subsection{Operation of the SU}
\label{subsec:SU} 
Unlike the PU
 that uses a simple Type-I Hybrid ARQ mechanism,
it is assumed that the SU uses "best effort" transmission.
Moreover, the SU is provided with side-information about the PU, \emph{e.g.},
ARQ deadline $D$, PU codebook and feedback information from PUrx (ACK/NACK messages). 
 This is consistent with the common characterization
of the PU as a legacy system,
 and of the SU as an opportunistic and cognitive system,
which exploits the primary ARQ feedback to create a best-effort link with maximized throughput,
 while the flow control mechanisms are left to the upper layers.
By overhearing the feedback information from PUrx,
the SU can thus track the primary ARQ state $t$. 
Moreover, by leveraging the PU codebook,
 SUrx attempts, in any time-slot, to decode the PU message, which enables the following IC techniques at SUrx:
\begin{itemize}
 \item  \emph{Forward IC} (FIC): by decoding the PU message, SUrx can
perform IC in the current as well as 
in the following ARQ retransmissions, if these occur,
 to achieve a larger SU throughput;
\item \emph{Backward IC} (BIC):
 SUrx buffers the received signals corresponding to 
SU transmissions which undergo outage due to severe interference from the PU.
These transmissions can later be recovered
using IC on the buffered received signals,
if the interfering PU message is successfully decoded by SUrx in a subsequent primary ARQ retransmission attempt.
\end{itemize}
We define the \emph{SU buffer state} $b\in\mathbb N(0,B)$
as the number of received signals currently buffered at SUrx,
where $B\in\mathbb N(0,D-1)$\footnote{
Note that $B\leq D-1$, since the same PU message is transmitted at most $D$ times by PUtx.
Once the ARQ deadline $D$ is reached, a new PU transmission occurs, and the buffer is emptied.
} denotes the buffer size.
Moreover, we define the \emph{PU message knowledge state} $\Phi\in\{\K ,\U \}$, which denotes the knowledge at SUrx about the 
PU message currently handled by the PU. Namely, if $\Phi=\K $,
 then SUrx knows the PU message, thus enabling FIC/BIC;
 conversely ($\Phi=\U $), the PU message is unknown to SUrx.
\begin{remark}[Feedback Information]
 Note that PUrx needs to report one feedback bit to inform PUtx (and the SU, which overhears
the feedback) on the transmission outcome (ACK/NACK).
On the other hand, two feedback bits need to be reported by SUrx to SUtx:
one bit to inform SUtx as to whether the PU message has been successfully decoded,
so that SUtx can track the PU message knowledge state $\Phi$;
and one bit to inform SUtx as to whether the received signal has been buffered,
so that SUtx can track the SU buffer state $b$.
Herein, we assume ideal (error-free) feedback channels,
so that the SU can track 
$(t,b,\Phi)$, and the PU can track the ARQ state $t$.
However, optimization is possible with imperfect
 observations as well~\cite{Levonoisyobs}.
\qed
\end{remark}

We now further detail the operation of the SU
for $\Phi~\in~\{\K ,\U \}$.
\subsubsection{PU message unknown to SUrx ($\Phi=\U $)}
When $\Phi=\U $ and the 
 SU is idle, SUrx attempts to decode the PU message, so as to enable FIC/BIC.
A decoding failure occurs if the rate of the PU message, $R_p$,
exceeds the capacity of the channel PUtx$\rightarrow$SUrx, with SNR $\gamma_{ps}$.
We denote the corresponding outage probability as
$q_{ps}^{(\mathrm{I})}(R_{p})=\mathrm{Pr}(R_p>C(\gamma_{ps}))$.

\begin{figure}[t]
\centering
\includegraphics[width=\linewidth,trim = 30mm 0mm 30mm 0mm,clip=true]{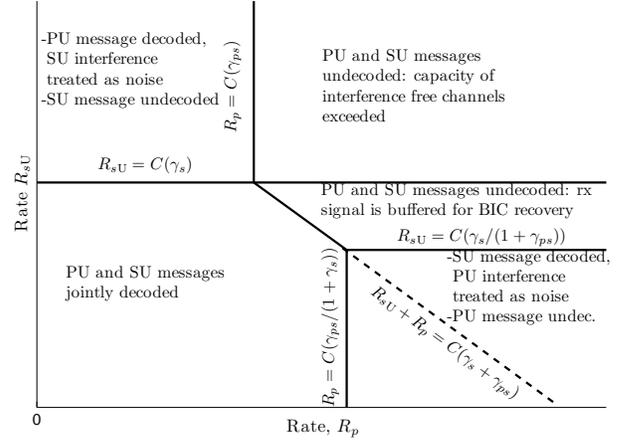}
\caption{Decodability regions for PU message (rate $R_p$) and SU message (rate $R_{s\U }$) at SUrx, for a fixed SNR pair $(\gamma_s,\gamma_{ps})$}
\label{fig:MACfig}
\end{figure}

If the SU accesses the channel, SU transmissions are performed with rate $R_{s\U }$ (bits/s/Hz)
and are interfered by the PU.
 SUrx thus attempts to decode both
the SU and PU messages;
moreover, if the decoding of the SU message fails due to severe interference from the PU,
the received signal is buffered for future BIC recovery.
Using standard information-theoretic results~\cite{Cover},
with the help of Fig.~\ref{fig:MACfig},
we define the following SNR regions associated with the decodability of the 
SU and PU messages at SUrx,
where $\mathcal A^c$ denotes the complementary set of $\mathcal A$:\footnote{Herein, we assume optimal joint decoding techniques of 
the SU and PU messages. Using other techniques, \emph{e.g.}, successive IC,
the SNR regions may change accordingly, without providing any further insights in the
following analysis.}
\begin{align}
 \label{r1}
& \Gamma_{\mathrm{p}}(R_{s\U },R_p)\triangleq 
\left\{\vphantom{\frac{x}{x}}
\left(\gamma_s,\gamma_{ps}\right):
 R_{s\U }\leq C\left(\gamma_{s}\right),
 R_p\leq C\left(\gamma_{ps}\right),\right.\nonumber\\&
\qquad\qquad\qquad\qquad
\left.\vphantom{\frac{x}{x}}
 R_{s\U }+R_p\leq C\left(\gamma_{s}+\gamma_{ps}\right)
\right\}
\\\label{r2}&
\quad
\bigcup
\left\{\left(\gamma_s,\gamma_{ps}\right):
 R_{s\U }>C\left(\gamma_{s}\right),
 R_p\leq C\left(\frac{\gamma_{ps}}{1+\gamma_{s}}\right)\right\}
\end{align}
\begin{align}
\label{r3}
&  \Gamma_{\mathrm{s}}(R_{s\U },R_p)\triangleq 
\left\{\vphantom{\frac{x}{x}}
\left(\gamma_s,\gamma_{ps}\right):
 R_{s\U }\leq C\left(\gamma_{s}\right),
 R_p\leq C\left(\gamma_{ps}\right),\right.\nonumber\\&
\qquad\qquad\qquad\qquad
\left.\vphantom{\frac{x}{x}}
 R_{s\U }+R_p\leq C\left(\gamma_{s}+\gamma_{ps}\right)
\right\}\\\label{r4}&
\quad\bigcup
\left\{\left(\gamma_s,\gamma_{ps}\right):
 R_p>C\left(\gamma_{ps}\right),
 R_{s\U }\leq C\left(\frac{\gamma_{ps}}{1+\gamma_{s}}\right)\right\}
\\\label{r5}
&\Gamma_{\mathrm{buf}}(R_{s\U },R_p)\triangleq 
\left\{\vphantom{\frac{x}{x}}\Gamma_{\mathrm{p}}(R_{s\U },R_p)\cup \Gamma_{\mathrm{s}}(R_{s\U },R_p)\right\}^c
\\&\nonumber\qquad
\bigcap
\left\{\vphantom{\frac{x}{x}}\left(\gamma_s,\gamma_{ps}\right):
R_{s\U }\leq C\left(\gamma_{s}\right)
\right\}.
\end{align}
The SNR regions~(\ref{r1}) and~(\ref{r3}) guarantee that the two rates $R_p$ and $R_{s\U }$
are within the multiple access channel region formed by the two transmitters (PUtx and SUtx)
and SUrx~\cite{Cover}, so that both the SU and PU messages are correctly decoded via joint
decoding techniques.
On the other hand, in the SNR region~(\ref{r4}) (respectively,~(\ref{r2})), 
only the SU (PU) message is successfully decoded
at SUrx by treating the
interference from the PU (SU) as background noise.
If the SNR pair falls outside the two regions~(\ref{r3}) and~(\ref{r4})
 (respectively,~(\ref{r1}) and~(\ref{r2})), then SUrx incurs a failure in decoding
the SU (PU) message.
Therefore,
when $(\gamma_s,\gamma_{ps})\in\Gamma_{\mathrm{s}}(R_{s\U },R_p)$,
 SUrx successfully decodes the SU message.
The corresponding expected SU throughput is thus given by
\begin{align}
T_{s\U }(R_{s\U },R_{p})\triangleq R_{s\U }\mathrm{Pr}\left((\gamma_s,\gamma_{ps})\in
\Gamma_{\mathrm{s}}(R_{s\U },R_p)\right).
 \end{align}
Similarly, 
when $(\gamma_s,\gamma_{ps})\in\Gamma_{\mathrm{p}}(R_{s\U },R_p)$,
 SUrx successfully decodes the PU message. 
We denote the corresponding outage probability
as $q_{ps}^{(\mathrm{A})}(R_{s\U },R_{p})\triangleq\mathrm{Pr}
\left((\gamma_s,\gamma_{ps})\notin\Gamma_{\mathrm{p}}(R_{s\U },R_p)\right)$.
Note that
$q_{ps}^{(\mathrm{A})}(R_{s\U },R_{p})>q_{ps}^{(\mathrm{I})}(R_{p})$,
since SU transmissions interfere with the decoding of the PU message.

Finally, in~(\ref{r5}), 
 the decoding of both the SU and PU messages fails,
since the SNR pair $(\gamma_s,\gamma_{ps})$ falls outside both regions
$\Gamma_{\mathrm{p}}(R_{s\U },R_p)$ and $\Gamma_{\mathrm{s}}(R_{s\U },R_p)$.
 However, the rate $R_{s\U }$ is within the capacity region of the interference
free channel ($R_{s\U }\leq C\left(\gamma_{s}\right)$),
so that
the SU message can be recovered via BIC, should the PU message become available in a future ARQ 
retransmission attempt.
The received signal is thus buffered at SUrx. 
 We denote the \emph{buffering probability} as
\begin{align}\label{omega}
p_{s,\mathrm{buf}}(R_{s\U },R_{p})&\triangleq
\mathrm{Pr}\left(\left(\gamma_s,\gamma_{ps}\right)\in\Gamma_{\mathrm{buf}}(R_{s\U },R_p)\right)
\nonumber\\&=
\mathrm{Pr}\left((\gamma_s,\gamma_{ps})\in
\Gamma_{\mathrm{s}}(R_{s\U },0)\right)
\\&\nonumber\quad
-\mathrm{Pr}\left((\gamma_s,\gamma_{ps})\in
\Gamma_{\mathrm{s}}(R_{s\U },R_p)\right)>0,
\end{align}
where the second equality follows from inspection of Fig.~\ref{fig:MACfig}.
\subsubsection{PU message known to SUrx ($\Phi=\K $)}
When $\Phi=\K $, SUrx performs FIC on the received signal,
thus enabling interference free SU transmissions.
The SU transmits with rate $R_{s\K }$, and
the accrued throughput is given by
$T_{s\K }(R_{s\K })=R_{s\K }\mathrm{Pr}\left(R_{s\K }<C(\gamma_{s})\right)$.

We now provide an example to illustrate the use of FIC/BIC at SUrx.
\begin{example}
 Consider a sequence of $3$ primary retransmission attempts
 in which the SU always accesses the channel.
Initially, the PU message is unknown to SUrx, hence the PU message knowledge state is set to
$\Phi=\U $ in the first time-slot,
 and the SU transmits with rate $R_{s\U }$.
Assume that the SNR pair
 $(\gamma_s(1),\gamma_{ps}(1))$ falls in $\Gamma_{\mathrm{buf}}(R_{s\U },R_p)$.
 Then, neither the SU nor the PU messages are successfully decoded by SUrx,
but the received signal is buffered for future BIC recovery.
In the second time-slot, $(\gamma_s(2),\gamma_{ps}(2))\in\Gamma_{\mathrm{s}}(R_{s\U },R_p)\cap\Gamma_{\mathrm{p}}(R_{s\U },R_p)$,
hence both the SU and PU messages are correctly decoded by SUrx,
and the PU message knowledge state switches to $\Phi=\K $. 
At this point, SUrx performs BIC on the previously buffered received signal to recover
the corresponding SU message.
In the third time-slot, SUtx transmits with rate $R_{s\K }$,
 and decoding at SUrx takes place after cancellation of the interference from the PU via FIC.\qed
\end{example}

We now briefly elaborate on the choice of the transmission rate
 $R_{s\K }$.
 Since its value does
not affect the outage performance at PUrx~(\ref{PUoutage2}) and
the evolution of the ARQ process,
$R_{s\K }$ is chosen so as to maximize $T_{s\K }(R_{s\K })$.
 Therefore, from~(\ref{omega}) we obtain
\begin{align}\label{tau}
 T_{s\K }(R_{s\K })\geq &
T_{s\K }(R_{s\U })=
T_{s\U }(R_{s\U },R_{p})
\nonumber\\&
+p_{s,\mathrm{buf}}(R_{s\U },R_{p})R_{s\U }>T_{s\U }(R_{s\U },R_{p}).
\end{align}
Conversely, the choice of the rate $R_{s\U }$ is not as straightforward, 
since its value reflects a trade-off between 
 the potentially larger throughput accrued with a larger rate $R_{s\U }$
 and the corresponding diminished capabilities for IC caused
 by the more difficult decoding of the PU message by SUrx.

In the following treatment, the rates $R_{s\K }$, $R_{s\U }$ and $R_p$
are assumed to be fixed parameters of the system, and they are not considered part of 
the optimization  (see Sec.~\ref{sec:numres} for further elaboration in this regard).
For the sake of notational convenience, we omit
the dependence of the quantities defined above
 on them.
Moreover, 
for clarity, we consider the case $B=D-1$
in which SUrx can buffer up to $D-1$ received signals. 
However, the following analysis can be extended to a generic value of $B$.
\section{Policy Definition and Optimization Problem}
\label{sec:performance}
We model the evolution of the network as a Markov Decision Process~\cite{Bertsekas,DJWhite}.
Namely,
we denote the state of the PU-SU system by the tuple $(t,b,\Phi)$,
where $t\in\mathbb N(1,D)$ is the primary ARQ state,
$b\in\mathbb N(0,B)$ is the SU buffer state and
 $\Phi\in\{\U ,\K \}$ is the PU message knowledge state.
$(t,b,\Phi)$
takes values in the state space $\mathcal S\equiv\mathcal S_{\U }\cup\mathcal S_{\K }$,
where
$\mathcal S_{\K }\equiv\{(t,0,\K ):t\in\mathbb N(2,D)\}$
and
$\mathcal S_{\U }\equiv\{(t,b,\U ):t\in\mathbb N(1,D),b\in\mathbb N(0,t-1)\}$
are the sets of states where the PU message is known and unknown to SUrx, respectively.

The SU follows a \emph{stationary randomized access policy} $\mu\in\mathcal U\equiv\left\{\mu:\mathcal{S}\mapsto[0,1]\right\}$,
which determines the secondary access probability for each state $\mathbf s\in\mathcal S$.
Note that, from~\cite{Ross1989}, this choice is without loss of optimality for the specific problem at hand.
Namely, in state $(t,b,\Phi)\in\mathcal S$, the SU is "active", \emph{i.e.}, it accesses the channel, with probability
$\mu(t,b,\Phi)$ and stays "idle" with probability $1-\mu(t,b,\Phi)$.
We denote the "active" and "idle" actions as $\mathrm{A}$ and $\mathrm{I}$, respectively.

With these definitions at hand,
we define the following average long-term metrics under $\mu$:
the SU throughput $\bar T_s(\mu)$, the SU power expenditure $\bar P_s(\mu)$
and the PU throughput $\bar T_p(\mu)$, given by
\begin{align}
\bar T_s(\mu)=&\lim_{N\to+\infty}\frac{1}{N}\mathbb E\left[\left.\sum_{n=0}^{N-1}
R_{s\Phi_n}\mathbf 1\left(\{Q_n=\mathrm{A}\}\cap O_{s,n}^c\right)\right|\mathbf s_0
\right]\nonumber\\\label{Ts0}
&+\lim_{N\to+\infty}\frac{1}{N}\mathbb E\left[\left.\sum_{n=0}^{N-1}
R_{s\U}B_n\mathbf 1(O_{ps,n}^c)\right|\mathbf s_0
\right],\\
\label{Ps0}
\bar P_s(\mu)=&P_s
\lim_{N\to+\infty}\frac{1}{N}\mathbb E\left[\left.\sum_{n=0}^{N-1}\mathbf 1\left(\{Q_n=\mathrm{A}\}\right)
\right|\mathbf s_0
\right]
,\\
\label{Tp0}\bar T_p(\mu)=&
\lim_{N\to+\infty}\frac{1}{N}\mathbb E\left[\left.\sum_{n=0}^{N-1}
R_p\mathbf 1\left(O_{p,n}^c\right)\right|\mathbf s_0\right],
\end{align}
where $n$ is the time-slot index,
$\mathbf s_0\in\mathcal S$ is the initial state in time-slot $0$;
 $\Phi_n\in\{\K ,\U \}$ is the  PU message knowledge state and
$B_n$ is the SU buffer state in time-slot $n$;
$Q_n\in\{\mathrm{A},\mathrm{I}\}$ is the action of the SU, drawn according to the access policy
$\mu$; $O_{s,n}$ and $O_{ps,n}$ denote the outage events at SUrx for the decoding of the SU and PU messages, so that
$O_{s,n}^c$ and $O_{ps,n}^c$ denote successful decoding
of the SU and PU messages by SUrx, respectively;
$O_{p,n}$ denotes the outage event at PUrx, so that $O_{p,n}^c$ denotes successful decoding
of the PU message by PUrx;
and $\mathbf 1(E)$ is the indicator function of the event $E$.
 Note that all the quantities defined above are independent of the initial state 
$\mathbf s_0$. In fact,
 starting from any $\mathbf s_0\in\mathcal S$, the system reaches
with probability $1$
the positive recurrent state $(1,0,\U )$ (new PU transmission)
 within a finite number of time-slots,
due to the ARQ deadline. Due to the Markov property, from this state on,
 the evolution of the process is independent of the initial transient behavior,
which has no effect on the time averages defined 
in~(\ref{Ts0}),~(\ref{Ps0}) and~(\ref{Tp0}).

In this work, we study the problem
 of maximizing the average long-term SU throughput subject to constraints on the average long-term PU throughput loss
 and SU power.
 Specifically,
\begin{align}\label{opt}
\mu^{*}=
\arg\max_{\mu}\bar{T}_{s}(\mu)
 \mathrm{\ s.t.\ }
&\bar{T}_{p}(\mu)\geq T_{p}^{(\mathrm{I})}(1-\epsilon_{\mathrm{PU}}),
\nonumber
\\&
\bar{P}_{s}(\mu)\leq\mathcal{P}_{s}^{(\mathrm{th})},
\end{align}
 where $\epsilon_{\mathrm{PU}}\in[0,1]$
and $\mathcal{P}_{s}^{(\mathrm{th})}\in[0,P_{s}]$ represent the (normalized)
maximum tolerated  PU throughput loss 
 with respect to the case in which the SU is idle
and the SU power constraint, respectively.
This problem entails a trade-off in the operation of the SU.
On the one hand, the SU is incentivized to transmit in order to 
increase its throughput and
 to optimize the buffer occupancy at SUrx (\emph{i.e.}, failed SU transmissions
which are potentially recovered via BIC).
On the other hand, SU transmissions might
jeopardize the correct decoding of the PU message at SUrx, thus impairing the use of FIC/BIC,
and might violate the constraints in~(\ref{opt}).

Under $\mu\in\mathcal{U}$, 
the state process is a stationary Markov chain, with steady state distribution
 $\pi_\mu$~\cite{Kemeny1960,DJWhite}.
$\pi_\mu(\mathbf s),\mathbf s\in\mathcal S$, is the long-term fraction of the time-slots spent
in state $\mathbf s$, \emph{i.e.},
$
 \pi_\mu(\mathbf s)=\underset{N\to+\infty}{\lim}\frac{1}{N}\sum_{n=0}^{N-1}\mathrm{Pr}_\mu^{(n)}\left(\mathbf s|\mathbf s_0\right)
 $,
 where $\mathrm{Pr}_\mu^{(n)}\left(\mathbf s|\mathbf s_0\right)$ is the $n$-step transition probability of the chain
from state $\mathbf s_0$.\footnote{Similarly to
(\ref{Ts0}),~(\ref{Ps0}) and~(\ref{Tp0}), $\pi_\mu(\mathbf s)$
is independent of the initial state $\mathbf s_0$, due to the recurrence of 
state $(1,0,\U )$.}
In state $(t,b,\U )$, the SU accesses the channel with probability
$\mu\left(t,b,\U \right)$, thus accruing
 the throughput $\mu\left(t,b,\U \right)T_{s\U }$.
Moreover, if SUrx successfully decodes the PU message 
(with probability $
1-q_{ps}^{(\mathrm{I})}
-\mu(t,b,\U )(q_{ps}^{(\mathrm{A})}-q_{ps}^{(\mathrm{I})})$),
$bR_{s\U }$ bits are recovered by performing BIC on the buffered received signals,
 yielding an additional BIC throughput.
Similarly, in state $(t,0,\K )$, the SU accrues the throughput $\mu\left(t,0,\K \right)T_{s\K }$.
Then, we can rewrite~(\ref{Ts0}) and~(\ref{Ps0}) in terms of the
steady state distribution and of the cost/reward in each state as
\begin{align}\label{Ts}
\!\!\!\!\bar T_s(\mu)\!=\!T_{s\U }\bar W_s(\mu)\!+\!\bar F_s(\mu)\!+\!\bar B_s(\mu),\ 
\bar P_s(\mu)\!=\!P_s\bar W_s(\mu),
\end{align}
where the \emph{SU access rate} $\bar W_s(\mu)$, 
\emph{i.e.}, the average long-term number of secondary channel accesses per time-slot,
the \emph{FIC throughput} $\bar F_s(\mu)$ and the \emph{BIC throughput} $\bar B_s(\mu)$
are defined as
\begin{align}\label{Ws}
\begin{array}{rl}
\bar W_s(\mu)\triangleq&\!\!\!\sum_{\mathbf s\in\mathcal S}\pi_\mu\left(\mathbf s\right)\mu\left(\mathbf s\right),\\
\bar F_s(\mu)\triangleq&\!\!\!\sum_{t=2}^D\pi_\mu\left(t,0,\K \right)
\mu\left(t,0,\K \right)(T_{s\K }-T_{s\U }),\\
\bar B_s(\mu)\triangleq&\!\!\!
\sum_{t=1}^D\sum_{b=0}^{t-1}\pi_\mu\left(t,b,\U \right)bR_{s\U }
\\&\times
\left[
1-q_{ps}^{(\mathrm{I})}
-\mu\left(t,b,\U \right)\left(q_{ps}^{(\mathrm{A})}-q_{ps}^{(\mathrm{I})}\right)\right].
\end{array}
\end{align}
In~(\ref{Ts}), $T_{s\U }\bar W_s(\mu)$ is the SU throughput
 attained without FIC/BIC, while
 the terms $\bar F_s(\mu)$ and $\bar B_s(\mu)$ account for the throughput gains 
of FIC and BIC, respectively.
Conversely, the PU accrues the throughput $T_{p}^{(\mathrm{I})}$
if the SU is idle and $T_{p}^{(\mathrm{A})}$ if the SU accesses the channel,
so that~(\ref{Tp0}) is given by
\begin{align}\label{Tp}
\bar T_p(\mu)=T_{p}^{(\mathrm{I})}-(T_{p}^{(\mathrm{I})}-T_{p}^{(\mathrm{A})})\bar W_s(\mu).
\end{align}
The quantity $(T_{p}^{(\mathrm{I})}-T_{p}^{(\mathrm{A})})\bar W_s(\mu)$ is referred to
as the \emph{PU throughput loss} induced by the secondary access policy $\mu$~\cite{IT_ARQ}. 
The following result follows directly from~(\ref{opt}),~(\ref{Ts}) and~(\ref{Tp}).
\begin{lemma}
The problem~(\ref{opt}) is equivalent to
\begin{align}
\label{refopt2}
\mu^{*}&=
 {\arg\max}_{\mu\in\mathcal{U}}\bar{T}_s(\mu)
\\&\nonumber
\mathrm{s.t.\  }\bar{W}_s(\mu)
\leq\min\left\{ \frac{(1-q_{pp}^{(\mathrm{I})})
\epsilon_{\mathrm{PU}}}{q_{pp}^{(\mathrm{A})}-q_{pp}^{(\mathrm{I})}},
\frac{\mathcal{P}_{s}^{(\mathrm{th})}}{P_{s}}\right\} \triangleq\epsilon_{\mathrm{W}}.
\end{align}
\qed
\end{lemma}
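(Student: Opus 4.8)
The plan is to show that the two problems share the same objective and the same feasible set, from which it follows that their maximizers coincide. Since the objective $\bar T_s(\mu)$ is identical in~(\ref{opt}) and~(\ref{refopt2}), it suffices to prove that the pair of constraints in~(\ref{opt}) is jointly equivalent to the single scalar constraint $\bar W_s(\mu)\le\epsilon_{\mathrm W}$.

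First I would handle the PU throughput constraint. Substituting the expression~(\ref{Tp}) for $\bar T_p(\mu)$ into $\bar T_p(\mu)\ge T_p^{(\mathrm I)}(1-\epsilon_{\mathrm{PU}})$ and cancelling the common term $T_p^{(\mathrm I)}$ from both sides reduces it to $(T_p^{(\mathrm I)}-T_p^{(\mathrm A)})\bar W_s(\mu)\le T_p^{(\mathrm I)}\epsilon_{\mathrm{PU}}$. Next I would insert the definitions $T_p^{(\mathrm I)}=R_p(1-q_{pp}^{(\mathrm I)})$ and $T_p^{(\mathrm A)}=R_p(1-q_{pp}^{(\mathrm A)})$, so that the leading factor becomes $T_p^{(\mathrm I)}-T_p^{(\mathrm A)}=R_p(q_{pp}^{(\mathrm A)}-q_{pp}^{(\mathrm I)})$ while the right-hand side becomes $R_p(1-q_{pp}^{(\mathrm I)})\epsilon_{\mathrm{PU}}$. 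Dividing through by the positive quantity $R_p(q_{pp}^{(\mathrm A)}-q_{pp}^{(\mathrm I)})$ then yields exactly the first argument of the minimum defining $\epsilon_{\mathrm W}$ in~(\ref{refopt2}).

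Second, the SU power constraint is immediate: substituting $\bar P_s(\mu)=P_s\bar W_s(\mu)$ from~(\ref{Ts}) into $\bar P_s(\mu)\le\mathcal P_s^{(\mathrm{th})}$ and dividing by $P_s$ gives $\bar W_s(\mu)\le\mathcal P_s^{(\mathrm{th})}/P_s$, the second argument of the minimum. Since both constraints are now upper bounds on the same scalar $\bar W_s(\mu)$, they hold simultaneously if and only if $\bar W_s(\mu)$ does not exceed the smaller of the two bounds, i.e.\ $\bar W_s(\mu)\le\epsilon_{\mathrm W}$. This shows the feasible sets of~(\ref{opt}) and~(\ref{refopt2}) coincide, and hence that the two problems have the same optimizer $\mu^*$.

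The only step requiring care is the division used to pass to the first bound: preserving the direction of the inequality requires $q_{pp}^{(\mathrm A)}>q_{pp}^{(\mathrm I)}$, which holds whenever $\gamma_{sp}$ is not identically zero, as noted after~(\ref{PUoutage2}). In the degenerate case $\gamma_{sp}\equiv0$ one has $T_p^{(\mathrm I)}=T_p^{(\mathrm A)}$, the PU constraint becomes vacuous, and the first argument of the minimum diverges, so that $\epsilon_{\mathrm W}$ is determined by the power constraint alone; this is fully consistent with the stated expression. Aside from this sign bookkeeping, the argument is a direct algebraic manipulation, so I anticipate no substantive obstacle.
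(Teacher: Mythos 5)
Your proof is correct and follows exactly the route the paper intends: the lemma is stated there as following ``directly'' from~(\ref{opt}),~(\ref{Ts}) and~(\ref{Tp}), and your argument is precisely that direct substitution of~(\ref{Tp}) and $\bar P_s(\mu)=P_s\bar W_s(\mu)$ followed by division by $R_p\bigl(q_{pp}^{(\mathrm A)}-q_{pp}^{(\mathrm I)}\bigr)$ and combination of the two upper bounds into the minimum $\epsilon_{\mathrm W}$. Your added remark on the degenerate case $\gamma_{sp}\equiv 0$, where the PU constraint is vacuous and $\epsilon_{\mathrm W}$ is set by the power constraint alone, is a careful touch the paper leaves implicit.
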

In the next section, we characterize the solution of~(\ref{refopt2}).
We will need the following definition.
\begin{definition}\label{threshold}
Let $\mu$ be the policy such that
secondary access takes place if and only if the PU message is known to SUrx,
\emph{i.e.},
 $\mu(\mathbf s)=1,\ \forall\mathbf s\in\mathcal S_{\K }$,
$\mu(\mathbf s)=0,\ \forall\mathbf s\in\mathcal S_{\U }$.
We denote the SU access rate achieved by such policy as
 $\epsilon_{\mathrm{th}}=\bar W(\mu)$.
The system is in the \emph{low SU access rate regime}
if $\epsilon_{\mathrm{W}}\leq \epsilon_{\mathrm{th}}$ in~(\ref{refopt2}).
Otherwise, the system is in the \emph{high SU access rate regime}.
\qed
\end{definition}
\section{Optimal Policy}\label{sec:SU_tx_policy}
In this section, we
characterize in closed form the optimal policy
in the low SU access rate regime, and we
 present an algorithm to derive the optimal policy
in the high SU access rate regime.

\subsection{Low SU Access Rate Regime}\label{lsar}
The next lemma shows that, in the low SU access rate regime,
 an optimal policy prescribes that secondary access only takes place in the 
states where the PU message is known to SUrx,
 with an equal probability in all such states.
 It follows that only FIC, and not BIC, is needed in this regime to attain
 optimal performance.
\begin{lemma}
\label{optpollow}
In the low SU access rate regime $\epsilon_{\mathrm{W}}\leq \epsilon_{\mathrm{th}}$,
 an optimal policy is given by\footnote{The optimal policy in the low SU access rate
is not unique.
In fact, any policy $\mu$ such that $\mu(\mathbf s)=0,\ \forall\mathbf s\in\mathcal S_{\U}$
and $\bar W_s(\mu)=\epsilon_{\mathrm{th}}$ is optimal, attaining
the same throughput $\bar T_s(\mu)=T_{s\K}\epsilon_{\mathrm{th}}$ as (\ref{optimalforlow}).}
  \begin{align}\label{optimalforlow}
 \mu^{*}(\mathbf s)=\frac{\epsilon_{\mathrm{W}}}{\epsilon_{\mathrm{th}}},\ \forall\mathbf s\in\mathcal S_{\K },
\ \mu^{*}(\mathbf s)=0,\ \forall\mathbf s\in\mathcal S_{\U }.
  \end{align}
Moreover,
$\bar T_s(\mu^*)=T_{s\K }\epsilon_{\mathrm{W}}$,
 $\bar P_s(\mu^*)=P_s\epsilon_{\mathrm{W}}$,
and\\
$\bar T_p(\mu^*)
=T_{p}^{(\mathrm{I})}
-(T_{p}^{(\mathrm{I})}-T_{p}^{(\mathrm{A})})\epsilon_{\mathrm{W}}$.
\qed
\end{lemma}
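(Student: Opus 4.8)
The plan is to prove the lemma by matching an upper bound with an achievability construction, after rewriting the objective so that the known and unknown states are cleanly separated. First I would split the SU access rate by knowledge state, writing $\bar W_s(\mu)=W_{\K}(\mu)+W_{\U}(\mu)$ with $W_{\K}(\mu)\triangleq\sum_{t=2}^{D}\pi_\mu(t,0,\K)\mu(t,0,\K)$ and $W_{\U}(\mu)\triangleq\sum_{t=1}^{D}\sum_{b=0}^{t-1}\pi_\mu(t,b,\U)\mu(t,b,\U)$. Inserting the definition of $\bar F_s$ from~(\ref{Ws}) into~(\ref{Ts}) and using that $T_{s\K}-T_{s\U}$ is constant, the FIC term merges with the known-state part of $T_{s\U}\bar W_s(\mu)$, giving the decomposition $\bar T_s(\mu)=T_{s\K}W_{\K}(\mu)+T_{s\U}W_{\U}(\mu)+\bar B_s(\mu)$. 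Since $T_{s\K}\bar W_s(\mu)-\bar T_s(\mu)=(T_{s\K}-T_{s\U})W_{\U}(\mu)-\bar B_s(\mu)$, the whole upper bound reduces to the single inequality $\bar B_s(\mu)\le(T_{s\K}-T_{s\U})W_{\U}(\mu)$; granting it, $\bar T_s(\mu)\le T_{s\K}\bar W_s(\mu)$ for every $\mu\in\mathcal U$, and the feasibility constraint of~(\ref{refopt2}) then bounds the optimum as $\bar T_s(\mu)\le T_{s\K}\bar W_s(\mu)\le T_{s\K}\epsilon_{\mathrm W}$.

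The crux, and the step I expect to be hardest, is the inequality $\bar B_s(\mu)\le(T_{s\K}-T_{s\U})W_{\U}(\mu)$, which I would obtain from a conservation argument on the buffer occupancy over primary ARQ cycles. A signal enters the buffer only through an SU transmission in an unknown state whose SNR pair falls in $\Gamma_{\mathrm{buf}}$, so the per-slot buffering rate equals $p_{s,\mathrm{buf}}W_{\U}(\mu)$; conversely, BIC recovers the $b$ stored signals exactly when the PU message is decoded in a state $(t,b,\U)$, so by~(\ref{Ws}) the per-slot recovery rate equals $\bar B_s(\mu)/R_{s\U}$. Regenerating the chain at the recurrent state $(1,0,\U)$, within each cycle the buffer starts empty, is filled by buffering events, and is cleared exactly once---either by a decoding event (all stored signals recovered) or at the cycle's end (all stored signals discarded). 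Hence, pathwise, the number of recovered signals per cycle never exceeds the number buffered per cycle, and renewal--reward gives recovery rate $\le$ buffering rate, i.e. $\bar B_s(\mu)/R_{s\U}\le p_{s,\mathrm{buf}}W_{\U}(\mu)$. Combining with the already-derived relation~(\ref{tau}), rewritten as $T_{s\K}-T_{s\U}\ge p_{s,\mathrm{buf}}R_{s\U}$, yields the claim. The delicate points are to justify that buffering and recovery never coincide in a slot (which holds because $\Gamma_{\mathrm{buf}}$ is disjoint from $\Gamma_{\mathrm{p}}$) and to set up the renewal--reward accounting so that the finite buffer's zero long-term drift is used correctly.

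It remains to show $\mu^*$ attains the bound. Because $\mu^*$ idles on $\mathcal S_{\U}$, no signal is ever buffered, so $W_{\U}(\mu^*)=0$ and $\bar B_s(\mu^*)=0$; thus $\bar T_s(\mu^*)=T_{s\K}\bar W_s(\mu^*)$, with equality in the first inequality. To push the access rate to the constraint boundary, I would note that under any policy that idles on $\mathcal S_{\U}$ and accesses on $\mathcal S_{\K}$ with a common probability $p$, the reachable states all have $b=0$ and the stationary distribution they induce is unaffected by $p$; consequently $\bar W_s$ equals $p$ times the access rate of the full-access policy of Definition~\ref{threshold}, i.e. $\bar W_s=p\,\epsilon_{\mathrm{th}}$. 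Taking $p=\epsilon_{\mathrm W}/\epsilon_{\mathrm{th}}$, which lies in $[0,1]$ precisely because $\epsilon_{\mathrm W}\le\epsilon_{\mathrm{th}}$ in the low-access regime, gives $\bar W_s(\mu^*)=\epsilon_{\mathrm W}$ and hence $\bar T_s(\mu^*)=T_{s\K}\epsilon_{\mathrm W}$, matching the upper bound and establishing optimality. The remaining identities then follow by substitution into~(\ref{Ts}) and~(\ref{Tp}): $\bar P_s(\mu^*)=P_s\bar W_s(\mu^*)=P_s\epsilon_{\mathrm W}$ and $\bar T_p(\mu^*)=T_p^{(\I)}-(T_p^{(\I)}-T_p^{(\A)})\epsilon_{\mathrm W}$.
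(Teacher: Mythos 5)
Your upper-bound half is correct and takes a genuinely different, more explicit route than the paper. The paper dispatches the key inequality $\bar T_s(\mu)\leq T_{s\K}\bar W_s(\mu)$ in one sentence, as a genie argument: $T_{s\K}$ per access is what the SU would earn if SUrx knew the PU message \emph{a priori}. You instead substantiate exactly this inequality: splitting $\bar W_s=W_{\K}+W_{\U}$, absorbing $\bar F_s$ from~(\ref{Ws}) into the known-state term to get $\bar T_s=T_{s\K}W_{\K}+T_{s\U}W_{\U}+\bar B_s$, and then bounding the BIC term by a per-cycle conservation argument, $\bar B_s\leq p_{s,\mathrm{buf}}R_{s\U}W_{\U}\leq (T_{s\K}-T_{s\U})W_{\U}$, the last step by~(\ref{tau}). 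The accounting is sound: a signal can enter the buffer only upon a secondary access in $\mathcal S_{\U}$, with probability $p_{s,\mathrm{buf}}$; at most one recovery event occurs per regeneration cycle (once $\Phi=\K$ the chain never reverts to $\U$ before the cycle ends), so recovered signals cannot exceed buffered ones pathwise, and renewal--reward transfers this to rates. What this buys is a rigorous proof of the paper's genie bound, making explicit that it rests on $T_{s\K}\geq T_{s\U}+p_{s,\mathrm{buf}}R_{s\U}$, rather than an appeal to intuition.

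The achievability half, however, contains a genuine error. Your claim that under the family $\mu_p$ (idle on $\mathcal S_{\U}$, access with probability $p$ on $\mathcal S_{\K}$) ``the stationary distribution is unaffected by $p$,'' hence $\bar W_s(\mu_p)=p\,\epsilon_{\mathrm{th}}$, is false in general: a secondary access in a known state raises the PU outage from $q_{pp}^{(\I)}$ to $q_{pp}^{(\A)}$, so the transition probability from $(t,0,\K)$ to $(t+1,0,\K)$ is $q_{pp}^{(\I)}+p\bigl(q_{pp}^{(\A)}-q_{pp}^{(\I)}\bigr)$, and the occupancy of the known states grows with $p$. Concretely, for $D=3$, $q_{ps}^{(\I)}=0$, $q_{pp}^{(\I)}=1/2$, $q_{pp}^{(\A)}=1$, one computes $\bar W_s(\mu_p)=(0.75p+0.25p^2)/(1.75+0.25p)$, so $\epsilon_{\mathrm{th}}=1/2$ while $\bar W_s(\mu_{1/2})=7/30<1/4=\tfrac12\epsilon_{\mathrm{th}}$. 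Your invariance claim holds only when the chain is insensitive to the access probability in $\mathcal S_{\K}$, e.g., $D=2$ or the degenerate case $q_{pp}^{(\A)}=q_{pp}^{(\I)}$ of Sec.~\ref{special}. The repair is the one implicit in the lemma's own footnote: all that matters is a policy that idles on $\mathcal S_{\U}$ and meets the rate constraint with equality. Since $\bar W_s(\mu_p)$ is continuous in $p$ and strictly increasing (Lemma~\ref{increasingWs}), with $\bar W_s(\mu_0)=0$ and $\bar W_s(\mu_1)=\epsilon_{\mathrm{th}}\geq\epsilon_{\mathrm{W}}$, some $p^*\in[0,1]$ gives $\bar W_s(\mu_{p^*})=\epsilon_{\mathrm{W}}$, and then your own decomposition (with $W_{\U}=\bar B_s=0$) yields $\bar T_s(\mu_{p^*})=T_{s\K}\epsilon_{\mathrm{W}}$, matching the upper bound. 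In fairness, the paper's proof of achievability (``by substituting~(\ref{optimalforlow}) in~(\ref{Ts}),~(\ref{Ws})'') silently relies on the same occupancy invariance, so your argument reproduces rather than introduces the weakest step of the published proof --- but since you state the invariance explicitly, it must be flagged and replaced by the continuity argument above.
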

\begin{proof}
For any policy $\mu\in\mathcal U$ obeying the 
SU access rate
constraint $\bar W_s(\mu)\leq \epsilon_{\mathrm{W}}$,
we have
$\bar T_s(\mu)\leq\bar W_s(\mu)T_{s\K }\leq \epsilon_{\mathrm{W}}T_{s\K }$.
The first inequality holds since
$\bar W_s(\mu)T_{s\K }$
 is the long-term throughput
achievable when the PU message is known \emph{a priori} at SUrx, which is an upper bound
to the performance;
the second from the SU access rate constraint.
The upper bound
$\epsilon_{\mathrm{W}}T_{s\K }$
is achieved by policy~(\ref{optimalforlow}), as can be directly seen
by substituting~(\ref{optimalforlow}) in~(\ref{Ts}),~(\ref{Ws}).
\end{proof}
\begin{remark}
 Note that secondary accesses in states $\mathcal S_{\U }$,  where the PU message is unknown to SUrx,
would obtain a smaller throughput, namely
at most
 $T_{s\U }+p_{s,\mathrm{buf}} R_{s\U }\leq T_{s\K }$, where
$T_{s\U }$  is the "instantaneous" throughput and
$p_{s,\mathrm{buf}} R_{s\U }$ is the BIC throughput,
 \emph{possibly} recovered via BIC in a future ARQ retransmission.
Therefore, SU accesses in states $\mathcal S_{\K }$
are more "cost effective".
\qed
\end{remark}

\subsection{High SU Access Rate Regime}
In this section, we study the high SU access rate regime
 in which $\epsilon_{\mathrm{W}}>\epsilon_{\mathrm{th}}$,
 thus complementing the analysis above for the regime where $\epsilon_{\mathrm{W}}\leq\epsilon_{\mathrm{th}}$.
 It will be seen that, if $\epsilon_{\mathrm{W}}>\epsilon_{\mathrm{th}}$,
 unlike in the low SU access rate regime,
 the SU should generally access the channel also in 
states $\mathcal S_{\U }$ where the PU message is unknown to SUrx
in order to achieve the optimal performance.
 Therefore, both BIC and FIC are necessary to attain optimality.
In this section, we derive the optimal policy.
We first introduce some necessary definitions and notations.
\begin{definition}[Secondary access efficiency]
\label{definitioneta}
We define the \emph{secondary access efficiency} under policy $\mu\in\mathcal U$ in state $\mathbf s\in\mathcal S$ as
\begin{align}
\eta_\mu\left(\mathbf s\right)=
\frac{\frac{\mathrm d\bar T_s(\mu)}{\mathrm d\mu(\mathbf s)}}{\frac{\mathrm d\bar W_s(\mu)}{\mathrm d\mu(\mathbf s)}}. 
\end{align}
\qed
\end{definition}
The secondary access efficiency can be interpreted as follows.
If the secondary access probability is increased in state $\mathbf s\in\mathcal S$
by a small amount $\delta$,
then the PU throughput loss is increased by an amount equal to
$\delta (T_{p}^{(\mathrm{I})}-T_{p}^{(\mathrm{A})})\frac{\mathrm d\bar W_s(\mu)}{\mathrm d\mu(\mathbf s)}$
(from~(\ref{Tp})),
the SU power is increased by an amount equal to $\delta P_s\frac{\mathrm d\bar W_s(\mu)}{\mathrm d\mu(\mathbf s)}$
(from~(\ref{Ts})), and the SU throughput augments or diminishes 
 by an amount equal to $\delta\frac{\mathrm d\bar T_s(\mu)}{\mathrm d\mu(\mathbf s)}$ (depending on the sign of
the derivative).
Therefore, $\eta_\mu\left(\mathbf s\right)$
yields the rate of increase
(or decrease if $\eta_\mu\left(\mathbf s\right)<0$) of the SU throughput
 per unit increase of the SU access rate, as
induced by augmenting the secondary channel access probability in state $\mathbf s$.
Equivalently, it measures
 how efficiently the SU can access the channel in state $\mathbf s$,
in terms of maximizing the SU throughput gain while minimizing its negative impact
 on the PU throughput and on the SU power expenditure.

\begin{remark}
It is worth noting that the definition of $\eta_\mu\left(\mathbf s\right)$ given in
Def.~\ref{definitioneta} is not completely rigorous.
In fact, under a generic policy $\mu$, the Markov chain of the
PU-SU system may not be irreducible~\cite{Kemeny1960}, so that 
 state $\mathbf s$ may not be accessible, hence
 $\pi_\mu(\mathbf s)=0$ and $\frac{\mathrm d\bar T_s(\mu)}{\mathrm d\mu(\mathbf s)}
=
\frac{\mathrm d\bar W_s(\mu)}{\mathrm d\mu(\mathbf s)}=0$.
One example is the idle policy $\mu(\mathbf s)=0,\ \forall \mathbf s$:
since the SU is always idle, the buffer at SUrx is always empty,
hence states $(t,b,\U )$ with $b>0$ are never accessed.
To overcome this problem, a formal definition is given in App. \ref{rigeta}, 
by treating the Markov chain of the PU-SU system as the limit of an
 irreducible Markov chain.
$\eta_\mu\left(\mathbf s\right)$ is explicitly derived
in Lemma~\ref{computeeta} in App.~\ref{rigeta}.
\qed
\end{remark}

We denote the indicator function of state $\mathbf s$ as 
 $\delta_{\mathbf s}:\mathcal S\mapsto \{0,1\}$, with
 $\delta_{\mathbf s}(\mathbf{\mathbf s})=1$,
 $\delta_{\mathbf s}(\mathbf \sigma)=0,\ \forall\mathbf\sigma\neq\mathbf s$.
Moreover, we denote the policy at the $i$th  iteration of the algorithm as $\mu^{(i)}$.
 We are now ready to describe the algorithm that obtains an optimal policy
 in the high SU access rate regime.
 An intuitive explanation of the algorithm can be found below.
\begin{algo}[Derivation of the optimal policy]\label{algorithm}$ $\\
{\begin{enumerate}
 \item \textbf{Initialization}:
\begin{itemize}
\item Let $\mu^{(0)}$ be the policy
$\mu^{(0)}(\mathbf s)=0,\ \forall\  \mathbf s\in\mathcal S_{\U }$,\\
$\mu^{(0)}(\mathbf s)=1,\ \forall\  \mathbf s\in\mathcal S_{\K }$,
and $i=0$.
\item Let $\mathcal S_{\mathrm{idle}}^{(0)}\equiv\{\mathbf s\in\mathcal S:\mu^{(0)}(\mathbf s)=0\}\equiv\mathcal S_{\U }$ be the set of states where the SU is idle.
\end{itemize}
 \item \textbf{Stage $i$}:
\begin{enumerate}
\item Compute $\eta_{\mu^{(i)}}(\mathbf s)$, $\forall\ \mathbf s\in\mathcal S_{\mathrm{idle}}^{(i)}$
and let $\mathbf s^{(i)}\triangleq\arg\max_{\mathbf s\in\mathcal S_{idle}^{(i)}}\eta_{\mu^{(i)}}(\mathbf s)$.
\item If $\eta_{\mu^{(i)}}(\mathbf s^{(i)})\leq 0$, go to step 3). 
Otherwise, let $\mu^{(i+1)}=\mu^{(i)}+\delta_{\mathbf s^{(i)}}$,
$\mathcal S_{\mathrm{idle}}^{(i+1)}\!\equiv\!\mathcal S_{\mathrm{idle}}^{(i)}\setminus\left\{\mathbf s^{(i)}\right\}$.
\item Set $i:=i+1$. If $\mathcal S_{\mathrm{idle}}^{(i)}\equiv \emptyset$, go to step 3).
Otherwise, repeat from step 2).
\end{enumerate}
\item Let $N=i$,
the sequence of states $(\mathbf s^{(0)},\dots,\mathbf s^{(N-1)})$
and of policies $(\mu^{(0)},\dots,\mu^{(N-1)})$.
\item \textbf{Optimal policy}: given $\epsilon_{\mathrm{W}}$,
                      \begin{enumerate}
  \item If $\bar W_s(\mu^{(N-1)})\leq \epsilon_{\mathrm{W}}$, then
  $\mu^{*}=\mu^{(N-1)}$.
  \item Otherwise, $\mu^{*}=\lambda \mu^{(j)}+(1-\lambda) \mu^{(j+1)}$,
  where
$j\!\triangleq\!\max\left\{i\!:\!\bar W_s\left(\mu^{(i)}\right)\!\leq\!\epsilon_{\mathrm{W}}\right\}$
and $\lambda\in (0,1]$ uniquely solves
$\bar W_s(\lambda \mu^{(j)}+(1-\lambda) \mu^{(j+1)})=\epsilon_{\mathrm{W}}$.\qed
 \end{enumerate}
\end{enumerate}
}
\end{algo}
The algorithm, starting from the optimal policy for the case $\epsilon_{\mathrm{W}}=\epsilon_{\mathrm{th}}$
(Lemma~\ref{optpollow}),
 ranks the states in the set $\mathcal S_{\U }$ in decreasing order of
secondary access efficiency, and iteratively allocates the secondary access to the 
state with the highest efficiency, among the states where the SU is idle.
The rationale of this step is that
 secondary access in the most efficient state yields the steepest increase of the SU
throughput, per unit increase of the SU access rate or, equivalently, of the PU throughput loss and of the SU
power expenditure. 
 The optimality of Algorithm~\ref{algorithm}
is established in the following theorem.
\begin{thm}\label{algooptimal}
Algorithm~\ref{algorithm} returns an optimal policy for the optimization problem~(\ref{refopt2}).
\qed
\end{thm}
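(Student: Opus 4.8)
The plan is to show that the iterates produced by Algorithm~\ref{algorithm} trace the entire Pareto boundary of the achievable region $\mathcal R\triangleq\{(\bar W_s(\mu),\bar T_s(\mu)):\mu\in\mathcal U\}$, and that the final step returns the boundary point at abscissa $\epsilon_{\mathrm{W}}$. First I would record that $\mathcal R$ is convex. This is transparent in the occupation-measure formulation of the MDP: setting $x(\mathbf s,\A)=\pi_\mu(\mathbf s)\mu(\mathbf s)$ and $x(\mathbf s,\I)=\pi_\mu(\mathbf s)(1-\mu(\mathbf s))$, the flow-balance and normalization equations describe a polytope, each point of which is realized by some $\mu\in\mathcal U$ (this is the reduction to stationary randomized policies,~\cite{Ross1989}). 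Since $\bar W_s$ in~(\ref{Ws}) and each term of $\bar T_s$ in~(\ref{Ts}),~(\ref{Ws}) are linear in $x$, the image $\mathcal R$ is a convex polygon, and $g(\epsilon_{\mathrm{W}})\triangleq\max\{\bar T_s:\bar W_s\le\epsilon_{\mathrm{W}}\}$ is concave. Hence the optimizer of~(\ref{refopt2}) lies on the upper boundary of $\mathcal R$ at $\bar W_s=\epsilon_{\mathrm{W}}$, and it suffices to prove that the algorithm reaches this point.

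The structural engine is a \emph{straight-line lemma}: with all components of $\mu$ but $\mu(\mathbf s)$ frozen, the map $p=\mu(\mathbf s)\in[0,1]\mapsto(\bar W_s,\bar T_s)$ is an affine segment of slope $\eta_\mu(\mathbf s)$ (Def.~\ref{definitioneta}). I would prove this from the regenerative, layered structure of the chain: between consecutive visits to the recurrent state $(1,0,\U)$ the ARQ counter $t$ only increases, so each state is visited at most once per cycle, and the expected per-cycle visit counts $V(\sigma)$ are independent of $\mu(\mathbf s)$ for $\sigma$ upstream of $\mathbf s$ and affine in $p$ downstream (varying $p$ reweights only the transitions out of $\mathbf s$, which then propagate through fixed downstream transition probabilities). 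Consequently the per-cycle access count $\sum_\sigma V(\sigma)\mu(\sigma)$, the per-cycle SU reward (the BIC term in~(\ref{Ws}) is affine in $\mu(\mathbf s)$ by construction), and the cycle length $L$ are all affine in $p$. Writing $\bar W_s(p)=W(p)/L(p)$ and $\bar T_s(p)=T(p)/L(p)$ with $W,T,L$ affine and a common denominator $L$, the vectors $(W(p),T(p),L(p))=\mathbf a+p\,\mathbf b$ lie in a two-dimensional subspace, so the points $(\bar W_s(p),\bar T_s(p))$ are collinear; the common slope equals $\eta_\mu(\mathbf s)$ by definition, which is therefore constant along the variation.

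These two facts reduce the theorem to a monotonicity statement. Each iteration flips a single component from $0$ to $1$, so by the straight-line lemma the iterates $\mu^{(0)},\mu^{(1)},\dots$ trace a piecewise-linear path of strictly increasing $\bar W_s$ whose $i$th segment has slope $\eta_{\mu^{(i)}}(\mathbf s^{(i)})=\max_{\mathbf s\in\mathcal S_{\mathrm{idle}}^{(i)}}\eta_{\mu^{(i)}}(\mathbf s)\triangleq\lambda_i$. Suppose the $\lambda_i$ are nonincreasing and, at each $\mu^{(i)}$, every active state has efficiency $\ge\lambda_i$ while every idle state has efficiency $\le\lambda_i$. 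Then, since along each coordinate $\bar T_s-\lambda_i\bar W_s$ is monotone with the sign of $\eta_\mu(\mathbf s)-\lambda_i$ (straight-line lemma), no single-state deviation improves $\bar T_s-\lambda_i\bar W_s$; for an average-reward MDP this single-state (Bellman) optimality is sufficient for global optimality~\cite{Bertsekas,DJWhite}, so $\mu^{(i)}$ maximizes $T-\lambda_i w$ over $\mathcal R$ and lies on the supporting line of slope $\lambda_i$, i.e., on the Pareto boundary, with the $\lambda_i$ decreasing consistently with concavity of $g$. The interpolation in Step~4(b) slides, again by the straight-line lemma, along the $j$th segment to the unique mixing weight with $\bar W_s=\epsilon_{\mathrm{W}}$, returning $g(\epsilon_{\mathrm{W}})$; Step~4(a) covers $\epsilon_{\mathrm{W}}\ge\bar W_s(\mu^{(N-1)})$, where the stopping rule $\eta\le0$ certifies that $\mu^{(N-1)}$ already maximizes $\bar T_s$ unconstrained.

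I expect the monotonicity hypothesis to be the main obstacle: that activating the most efficient idle state neither lifts the maximal efficiency among the remaining idle states above the threshold $\lambda_i$ just used, nor drops the efficiency of any already-active state below it. This preservation of the efficiency ordering does not follow from the abstract convexity and line arguments and requires the explicit expression for $\eta_\mu(\mathbf s)$ (Lemma~\ref{computeeta}, App.~\ref{rigeta}): one must check, using that accessing an unknown state in ARQ layer $t$ perturbs only the downstream visit counts and the associated BIC reward in the controlled manner dictated by~(\ref{Ws}), that the induced change in every other state's efficiency keeps the ordering intact. Establishing this is exactly what guarantees that the one-coordinate-at-a-time greedy activation follows the concave envelope rather than merely generating feasible interior points.
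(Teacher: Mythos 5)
Your convex-geometry skeleton (occupation-measure convexity of the achievable region, the straight-line lemma for one-coordinate variations with constant slope $\eta_\mu(\mathbf s)$, greedy tracing of the upper boundary) matches the paper's proof in App.~\ref{proofofalgo}, which likewise works with $\mathrm{conv}(\mathcal G_\upsilon)$ and its boundary polyline. But your argument is left genuinely incomplete at exactly the point you flag: the claim that the one-coordinate greedy activation follows the concave envelope is made conditional on an unproven ``monotonicity hypothesis'' (nonincreasing $\lambda_i$, with every active state's efficiency above and every idle state's below the current threshold), and you concede you do not know how to establish it. That conditional step is the entire content of the theorem, so as written the proposal does not prove the statement. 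Moreover, the route you sketch for closing it --- showing that activating $\mathbf s^{(i)}$ preserves the efficiency \emph{ordering} across iterations --- is not what the paper does, and is likely harder than necessary: the efficiencies of other states do change when the policy changes, and no ordering-preservation property is ever invoked in the paper.

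The paper closes the gap by a different, essentially combinatorial observation. Since problem~(\ref{refopt2}) has a \emph{single} constraint, the constrained-MDP result of~\cite{Ross1989} guarantees an optimal stationary policy randomized in at most one state; hence every boundary segment of $\mathrm{bd}(\mathcal G_\upsilon)$ between consecutive deterministic vertices is swept by a single-state randomization, which forces consecutive vertices $\mu^{(i)},\mu^{(i+1)}$ to differ in exactly one coordinate. Combined with the strict monotonicity of $\bar W_s$ in each coordinate (Lemma~\ref{increasingWs}), the move must be an \emph{activation} of an idle state, so the global slope maximization over all deterministic policies with larger $\bar W_s$ collapses, at each vertex, to $\max_{\mathbf s\in\mathcal S_{\mathrm{idle}}^{(i)}}\eta_{\mu^{(i)}}(\mathbf s)$ --- no statement about how efficiencies evolve across iterations is needed. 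Two further items in the paper's proof are absent from your proposal: (i) the accessibility issue (under the algorithm's iterates, states $(t,b,\U)$ with $b>0$ may be unreachable, making $\eta$ a $0/0$ ratio and breaking your ``strictly increasing $\bar W_s$'' claim), which the paper handles by perturbing to $(1-\upsilon)\mu+\upsilon\tilde\mu$ and taking $\upsilon\to 0^+$, consistently with Def.~\ref{defrigeta} and Lemma~\ref{computeeta}; and (ii) the justification of the algorithm's initialization at $\mu^{(0)}(\mathbf s)=1$ on $\mathcal S_{\K}$ in the high SU access rate regime, which the paper proves separately via Lemmas~\ref{D0} and~\ref{useful} (showing $\eta_\mu(t,0,\K)=T_{s\K}$ strictly dominates $\eta_\mu(t,0,\U)<T_{s\K}$, so the greedy path from the all-idle policy first activates all of $\mathcal S_{\K}$); your boundary-tracing argument starts from this nontrivial vertex without verifying it lies on the optimal path.
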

\begin{proof}
See App.~\ref{proofofalgo}.
\end{proof}

\section{Special Case: degenerate cognitive radio network scenario}\label{special}

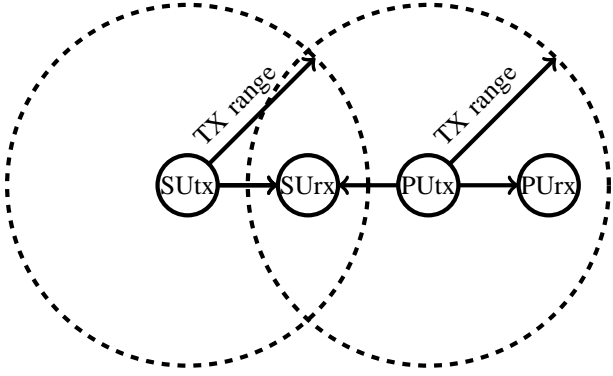
\begin{figure}
    \centering
    {
\begin{tikzpicture}[scale=0.8]
\draw [dashed, ultra thick] (0,0) circle [radius=3];
\draw [dashed, ultra thick] (4,0) circle [radius=3];
\draw [ultra thick, ->] (0,0) -- (2.12,2.12);
\draw [ultra thick, ->] (4,0) -- (6.12,2.12);
\draw [ultra thick, ->] (0,0) -- (2-0.5,0);
\draw [ultra thick, ->] (0,0) -- (2-0.5,0);
\draw [ultra thick, ->] (4,0) -- (2+0.5,0);
\draw [ultra thick, ->] (4,0) -- (6-0.5,0);
\draw [fill=white, ultra thick] (0,0) circle [radius=0.5];;
\draw [fill=white, ultra thick] (2,0) circle [radius=0.5];;
\draw [fill=white, ultra thick] (4,0) circle [radius=0.5];;
\draw [fill=white, ultra thick] (6,0) circle [radius=0.5];;
\node at (0,0) {SUtx};
\node at (2,0) {SUrx};
\node at (4,0) {PUtx};
\node at (6,0) {PUrx};
\node [above, rotate=45] at (1.06,1.06) {TX range};
\node [above, rotate=45] at (5.06,1.06) {TX range};
%
\end{tikzpicture}}
\caption{Degenerate cognitive radio network}
\label{fig:degenerate_cog_net}
\end{figure}

We point out that Algorithm~\ref{algorithm} determines the optimal policy
for a generic set of system parameters.
However, the resulting optimal policy does not always have a structure that is easily interpreted.
In this section, we consider a special case of the general model
discussed so far,
a \emph{degenerate cognitive radio network},
where the activity of the PU is unaffected
by the transmissions of the SU, \emph{i.e.}, the channel gain between
the SU transmitter and the PU receiver is zero. 


Consider the scenario depicted in
Fig.~\ref{fig:degenerate_cog_net}, where PUrx is outside the transmission
range of SUtx, whereas SUrx is inside the transmission range of both SUtx and SUrx.
In this scenario, the interference produced by SU to PU is negligible.
In contrast, the PU produces significant interference at the SU receiver.
The SU thus potentially benefits by employing
the BIC and FIC mechanisms.
We denote this scenario as a \emph{Degenerate cognitive radio network},
and we model it by assuming that 
the SNR of the interfering link SUtx$\rightarrow$PUrx
is deterministically equal to zero, \emph{i.e.}, $\gamma_{sp}=0$.
From~(\ref{PUoutage2}), we then have
 $q_{pp}^{(\I)}=q_{pp}^{(\A)}\triangleq q_{pp}$, \emph{i.e.},
 the outage performance of the PU is unaffected by the activity of the SU,
and the primary ARQ process is independent of the secondary access policy.
We define 
\begin{align}\label{DeltaS}
\Delta_s\triangleq\frac{T_{s\mathrm{K}}-T_{s\mathrm{U}}-p_{s,\mathrm{buf}}R_{s\mathrm{U}}}{R_{s\mathrm{U}}}.
\end{align}
 From~(\ref{tau}),
it follows that $\Delta_s\geq 0$, with equality if $R_{s\mathrm{U}}=R_{s\mathrm{K}}$.
Therefore, $R_{s\mathrm{U}}\Delta_s$ is the marginal throughput gain accrued
in the states where the PU message is known to SUrx, over
the throughput accrued in
 the states where the PU message is unknown
(instantaneous throughput $T_{s\mathrm{U}}$ plus BIC throughput $p_{s,\mathrm{buf}}R_{s\mathrm{U}}$,
possibly recovered in a future ARQ retransmission).
The following lemma proves that, if
 the marginal throughput gain
 $\Delta_s$ is "small",
the secondary accesses in the high SU access rate regime
in a degenerate cognitive radio network
 are allocated, in order, to the states
in $\mathcal S_{\K }$ (Lemma~\ref{optpollow}),
 then to the idle states $(t,b,\U )$ in $\mathcal S_{\U }$,
giving priority to states with low $b$ and $t$ over states with high $b$ and $t$,
 respectively. 
 An illustrative example
of the optimal policy for this scenario is given in Fig.~\ref{fig:degenerate_cog_net_policy}.
\begin{lemma}\label{mainthm}
In the degenerate cognitive radio network scenario
with $q_{pp}^{(\A)}=q_{pp}^{(\I)}=q_{pp}$,
if
\begin{align}\label{HP}
 \Delta_s<\frac{1-q_{ps}^{(\A)}}{q_{ps}^{(\A)}-q_{ps}^{(\I)}}p_{s,\mathrm{buf}},
\end{align}
 the sequence of policies $(\mu^{(0)},\dots,\mu^{(N-1)})$
returned by Algorithm~\ref{algorithm} is such that,
 $\forall i\in\mathbb N(0,N-1)$,
 \begin{align}\label{struc}
\mu^{(i)}(\mathbf s)=&1,\ \forall \mathbf s\in\mathcal S_{\mathrm{K}},
\\
\mu^{(i)}(t,b,\mathrm{U})=&
  \left\{
  \begin{array}{ll}
   1 &b<b^{(i)}(t)\\
   0 &b\geq b^{(i)}(t),
  \end{array}
  \right.,\ \forall \mathbf (t,b,\U)\in\mathcal S_{\mathrm{U}},
 \end{align}
where $b^{(i)}(t)$ is non-increasing in $t$
and non-decreasing in $i$,
with $b^{(0)}(t)=0$ and $b^{(N-1)}(t)=\bar b_{\max}(t)$, \emph{i.e.},
\begin{align}
\label{incri}
&\bar b_{\max}(t)=b^{(N-1)}(t)\geq\dots\geq b^{(i)}(t)\geq b^{(i-1)}(t)
\\\nonumber&\quad\geq\dots\geq b^{(0)}(t)=0.
\\
\label{incrt}
&b^{(i)}(1)\!\geq\!b^{(i)}(2)\!\geq\!\dots\!\geq\!b^{(i)}(t-1)\!\geq\!b^{(i)}(t)\!\geq\!\dots\!\geq\!b^{(i)}(D),
\end{align}
where
   \begin{align}\label{bmax}
\bar b_{\max}(t)=\left\lceil\frac{
\begin{array}{l}
 \frac{T_{s\U}}{R_{s\U}}
\left[1-q_{pp}\left(q_{ps}^{(\A)}-q_{ps}^{(\I)}\right)A_0(t+1)\right]\\
+
\left(\frac{1-q_{ps}^{(\A)}}
{q_{ps}^{(\A)}-q_{ps}^{(\I)}}p_{s,\mathrm{buf}}
-\Delta_s\right)\times\\\quad\times q_{pp}\left(q_{ps}^{(\A)}-q_{ps}^{(\I)}\right) A_0(t+1)
\end{array}
}{\left(q_{ps}^{(\A)}-q_{ps}^{(\I)}\right)
\left(1-q_{pp}(1-q_{ps}^{(\I)})A_0(t+1)\right)}
\right\rceil-1
  \end{align}
and we have defined
\begin{align}
\label{A0}
&A_0(\tau)\triangleq\frac{1-q_{pp}^{D-\tau+1}
q_{ps}^{(\I)(D-\tau+1)}
}{1-q_{pp}q_{ps}^{(\I)}},
\\
&A_1(\tau)\triangleq\frac{1-q_{pp}^{D-\tau+1}}{1-q_{pp}}.
\label{A1}
\end{align}
\end{lemma}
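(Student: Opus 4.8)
The plan is to exploit the decoupling provided by the degenerate scenario and then to track the greedy rule of Algorithm~\ref{algorithm} through the closed form of the secondary access efficiency. Because $q_{pp}^{(\A)}=q_{pp}^{(\I)}=q_{pp}$, the primary ARQ chain evolves independently of $\mu$; hence the marginal law of the ARQ state $t$ is fixed and yields the geometric-sum quantities $A_0,A_1$ in~(\ref{A0}),~(\ref{A1}), while only the conditional buffer occupancy inside an ARQ window reacts to the policy. First I would specialize the closed form of $\eta_\mu(t,b,\U)$ supplied by Lemma~\ref{computeeta} to this case. The outcome I expect is that, restricted to threshold policies of the form~(\ref{struc}), the sign of $\eta_\mu(t,b,\U)$ is governed by its numerator $\mathrm d\bar T_s/\mathrm d\mu(t,b,\U)$, which is \emph{affine and strictly decreasing} in the buffer level $b$: accessing in $(t,b,\U)$ earns the instantaneous reward $T_{s\U}$ together with the downstream FIC/BIC gains, but it degrades the PU decoding probability by $q_{ps}^{(\A)}-q_{ps}^{(\I)}$, which jeopardizes the recovery of all $b$ currently buffered signals, a penalty that scales linearly in $b$.

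From this single-crossing structure in $b$, two facts follow. Within a fixed ARQ column $t$, the idle state with the smallest $b$ always has the largest efficiency, so the greedy selection step of Algorithm~\ref{algorithm} necessarily fills buffer levels from $b=0$ upward; this is exactly the threshold form~(\ref{struc}), and it also shows that the states of $\mathcal S_{\K}$, set active at initialization, are never returned to the idle set. The terminal policy $\mu^{(N-1)}$ is reached when every remaining idle state has $\eta_\mu\le 0$, i.e. when $b$ has passed the last integer with $\eta_\mu>0$; solving $\eta_\mu(t,b,\U)=0$ for $b$ and taking $\lceil\cdot\rceil-1$ produces $\bar b_{\max}(t)$ in~(\ref{bmax}). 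The hypothesis~(\ref{HP}) enters precisely here: it is the statement that $\frac{1-q_{ps}^{(\A)}}{q_{ps}^{(\A)}-q_{ps}^{(\I)}}p_{s,\mathrm{buf}}-\Delta_s>0$, which keeps the corresponding bracket in the numerator of~(\ref{bmax}) positive, so that $\bar b_{\max}(t)\ge 0$ and BIC is genuinely worthwhile.

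Monotonicity in the iteration index~(\ref{incri}) is immediate, since Algorithm~\ref{algorithm} only removes states from $\mathcal S_{\mathrm{idle}}^{(i)}$ and never reinserts them, so each $b^{(i)}(t)$ is non-decreasing in $i$ and terminates at $\bar b_{\max}(t)$. Monotonicity in the ARQ state~(\ref{incrt}) is the substantive part, which I would establish in two steps. First, the terminal thresholds $\bar b_{\max}(t)$ are non-increasing in $t$: since $A_0(t+1)$ in~(\ref{A0}) is strictly decreasing in $t$ (its numerator $1-(q_{pp}q_{ps}^{(\I)})^{D-t}$ shrinks as $t$ grows), I would substitute into~(\ref{bmax}) and verify, using~(\ref{HP}), that the ratio inside the ceiling is non-increasing in $t$. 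Second, and more delicately, the ordering must hold at every intermediate iteration and not only at termination; for this I would argue by induction that, if $\mu^{(i)}$ satisfies $b^{(i)}(1)\ge\dots\ge b^{(i)}(D)$, then the state selected by the greedy step cannot break it, because at a common buffer level the boundary efficiency is larger for smaller $t$---a signal buffered at small $t$ has more remaining retransmissions in which to be recovered, consistent with $A_0(t+1)$ being larger---so a smaller-$t$ column is always filled no later than a larger-$t$ column.

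The main obstacle is this last step: verifying, from the closed form of $\eta_\mu$, the cross-column comparison that keeps $b^{(i)}(t)$ non-increasing in $t$ throughout the run of the algorithm, rather than only at the final policy. This amounts to comparing the boundary efficiencies of distinct columns as functions of the current thresholds, and it is exactly where~(\ref{HP}) is indispensable: it excludes the regime in which the FIC gain $\Delta_s R_{s\U}$ is large enough to make high-$t$ states competitive and thereby scramble the simple threshold ordering. The remaining monotonicity in $b$ and the algebraic reduction of $\eta_\mu(t,b,\U)=0$ to~(\ref{bmax}) are routine once the closed form of $\eta_\mu$ from Lemma~\ref{computeeta} is in hand.
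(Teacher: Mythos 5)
Your proposal follows essentially the same route as the paper's proof: an induction on the iterations of Algorithm~\ref{algorithm} over the class of staircase (threshold) policies, using the degeneracy of the network ($\mathbf D_{\mu}^\prime(\mathbf s)=0$, since the primary ARQ process is unaffected by $\mu$) to reduce the efficiency to $\eta_{\mu}(t,b,\U)=\mathbf G_{\mu}^\prime(t,b,\U)/\mathbf V_{\mu}^\prime(t,b,\U)$, deriving the $b$-threshold structure from the affine, strictly decreasing dependence of $\mathbf G_{\mu}^\prime$ on $b$ (with $\mathbf V_{\mu}^\prime$ independent of $b$ and positive), and reading $\bar b_{\max}(t)$ off the stopping condition $\eta_{\mu^{(i)}}\le 0$, i.e., $\mathbf G_{\mu^{(i)}}^\prime(t,b,\U)\le 0$, solved for $b$.

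The step you flag as the main obstacle --- the cross-column comparison $\eta_{\mu^{(i)}}(t,b,\U)>\eta_{\mu^{(i)}}(t+1,b,\U)$ at every iteration --- is indeed the heart of the paper's proof, and the device that closes it is the one your plan only gestures at: under the induction hypothesis, every state $(\tau,\beta,\U)$ with $\tau\ge t$, $\beta\ge b$ lies in the idle region, so the value functions from those states admit the policy-independent closed forms (\ref{v1})--(\ref{v4}) in terms of $A_0,A_1$ alone; substituting into (\ref{Gprimex})--(\ref{Vprimex}) yields (\ref{Gprime})--(\ref{Vprime}), and cross-multiplying (genuinely needed here, since $\mathbf V_{\mu}^\prime$ varies across columns through $A_0(t+1)$, so the comparison does not reduce to numerators as it does within a column) collapses the inequality to the condition (\ref{condition}), which is affine and increasing in $b$, hence binding at $b=0$, where it is precisely hypothesis (\ref{HP}). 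So your reading of where (\ref{HP}) bites is correct, although its primary role is this cross-column step rather than keeping the bracket in (\ref{bmax}) positive (nonnegativity of $\bar b_{\max}$ is a by-product). Two smaller points: your separate verification that $\bar b_{\max}(t)$ is non-increasing in $t$ directly from (\ref{bmax}) is workable but redundant --- in the paper, (\ref{incrt}) for every $i$, including $i=N-1$, falls out of the same induction, since membership in the staircase class (with $b(t+1)\le b(t)$) is exactly what is propagated; and the two pairwise inequalities, in $b$ and in $t$, suffice to preserve the staircase shape because they force the greedy maximizer to be a boundary state whose same-$b$, smaller-$t$ neighbor is already active, which is precisely the condition under which activating it keeps $b^{(i+1)}(t)$ non-increasing in $t$.
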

\begin{proof}
See App.~\ref{proofofdegenerate}.
\end{proof}

\begin{remark}
Interestingly, this is the same result derived in our work~\cite{MichelusiBIC11} for $D=2$. However, therein the result was shown to hold for general
 $q_{pp}^{(\A)}\geq q_{pp}^{(\I)}$ (not necessarily a degenerate cognitive radio network),
whereas Lemma~\ref{mainthm} holds for general $D$ but only for 
a degenerate cognitive radio network scenario.
\end{remark}

\begin{figure}
    \centering
    {
\begin{tikzpicture}[->,>=stealth',shorten >=1pt,auto,node distance=1.8cm,semithick]
\tikzstyle{every state}=[fill=black,draw=black,thick,text=white,scale=1]
\node[state]         (S10)              {$1,0,\U$};
\node[state]         (S20) [right of=S10] {$2,0,\U$};
\node[state]         (S30) [right of=S20] {$3,0,\U$};
\node[state]         (S40) [right of=S30] {$4,0,\U$};
\node[state]         (S50) [right of=S40] {$5,0,\U$};
\node[state]         (S21) [below of=S20] {$2,1,\U$};
\node[state]         (S31) [below of=S30] {$3,1,\U$};
\node[state]         (S41) [below of=S40] {$4,1,\U$};
\tikzstyle{every state}=[fill=white,draw=black,thick,text=black,scale=1]
\node[state]         (S51) [below of=S50] {$5,1,\U$};
\tikzstyle{every state}=[fill=gray,draw=black,thick,text=white,scale=1]
\node[state]         (S32) [below of=S31] {$3,2,\U$};
\tikzstyle{every state}=[fill=white,draw=black,thick,text=black,scale=1]
\node[state]         (S42) [below of=S41] {$4,2,\U$};
\node[state]         (S52) [below of=S51] {$5,2,\U$};
\node[state]         (S43) [below of=S42] {$4,3,\U$};
\node[state]         (S53) [below of=S52] {$5,3,\U$};
\node[state]         (S54) [below of=S53] {$5,4,\U$};
\tikzstyle{every state}=[fill=black,draw=black,thick,text=white,scale=1]
\node[state]         (S5K) [below of=S54] {$5,0,\K$};
\node[state]         (S4K) [left of=S5K] {$4,0,\K$};
\node[state]         (S3K) [left of=S4K] {$3,0,\K$};
\node[state]         (S2K) [left of=S3K] {$2,0,\K$};
\path (S10) edge (S20);
\path (S10) edge (S21);
\path (S20) edge (S30);
\path (S20) edge (S31);
\path (S30) edge (S40);
\path (S30) edge (S41);
\path (S40) edge (S50);
\path (S40) edge (S51);
\path (S21) edge (S31);
\path (S21) edge (S32);
\path (S31) edge (S41);
\path (S31) edge (S42);
\path (S41) edge (S51);
\path (S41) edge (S52);
\path (S32) edge (S42);
\path (S32) edge (S43);
\path (S42) edge (S52);
\path (S42) edge (S53);
\path (S43) edge (S53);
\path (S43) edge (S54);
\path (S2K) edge (S3K);
\path (S3K) edge (S4K);
\path (S4K) edge (S5K);
\path (S10) edge  [bend right] (S2K);
\coordinate (Q3) at ([yshift=-1cm]S21.west); 
\path[-] (S20) edge  [bend right] (Q3);
\path[-] (S21) edge  [bend right] (Q3);
\path (Q3) edge (S3K);
\coordinate (Q4) at ([yshift=-1cm]S32.west); 
\path[-] (S30) edge  [bend right] (Q4);
\path[-] (S31) edge  [bend right] (Q4);
\path[-] (S32) edge  [bend right] (Q4);
\path (Q4) edge (S4K);
\coordinate (Q5) at ([yshift=-1cm]S43.west); 
\path[-] (S40) edge  [bend right] (Q5);
\path[-] (S41) edge  [bend right] (Q5);
\path[-] (S42) edge  [bend right] (Q5);
\path[-] (S43) edge  [bend right] (Q5);
\path (Q5) edge (S5K);
\end{tikzpicture}
}
\caption{Illustrative example of the structure of the optimal secondary access policy for the degenerate cognitive radio network;
the SU is active in the black states, idle in the white ones,
and randomly accesses the channel in the gray state;
the arrows indicate the possible state transitions (transitions to 
state $(1,0,\U)$ are omitted).}
\label{fig:degenerate_cog_net_policy}
\end{figure}
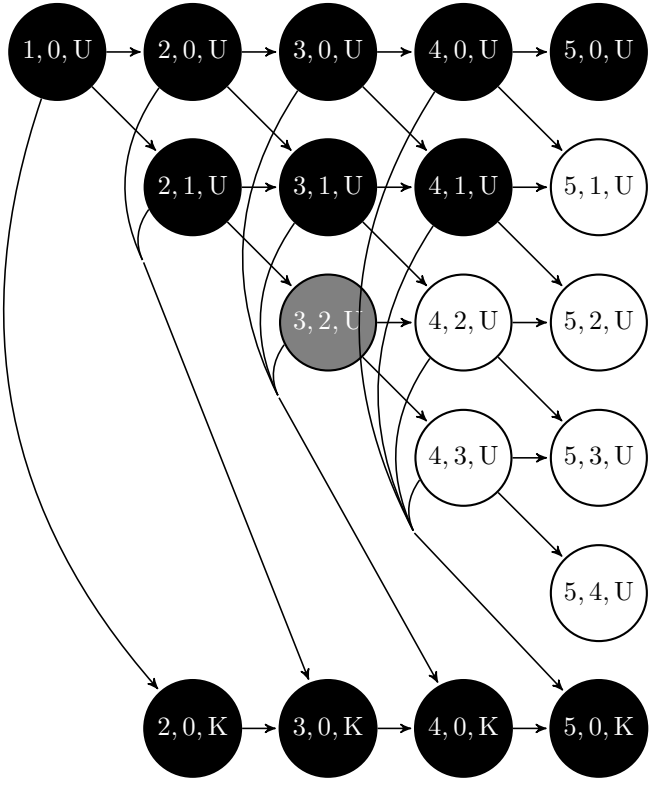

The lemma dictates that, in the degenerate cognitive radio network scenario,
the SU should restrict its channel accesses to the states
corresponding to a low primary ARQ index and small buffer occupancy at the SU receiver.
Alternatively, the larger the ARQ index or the buffer occupancy, the
smaller the incentive to access the channel.
 By doing so,
the SU maximizes the buffer occupancy in the early HARQ retransmission attempts, and invests in the future BIC recovery.
When the primary ARQ state $t$ approaches
 the deadline $D$, the SU is incentivized to idle so as to help SUrx
 to decode the PU message,
thus enabling the recovery of the failed SU transmissions
from the buffered received signals via BIC,
 before the ARQ deadline $D$ is reached and the buffer is depleted.
Moreover, when the buffer state $b$ grows,
 since $q_{ps}^{(\A)}>q_{ps}^{(\I)}$,
 the instantaneous reward accrued by staying idle
 ($(1-q_{ps}^{(\I)})bR_{s\mathrm{U}}$) 
approaches and, at some point, becomes larger than the reward accrued by transmitting
 ($T_{s\mathrm{U}}+(1-q_{ps}^{(\A)})bR_{s\mathrm{U}}$),
hence the incentive to stay idle grows.
On the other hand, if $\Delta_s$ is large, then the marginal throughput gain accrued
in the states where the PU message is known to SUrx, over
the throughput accrued in
 the states where the PU message is unknown,
is large. The SU is thus incentivized to stay idle in the initial ARQ rounds,
so as to help SUrx decode the PU message.
 Therefore, for large $\Delta_s$, the optimal policy may not
obey the structure of Lemma~\ref{mainthm}.

As a final remark, note that,
in the degenerate cognitive radio network scenario, 
the only limitation to the activity of the SU is the secondary
power expenditure $\bar{P}_{s}(\mu)$,
since the primary throughput is unaffected.
In the special case $\mathcal{P}_{s}^{(\mathrm{th})}=P_{s}$
in~(\ref{opt}),
neither the secondary power expenditure nor the primary throughput degradation
limit the activity of the SU, hence
the optimal policy solves
the unconstrained maximization problem
$\mu^{*}=
\arg\max_{\mu}\bar{T}_{s}(\mu)$,
 whose solution follows as a corollary of Lemma~\ref{mainthm}.
\begin{corollary}\label{optimaldegenerate}
 In the degenerate cognitive radio network scenario,
the solution of the unconstrained optimization problem
$\mu^{*}=
\arg\max_{\mu}\bar{T}_{s}(\mu)$ yields
 \begin{align}
\mu^*(\mathbf s)=&1,\ \forall \mathbf s\in\mathcal S_{\mathrm{K}},
\\
\mu^*(t,b,\mathrm{U})=&
  \left\{
  \begin{array}{ll}
   1 &b<\bar b_{\max}(t)\\
   0 &b\geq\bar b_{\max}(t),
  \end{array}
  \right.,\ \forall \mathbf (t,b,\U)\in\mathcal S_{\mathrm{U}},
 \end{align}
where $\bar b_{\max}(t)$ is defined in~(\ref{bmax}).
\end{corollary}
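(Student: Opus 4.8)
The plan is to obtain the Corollary directly from Lemma~\ref{mainthm} and Theorem~\ref{algooptimal} by recognizing that the unconstrained problem coincides with~(\ref{refopt2}) once the SU access rate constraint is rendered inactive. First I would record that in the degenerate scenario $q_{pp}^{(\A)}=q_{pp}^{(\I)}=q_{pp}$, so by~(\ref{Tp}) the PU throughput $\bar T_p(\mu)=T_p^{(\mathrm{I})}$ is the same for every $\mu\in\mathcal U$; the PU constraint in~(\ref{opt}) therefore holds automatically and its contribution to the minimum defining $\epsilon_{\mathrm W}$ in~(\ref{refopt2}) diverges. Specializing to $\mathcal P_s^{(\mathrm{th})}=P_s$ leaves $\epsilon_{\mathrm W}=1$. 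Since $\bar W_s(\mu)=\sum_{\mathbf s\in\mathcal S}\pi_\mu(\mathbf s)\mu(\mathbf s)\leq\sum_{\mathbf s\in\mathcal S}\pi_\mu(\mathbf s)=1$ for every admissible $\mu$, the constraint $\bar W_s(\mu)\leq\epsilon_{\mathrm W}=1$ is never binding, so~(\ref{refopt2}) and the unconstrained problem $\mu^*=\arg\max_\mu\bar T_s(\mu)$ share the same optimizers.

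Next I would invoke Theorem~\ref{algooptimal}, which guarantees that Algorithm~\ref{algorithm} returns an optimal policy for~(\ref{refopt2}) at this value of $\epsilon_{\mathrm W}$. Because $\bar W_s(\mu^{(N-1)})\leq 1=\epsilon_{\mathrm W}$, the terminal test of the algorithm selects the first branch (step 4a), so the returned policy is the terminal greedy iterate $\mu^{(N-1)}$, in which every secondary access with strictly positive efficiency has already been switched on. By the equivalence of the two problems established above, $\mu^{(N-1)}$ is an optimizer of the unconstrained problem.

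It then remains only to substitute the explicit form of $\mu^{(N-1)}$. Under the hypothesis~(\ref{HP}) on $\Delta_s$ (inherited from Lemma~\ref{mainthm}), the policy structure of Lemma~\ref{mainthm} evaluated at $i=N-1$ gives $\mu^{(N-1)}(\mathbf s)=1$ for all $\mathbf s\in\mathcal S_{\K}$ and the threshold rule $\mu^{(N-1)}(t,b,\U)=\mathbf 1(b<b^{(N-1)}(t))$; substituting $b^{(N-1)}(t)=\bar b_{\max}(t)$ from~(\ref{incri}), with $\bar b_{\max}(t)$ given in closed form by~(\ref{bmax}), reproduces verbatim the policy claimed in the Corollary.

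I expect no genuine analytic difficulty here, since all the quantitative content---the monotonicity relations~(\ref{incri})--(\ref{incrt}) and the closed form~(\ref{bmax})---is inherited from Lemma~\ref{mainthm}. The only point requiring care is the bookkeeping: I must make explicit that $\mathcal P_s^{(\mathrm{th})}=P_s$ yields a genuinely non-binding $\epsilon_{\mathrm W}=1$, so that optimality for~(\ref{refopt2}) transfers without loss to the unconstrained problem, and that this non-binding regime is precisely the one in which the algorithm exits through step 4a, returning the terminal policy $\mu^{(N-1)}$ to which Lemma~\ref{mainthm} applies.
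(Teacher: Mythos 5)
Your proposal is correct and takes essentially the same route as the paper: the paper likewise argues (in the remark preceding the corollary) that in the degenerate scenario with $\mathcal P_s^{(\mathrm{th})}=P_s$ neither constraint limits the SU, so the problem becomes unconstrained and its solution is the terminal policy of Algorithm~\ref{algorithm}, whose structure $b^{(N-1)}(t)=\bar b_{\max}(t)$ is read off from Lemma~\ref{mainthm}. Your only additions are to make explicit two points the paper leaves implicit---that $\bar W_s(\mu)\leq 1$ forces the step-4a exit returning $\mu^{(N-1)}$, and that hypothesis~(\ref{HP}) on $\Delta_s$ is needed (Lemma~\ref{mainthm} requires it, though the corollary statement omits it)---both of which are correct and tighten the bookkeeping.
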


\begin{table}[t]
\small
\begin{center}
\begin{tabular}{| c | c | c |}
\hline
\multicolumn{3}{|l|}{\T\B \textbf{PU}}\\
\hline
\T\B  $R_p\simeq 2.52$&$q_{pp}^{(\mathrm{I})}\simeq 0.38$&$q_{pp}^{(\mathrm{A})}\simeq 0.68$\\
\hline
\multicolumn{3}{|l|}{\T\B \textbf{SU}, $R_{s\U }=\arg\max_{R_{s}}T_{s\U }\left(R_{s},R_p\right)$}\\
\hline
\T  $R_{s\U }=1.12$&$T_{s\U }\simeq 0.59$&\\
  $q_{ps}^{(\mathrm{I})}\simeq 0.61$&$q_{ps}^{(\mathrm{A})}\simeq 0.74$&$p_{s,\mathrm{buf}}=0.26$\\
\B $R_{s\K }\simeq 1.91$&$T_{s\K }\simeq 1.10$&\\
\hline
\multicolumn{3}{|l|}{\T\B \textbf{SU}, $R_{s\U }=R_{s\K }$}\\
\hline
\T  $R_{s\U }\simeq 1.91$&$T_{s\U }\simeq 0.40$&\\
   $q_{ps}^{(\mathrm{I})}\simeq 0.61$&$q_{ps}^{(\mathrm{A})}\simeq 0.88$&$p_{s,\mathrm{buf}}=0.37$\\
\B $R_{s\K }\simeq 1.91$&$T_{s\K }\simeq 1.10$&\\
\hline
\end{tabular}
\caption{parameters of the SU and PU, for the SNRs
$\bar\gamma_s=5$, $\bar\gamma_p=10$, $\bar\gamma_{ps}=5$, $\bar\gamma_{sp}=2$.}
\label{table}
\end{center}
\end{table}

\section{Numerical Results}
\label{sec:numres}
We consider a scenario with Rayleigh fading channels, \emph{i.e.}, the SNR $\gamma_x,\ x\in\{s,p,sp,ps\}$,
is an exponential random variable with mean $\mathbb E[\gamma_x]=\bar\gamma_x$.
We consider the following parameters, unless otherwise stated.
The average SNRs are set to $\bar\gamma_s=\bar\gamma_{ps}=5$, $\bar\gamma_p=10$,
 $\bar\gamma_{sp}=2$. The ARQ deadline is $D=5$. 
$R_{s\K }$ is chosen as
$R_{s\K }=\arg\max_{R_s}T_{s\K }(R_s)$.
The PU rate $R_p$ is chosen as the maximizer of the instantaneous PU throughput under an idle SU, \emph{i.e.},
$R_p=\arg\max_RT_{p}^{(\mathrm{I})}(R)$.
For the rate $R_{s\U }$, we evaluate the two cases 
$R_{s\U }=R_{s\U }^*$ and $R_{s\U }=R_{s\K }$,
 where $R_{s\U }^*=\arg\max_{R_s} T_{s\U }(R_s,R_{p})$.
The former maximizes the instantaneous throughput under interference from the PU,
thus neglecting the buffering capability at SUrx;
therefore, the choice $R_{s\U }=R_{s\U }^*$
reflects a pessimistic expectation of the ability
of SUrx to decode the PU message and to enable BIC.
As to the latter, from~(\ref{tau}) we have
$R_{s\U }=R_{s\K }=\arg\max_{R_s}
T_{s\U }(R_{s},R_{p})+p_{s,\mathrm{buf}}(R_{s},R_{p})R_{s\K }
$, hence $R_{s\U }=R_{s\K }$
 maximizes the sum of the instantaneous throughput and the future throughput possibly
recovered via BIC,
thus reflecting an optimistic expectation of the ability
of SUrx to decode the PU message, which enables BIC.
The PU throughput loss constraint is set to $\epsilon_{\mathrm{PU}}=0.2$,
and the constraint on the SU power is set to 
$\mathcal{P}_{s}^{(\mathrm{th})}=P_s$ (inactive).
The resulting values of the system parameters are listed in Table~\ref{table}.

We consider the following schemes: "FIC/BIC", which employs both FIC and BIC;
the optimal "FIC/BIC" policy is derived using Algorithm~\ref{algorithm} and Lemma~\ref{optpollow};
"FIC only", which does not employ the buffering mechanism~\cite{MichelusiITA11};
"no FIC/BIC", which employs neither BIC nor FIC. In this case, the SU message is
 decoded by leveraging the PU codebook structure~\cite{rateRegion};
however, possible knowledge of the PU message gained during the decoding operation
is only used in the slot where the PU message is acquired, but is neglected in the past/future
 PU retransmissions.
For "no FIC/BIC", the optimal policy consists in accessing the channel with a constant probability
 in all time-slots, independently of the underlying state, so as to attain
the  PU throughput loss constraint with equality.
"PM known" refers to an ideal scenario where SUrx perfectly knows the current PU message in advance,
and removes its interference; specifically, SUtx transmits with
rate $R_{s\K }$, thus accruing the throughput $T_{s\K }$ at each secondary access;
"PM known" thus yields an upper bound to the performance of any other policy considered.

\begin{figure}[t]
\centering  
\includegraphics[width=\linewidth,trim = 30mm 0mm 30mm 5mm,clip=true]{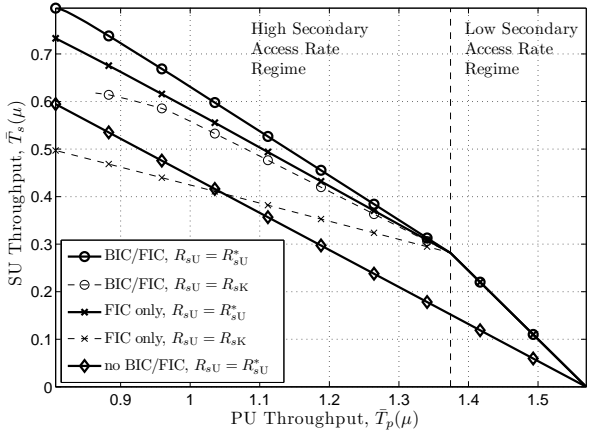}
\caption{SU throughput vs PU throughput.
 $\bar\gamma_s=\bar\gamma_{ps}=5$, $\bar\gamma_{sp}=2$, $\bar\gamma_p=10$.
The other parameters are given in Table~\ref{table}.}
\label{simresWsvsTs}
\end{figure}

In Fig.~\ref{simresWsvsTs},
 we plot the SU throughput versus the PU throughput,
obtained by varying the SU access rate constraint $\epsilon_{\mathrm{W}}$
in~(\ref{refopt2}) from $0$ to $1$.
As expected, the best performance is attained by "FIC/BIC",
 since the joint use of BIC and FIC enables IC at SUrx
 over the entire sequence of PU retransmissions. "FIC only" incurs a throughput penalty
(except in the low SU access rate regime $\bar T_p(\mu)\geq 1.37$ where, from  Lemma~\ref{optpollow},
"FIC/BIC" does not employ BIC), since
the SU transmissions which undergo outage due to severe interference from the PU are simply dropped.
"no FIC/BIC" incurs a further throughput loss, since 
possible knowledge about the PU message is not exploited to perform IC.
Concerning the choice of the transmission rates, we note that the selection
 $R_{s\U }=R_{s\U }^*$
 outperforms $R_{s\U }=R_{s\K }$ for the scenario considered.
Note that, with $R_{s\U }=R_{s\U }^*$,
the SU accrues a larger instantaneous throughput ($T_{s\U }$),
but FIC and BIC are impaired, since both the buffering probability~(\ref{omega}),
 $p_{s,\mathrm{buf}}$, and the probability that SUrx does not
successfully decode the PU message, $q_{ps}^{(\mathrm{A})}$,
diminish. Hence, in this case
 the instantaneous throughput maximization has a stronger
 impact on the performance than enabling FIC/BIC at SUrx.

\begin{figure}[t]
\centering  
\includegraphics[width=\linewidth,trim = 30mm 0mm 30mm 5mm,clip=true]{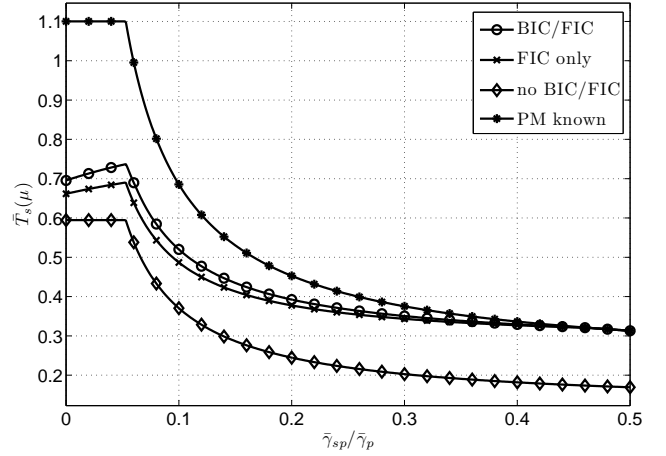}
\caption{SU throughput  vs SNR ratio $\bar\gamma_{sp}/\bar\gamma_p$.
PU throughput loss constraint $\epsilon_{\mathrm{PU}}=0.2$. 
$\bar\gamma_s=\bar\gamma_{ps}=5$, $\bar\gamma_p=10$.
$R_{s\U }=R_{s\U }^*$.}
\label{TSvsGsp}
\end{figure}

In Fig.~\ref{TSvsGsp}, we plot the SU throughput versus the
  SNR ratio $\bar\gamma_{sp}/\bar\gamma_p$, where $\bar\gamma_p=5$
and $R_{s\U }=R_{s\U }^*$. 
Note that, for $\bar\gamma_{sp}/\bar\gamma_p\leq 0.5$, the SU throughput increases.
In fact, in this regime the activity of the SU causes little
harm to the PU, and the constraint on the PU throughput loss
is inactive. The SU thus maximizes its own throughput. As $\bar\gamma_{sp}$ increases
from $0$ to $0.5\bar\gamma_{p}$,
the activity of the SU induces
more frequent primary ARQ retransmissions, hence there are more IC
opportunities available and the SU throughput augments.
On the other hand, as  $\bar\gamma_{sp}$ grows beyond $0.5\bar\gamma_p$,
the constraint on the PU throughput loss becomes active, 
 secondary accesses become more and more harmful to the PU
and take place more and more sparingly, hence the SU throughput degrades.

In Fig.~\ref{TSvsGps}, we plot the SU throughput versus the
 SNR ratio $\bar\gamma_{ps}/\bar\gamma_s$, where $\bar\gamma_s=5$
and $R_{s\U }=R_{s\U }^*$, which is a function of $\bar\gamma_{ps}$.
We notice that, when $\bar\gamma_{ps}=0$, the upper bound is achieved with equality,
since the SU operates under no interference from the PU.
The upper bound is approached also for $\bar\gamma_{ps}\gg \bar\gamma_s$,
corresponding to a strong interference regime where, with high probability,
SUrx can successfully decode the PU message,
remove its interference from the received signal, and then
 attempt to decode the SU message.
The worst performance is attained when $\bar\gamma_{ps}\simeq\bar\gamma_{s}/2$.
 In fact, the interference from the PU is neither weak enough
 to be simply treated as noise, nor
strong enough to be successfully decoded and then removed.

\begin{figure}[t]
\centering  
\includegraphics[width=\linewidth,trim = 30mm 0mm 30mm 5mm,clip=true]{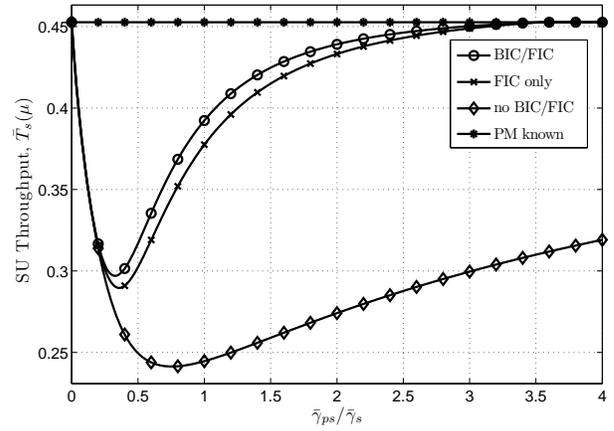}
\caption{SU throughput  vs SNR ratio $\bar\gamma_{ps}/\bar\gamma_s$.
PU throughput loss constraint $\epsilon_{\mathrm{PU}}=0.2$. $\bar\gamma_s=5$, $\bar\gamma_{sp}=2$, $\bar\gamma_p=10$.
$R_{s\U }=R_{s\U }^*$.}
\label{TSvsGps}
\end{figure}

\begin{figure}[t]
\centering  
\includegraphics[width=\linewidth,trim = 30mm 0mm 30mm 5mm,clip=true]{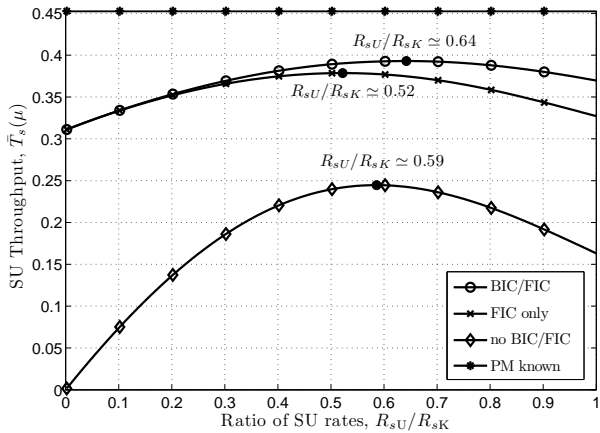}
\caption{SU throughput vs SU rate ratio $R_{s\U }/R_{s\K }$.
$R_{s\K }\simeq 1.91$ is kept fixed.
PU throughput loss constraint $\epsilon_{\mathrm{PU}}=0.2$.
 $\bar\gamma_s=5$, $\bar\gamma_{sp}=2$, $\bar\gamma_p=10$, $\bar\gamma_{ps}=5$.}
\label{varyrates}
\end{figure}

In Fig.~\ref{varyrates}, we plot the SU throughput versus the SU rate ratio $R_{s\U }/R_{s\K }$, where $R_{s\K }\simeq 1.91$ is kept fixed.
Clearly, "no FIC/BIC" attains the best performance 
for $R_{s\U }=R_{s\U }^*$, which maximizes the throughput
 $T_{s\U }(R_{s\U },R_p)$ achieved when neither FIC nor BIC are used.
On the other hand, the performance of "FIC/BIC" is maximized
 for a slightly larger value of $R_{s\U }$.
In fact, this value reflects the optimal trade-off between
 maximizing the throughput $T_{s\U }$ ($R_{s\U }\simeq 0.59 R_{s\K }$ in Fig.~\ref{varyratesprob}), 
maximizing the buffering probability, $p_{s,\mathrm{buf}}$ ($R_{s\U }\to 1$),
and minimizing the probability that SUrx does not successfully decode the PU message, $q_{ps}^{(\mathrm{A})}$ ($R_{s\U }\to 0$).
Finally, "FIC only" is optimized by 
$R_{s\U }\simeq 0.52 R_{s\K }<R_{s\U }^*$.
Since "FIC only" does not use BIC,
this value reflects the optimal trade-off between
 maximizing $T_{s\U }$ and minimizing $q_{ps}^{(\mathrm{A})}$ ($R_{s\U }\to 0$).

\begin{figure}
\centering  
\includegraphics[width=\linewidth,trim = 30mm 0mm 30mm 5mm,clip=true]{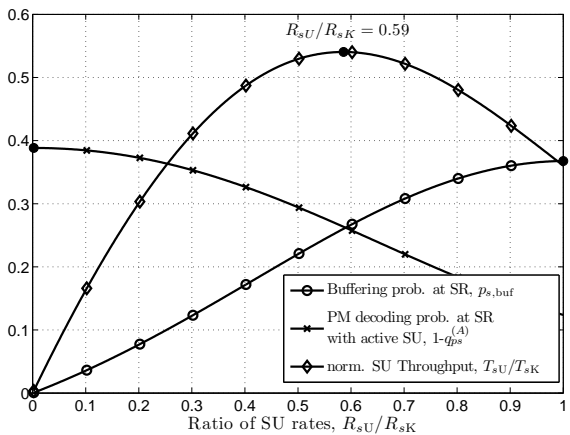}
\caption{Probabilities $p_{s,\mathrm{buf}}$, $1-q_{ps}^{(\mathrm{A})}$ and normalized SU throughput $T_{s\U }$ 
vs the SU rate ratio $R_{s\U }/R_{s\K }$.
$R_{s\K }\simeq 1.91$ is kept fixed. $\bar\gamma_s=\bar\gamma_{ps}=5$, $\bar\gamma_{sp}=2$, $\bar\gamma_p=10$.}
\label{varyratesprob}
\end{figure}
\begin{figure}
\centering  
\includegraphics[width=\linewidth,trim = 30mm 0mm 30mm 5mm,clip=true]{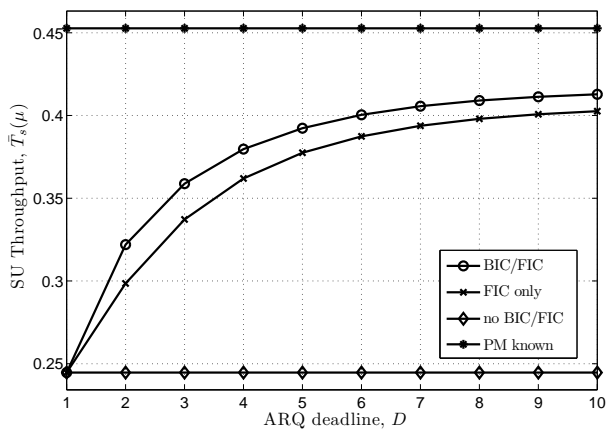}
\caption{SU throughput  vs ARQ deadline $D$.
PU throughput loss constraint $\epsilon_{\mathrm{PU}}=0.2$.
 $\bar\gamma_s=\bar\gamma_{ps}=5$, $\bar\gamma_{sp}=2$, $\bar\gamma_p=10$.
$R_{s\U }=R_{s\U }^*$.}
\label{simresvaryT}
\end{figure}

In Fig.~\ref{simresvaryT}, we plot the SU throughput versus the ARQ deadline $D$.
 We notice that, when $D=1$, all the IC mechanisms considered 
attain the same performance as "no FIC/BIC". In fact, this is a degenerate scenario where
 the PU does not employ ARQ, hence no redundancy is introduced in the primary transmission process.
Interestingly, by employing FIC or BIC, the performance improves as $D$ increases.
In fact, the larger $D$,
 the more the redundancy introduced by the primary ARQ process, hence the
more the opportunities for FIC/BIC at SUrx.
\section{Conclusion}
\label{sec:remarks}
In this work, we have investigated the idea of leveraging 
the redundancy introduced by the ARQ protocol
implemented by
a Primary User (PU)
to perform Interference Cancellation (IC) at the receiver of a Secondary User (SU) pair:
the SU receiver (SUrx), after decoding the PU message, exploits this knowledge 
to perform \emph{Forward IC} (FIC) in the following ARQ retransmissions
and \emph{Backward IC} (BIC) in the previous ARQ retransmissions,
 corresponding to SU
transmissions whose decoding failed due to severe interference from the PU.
We have employed a stochastic optimization approach to optimize the SU access strategy
which maximizes the average long-term SU throughput,
under constraints on the  average long-term PU throughput degradation
and SU power expenditure.
We have proved that the SU prioritizes its channel accesses in the states
where SUrx knows the PU message, thus enabling FIC,
and we have provided an
algorithm to optimally allocate
additional secondary access opportunities in the states 
where the PU message is unknown. 
Finally, we have shown numerically the throughput gain
of the proposed schemes.

\appendices
\section{}
\label{derivTW}
In this appendix, we compute $\bar T_s(\mu)$, $\bar W_s(\mu)$ and state properties
of $\bar W_s(\mu)$.

\begin{center}
\begin{table*}[!t]
\begin{center}
\small
\caption{Transition probabilities. $X\in\{\mathrm{A},\mathrm{I}\}$
denotes the action of the SU: active ($\mathrm{A}$) or idle ($\mathrm{I}$)}
\label{txprob}\centering
\resizebox{.8\textwidth}{!}{\hfill{}
\begin{tabular}{| c || c || c | c || c | c || c |}
\hline
\T\B\multirow{2}*{\backslashbox { From }{ To }} &
$(1,0,\U )$ & 
\multicolumn{2}{c||}{$(t+1,b,\U )$} & 
\multicolumn{2}{c||}{$(t+1,b+1,\U )$} & 
$(t+1,0,\K )$\\
\cline{2-7}\T\B
&
$X\in\{\mathrm{A},\mathrm{I}\}$ &
 $\mathrm{A}$ &
 $\mathrm{I}$ &
 $\mathrm{A}$ & 
 $\mathrm{I}$ &
$X\in\{\mathrm{A},\mathrm{I}\}$\\
\hline\T\B
$(t,b,\U )$&
 $1-q_{pp}^{(X)}$&
$q_{pp}^{(\mathrm{A})}(q_{ps}^{(\mathrm{A})}-p_{s,\mathrm{buf}})$&
$q_{pp}^{(\mathrm{I})}q_{ps}^{(\mathrm{I})}$&
$q_{pp}^{(\mathrm{A})}p_{s,\mathrm{buf}}$&
$0$&
$q_{pp}^{(X)}(1-q_{ps}^{(X)})$
 \\
\hline\T\B
$(D,b,\U )$ &
 $1$ &
 \multicolumn{2}{c||}{$0$} &
 \multicolumn{2}{c||}{$0$} &
$0$\\
\hline\T\B
$(t,0,\K )$& 
 $1-q_{pp}^{(X)}$ &
 \multicolumn{2}{c||}{$0$} &
 \multicolumn{2}{c||}{$0$}& 
$q_{pp}^{(X)}$
\\\hline\T\B
$(D,0,\K )$ &
 $1$ &
 \multicolumn{2}{c||}{$0$} &
 \multicolumn{2}{c||}{$0$} &
$0$\\\hline
\end{tabular}\hfill{}
}
\end{center}
\end{table*}
\end{center}

\begin{definition}\label{defGVD}
 We define $\mathbf G_\mu(t,b,\Phi)$, $\mathbf V_\mu(t,b,\Phi)$ and $\mathbf D_\mu(t,b,\Phi)$
as the average throughput, the average number of secondary channel accesses and 
the average number of time-slots, respectively,
 accrued starting from state $(t,b,\Phi)$ until the end of the
primary ARQ cycle under policy $\mu$ (\emph{i.e.}, until 
the recurrent state $(1,0,\U )$ is reached).
 Starting from
$\mathbf X_\mu(D+1,b,\Phi)=0,\ \forall b,\forall \Phi  \in\{\U ,\K \}$,\footnote{
We introduce the fictitious state $(D+1,b,\Phi)$ for notational convenience.
}
where $\mathbf X_\mu$ stands for $\mathbf G_\mu$, $\mathbf V_\mu$ or $\mathbf D_\mu$ 
(we write $\mathbf X\in\{\mathbf G,\mathbf V,\mathbf D\}$),
these are defined recursively as,
for $t\in\mathbb N(1,D)$, $b\in\mathbb N(0,t-1)$, 
\begin{align}\label{rec1}
\!\!\!\!\!\!\!\begin{array}{l}
 \mathbf X_\mu(t,b,\U )=x_\mu(t,b,\U )
\\\qquad+\mathrm{Pr}_\mu(t+1,b,\U |t,b,\U )\mathbf X_{\mu}(t+1,b,\U )\\
\qquad+\mathrm{Pr}_\mu(t+1,b+1,\U |t,b,\U )\mathbf X_{\mu}(t+1,b+1,\U )
\\\qquad+\mathrm{Pr}_\mu(t+1,0,\K |t,b,\U )\mathbf X_{\mu}(t+1,0,\K ),\\
\mathbf X_\mu(t,0,\K )=x_\mu(t,0,\K )
\\\qquad
+\left[q_{pp}^{(\mathrm{I})}+\mu(t,0,\K )(q_{pp}^{(\mathrm{A})}-q_{pp}^{(\mathrm{I})})\right]\mathbf X_\mu(t+1,0,\K ),
\end{array}
\end{align}
where $x_\mu(t,b,\Phi)$
 is the cost/reward accrued in state
 $(t,b,\Phi)$ and $\mathrm{Pr}_\mu(\cdot|\cdot)$ is the
 one-step transition probability, which can be derived
with the help of Table~\ref{txprob}
by taking the expectation with respect to the actions
\emph{SU idle} ($\I$, with probability $1-~\mu(t,b,\Phi)$)
and \emph{SU active} ($\A$, with probability $\mu(t,b,\Phi)$),
yielding
\begin{align}\label{onesteptxprob1}
&\mathrm{Pr}_\mu(t+1,b,\U |t,b,\U )=
\mu(t,b,\U)q_{pp}^{(A)}\left(q_{ps}^{(A)}-p_{s,\mathrm{buf}}\right)
\nonumber\\&\qquad
+(1-\mu(t,b,\U))q_{pp}^{(I)}q_{ps}^{(I)},
\\
\label{onesteptxprob2}
&\mathrm{Pr}_\mu(t+1,b+1,\U |t,b,\U )
=
\mu(t,b,\U)q_{pp}^{(A)}p_{s,\mathrm{buf}},
\\
\label{onesteptxprob3}
&
\mathrm{Pr}_\mu(t+1,0,\K |t,b,\U )
=
\mu(t,b,\U)q_{pp}^{(A)}\left(1-q_{ps}^{(A)}\right)
\nonumber\\&\qquad
+(1-\mu(t,b,\U))q_{pp}^{(I)}\left(1-q_{ps}^{(I)}\right). 
\end{align}
Namely,
if $\mathbf X=\mathbf G$ (throughput), then
$x_\mu(t,b,\Phi),\ \Phi\in\{\U,\K\}$, is the expected throughput accrued in state 
$(t,b,\Phi)$, and is given by
\begin{align}
&x_\mu(t,0,\K)=\mu(t,0,\K )T_{s\K }\triangleq g_\mu(t,0,\K ),\\
\label{bufpot}
&x_\mu(t,b,\U)=\mu(t,b,\U )T_{s\U }
\nonumber\\&
+\left[\mu(t,b,\U )(1-q_{ps}^{(\A)})+(1-\mu(t,b,\U ))(1-q_{ps}^{(\I)})\right]bR_{s\U }
\nonumber\\&\qquad \triangleq g_\mu(t,b,\U ),
\end{align}
where the second term in~(\ref{bufpot}) accounts for the successful recovery of the
$b$ SU messages from the buffered received signals via BIC,
 when the PU message is decoded by SUrx;
if $\mathbf X=\mathbf V$ (secondary access),
then $x_\mu(t,b,\Phi)$ is the SU access probability in state $(t,b,\Phi)$,
 \emph{i.e.},
\begin{align}\label{vinsta}
 x_\mu(t,b,\Phi)=\mu(t,b,\Phi)\triangleq v_\mu(t,b,\Phi);
\end{align}
finally, if $\mathbf X=\mathbf D$ (time-slots),
then
\begin{align}\label{dinsta}
 x_\mu(t,b,\Phi)=1\triangleq d_\mu(t,b,\Phi),
\end{align}
corresponding to one time-slot.
Moreover, we define, for $\mathbf X\in\{\mathbf G,\mathbf V,\mathbf D\}$,
\begin{align}\label{deriv}
 \mathbf X_{\mu}^\prime(\mathbf s)\triangleq\frac{\mathrm{d}\mathbf X_{\mu}^\prime(\mathbf s)}{\mathrm{d}\mu(\mathbf s)}.
\end{align}
\qed
\end{definition}

The number of visits
to state $(1,0,\U )$ up to time-slot $n$ is a renewal process~\cite{GallDSP}.
Each renewal interval (\emph{i.e.}, the ARQ sequence in which
the PU attempts to deliver a specific packet) has average duration
$\mathbf D_\mu(1,0,\U )$, over which the expected accrued SU throughput  
is $\mathbf G_\mu(1,0,\U )$,
and the expected number of secondary channel accesses is $\mathbf V_\mu(1,0,\U )$.
Then, the following lemma directly follows from the
strong law of large numbers for renewal-reward processes~\cite{GallDSP}.
\begin{lemma}\label{recursiveformulation}
The average long-term SU throughput and access rate are given by
$\bar T_s(\mu)=\frac{\mathbf G_\mu(1,0,\U )}{\mathbf D_\mu(1,0,\U )}$
and $\bar W_s(\mu)=\frac{\mathbf V_\mu(1,0,\U )}{\mathbf D_\mu(1,0,\U )}$,
respectively.
\qed
\end{lemma}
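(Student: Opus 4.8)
The plan is to read the claimed identities as a direct application of the renewal-reward theorem, the renewal epochs being the successive time-slots in which the chain visits state $(1,0,\U)$. First I would invoke the strong Markov property: since under a fixed $\mu\in\mathcal U$ the state process is a stationary Markov chain and $(1,0,\U)$ is visited infinitely often with probability one (as already argued from the ARQ deadline $D$, which forces a return to $(1,0,\U)$ within at most $D$ time-slots), the trajectory segments delimited by consecutive visits to $(1,0,\U)$ are independent and identically distributed. Hence the triples $(L_k,R_k^{(T)},R_k^{(W)})$, where $L_k$ is the length in time-slots of the $k$th cycle, $R_k^{(T)}$ the SU throughput accrued over it, and $R_k^{(W)}$ the number of secondary accesses performed in it, form an i.i.d. sequence; and since $L_k\leq D$ deterministically, all three quantities have finite mean.

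The key step is to identify these per-cycle means with the recursively defined quantities of Def.~\ref{defGVD}. To this end I would perform a first-step (law-of-total-expectation) analysis over the one-step transitions of Table~\ref{txprob}: conditioning on the state reached after one transition, and using that a transition into $(1,0,\U)$ terminates the current cycle so that its reward-to-go is not counted within the cycle, the expected total reward accrued from a generic state $(t,b,\Phi)$ until the cycle ends satisfies exactly the recursion~(\ref{rec1}), with the terminal convention $\mathbf X_\mu(D+1,\cdot,\cdot)=0$ encoding that $(D,b,\U)$ returns deterministically to $(1,0,\U)$. Choosing $x_\mu=g_\mu$ as in~(\ref{bufpot}) then yields $\mathbb E[R_k^{(T)}]=\mathbf G_\mu(1,0,\U)$; choosing $x_\mu=v_\mu$ as in~(\ref{vinsta}) yields $\mathbb E[R_k^{(W)}]=\mathbf V_\mu(1,0,\U)$; and choosing $x_\mu=d_\mu=1$ as in~(\ref{dinsta}) yields $\mathbb E[L_k]=\mathbf D_\mu(1,0,\U)$. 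The only substantive check here is that $g_\mu$ is indeed the conditional expectation, given the state, of the per-slot throughput appearing in~(\ref{Ts0}): in state $(t,b,\U)$ the direct term averages to $\mu(t,b,\U)T_{s\U}$ and the BIC term to $[\mu(t,b,\U)(1-q_{ps}^{(\A)})+(1-\mu(t,b,\U))(1-q_{ps}^{(\I)})]bR_{s\U}$, matching~(\ref{bufpot}); likewise the per-slot access indicator of~(\ref{Ps0}) averages to $\mu(t,b,\Phi)=v_\mu$.

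Finally I would apply the strong law of large numbers for renewal-reward processes: dividing the cumulative reward by the elapsed number of time-slots and letting the horizon grow, the time-average SU throughput converges almost surely to $\mathbf G_\mu(1,0,\U)/\mathbf D_\mu(1,0,\U)$ and the time-average number of secondary accesses to $\mathbf V_\mu(1,0,\U)/\mathbf D_\mu(1,0,\U)$. To match the definitions~(\ref{Ts0}) and~(\ref{Ps0}), which are expectations of the Ces\`aro averages, I would note that the per-slot throughput and the per-slot access indicator are uniformly bounded (the rates $R_{s\U},R_{s\K}$ and the buffer size $B$ are finite), so bounded convergence permits exchanging the limit and the expectation; independence of the limit from the initial state $\mathbf s_0$ follows from the recurrence of $(1,0,\U)$, exactly as argued for the metrics~(\ref{Ts0})--(\ref{Tp0}) in the main text.

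The main obstacle I anticipate is the bookkeeping in the first-step identification of $\mathbf G_\mu$, $\mathbf V_\mu$ and $\mathbf D_\mu$ with the cycle-reward means: one must be careful that the reward collected at the starting $(1,0,\U)$ is counted while the reward at the returning $(1,0,\U)$ is not, and that this asymmetry is precisely what the absence of a $(1,0,\U)$-continuation term in~(\ref{rec1}) encodes. Once this accounting is pinned down, the remainder is a textbook application of the renewal-reward theorem.
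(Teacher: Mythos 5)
Your proposal is correct and follows essentially the same route as the paper: the paper's proof also defines renewal epochs at the visits to $(1,0,\U)$, identifies the per-cycle means with $\mathbf G_\mu(1,0,\U)$, $\mathbf V_\mu(1,0,\U)$ and $\mathbf D_\mu(1,0,\U)$, and invokes the strong law of large numbers for renewal-reward processes. The extra bookkeeping you supply (the first-step verification that the recursion~(\ref{rec1}) with terminal condition $\mathbf X_\mu(D+1,\cdot,\cdot)=0$ computes the per-cycle means, and the bounded-convergence step reconciling almost-sure time averages with the expectations in~(\ref{Ts0})--(\ref{Ps0})) is left implicit in the paper but is consistent with it.
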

We have the following lemma.
\begin{lemma}
 \label{increasingWs}
We have
\begin{align}\label{p1}
 \frac{\mathrm d\bar{W}_s(\mu)}{\mathrm d\mu(\mathbf s)}\geq 0,\quad\forall \mathbf s\in\mathcal S,\ \forall\mu\in\mathcal U.
\end{align}
The inequality is strict if and only if state $\mathbf s$ is accessible from $(1,0,\U )$ under policy $\mu$, \emph{i.e.}, $\exists\ n>0:\mathrm{Pr}_\mu^{(n)}\left(\mathbf s|(1,0,\U )\right)~>~0$.
Moreover, for all $\mathbf s\in\mathcal S$ we have
\begin{align}
\label{p2}
\mathbf V_{\mu}^\prime(\mathbf s)-\mathbf D_{\mu}^\prime(\mathbf s)\bar W_s(\mu)>0.
\end{align}
\qed
\end{lemma}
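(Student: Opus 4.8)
The plan is to reduce both claims to the single inequality $\mathbf V_\mu^\prime(\mathbf s)-\mathbf D_\mu^\prime(\mathbf s)\bar W_s(\mu)>0$ by first establishing a policy-gradient identity for $\bar W_s(\mu)$, and then to prove that inequality by backward induction on the ARQ index $t$.

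First I would exploit the fact that the recursion~(\ref{rec1}) is layered forward in $t$: the value $\mu(\mathbf s)$ enters the system~(\ref{rec1}) only in the single equation for $\mathbf X_\mu(\mathbf s)$, through the reward $x_\mu(\mathbf s)$ and the outgoing transition probabilities~(\ref{onesteptxprob1})--(\ref{onesteptxprob3}), so that $\mathbf X_\mu(1,0,\U)$ depends on $\mu(\mathbf s)$ \emph{only} through $\mathbf X_\mu(\mathbf s)$. By the chain rule and the linearity of~(\ref{rec1}) in the $\mathbf X$-values,
\begin{align}
\frac{\mathrm d\mathbf X_\mu(1,0,\U)}{\mathrm d\mu(\mathbf s)}=P_\mu(\mathbf s)\,\mathbf X_\mu^\prime(\mathbf s),\quad \mathbf X\in\{\mathbf V,\mathbf D\},
\end{align}
where $P_\mu(\mathbf s)$ is the probability that a renewal cycle visits $\mathbf s$, i.e.\ the sum over directed paths $(1,0,\U)\to\mathbf s$ of the products of one-step transition probabilities (each state is visited at most once per cycle). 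The coefficient $P_\mu(\mathbf s)$ is the same for $\mathbf V$ and $\mathbf D$ since it depends only on the transition structure. Using Lemma~\ref{recursiveformulation}, the quotient rule, and $\pi_\mu(\mathbf s)=P_\mu(\mathbf s)/\mathbf D_\mu(1,0,\U)$, I obtain
\begin{align}
\frac{\mathrm d\bar W_s(\mu)}{\mathrm d\mu(\mathbf s)}=\pi_\mu(\mathbf s)\left[\mathbf V_\mu^\prime(\mathbf s)-\mathbf D_\mu^\prime(\mathbf s)\bar W_s(\mu)\right].
\end{align}
Since $\pi_\mu(\mathbf s)\geq 0$, with $\pi_\mu(\mathbf s)>0$ if and only if $\mathbf s$ is accessible from $(1,0,\U)$, both~(\ref{p1}) (with its strict/non-strict characterization) and~(\ref{p2}) follow at once once $\mathbf V_\mu^\prime(\mathbf s)-\mathbf D_\mu^\prime(\mathbf s)\bar W_s(\mu)>0$ is established for \emph{every} $\mathbf s$.

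Next I would rewrite this inequality as a statement about the \emph{advantage} of accessing over idling. Writing $\mathbf X_\mu^{(Z)}(\mathbf s)$ for the cost-to-go obtained by forcing action $Z\in\{\mathrm A,\mathrm I\}$ in $\mathbf s$ and following $\mu$ afterwards, we have $\mathbf X_\mu(\mathbf s)=\mu(\mathbf s)\mathbf X_\mu^{(\mathrm A)}(\mathbf s)+(1-\mu(\mathbf s))\mathbf X_\mu^{(\mathrm I)}(\mathbf s)$, hence $\mathbf X_\mu^\prime(\mathbf s)=\mathbf X_\mu^{(\mathrm A)}(\mathbf s)-\mathbf X_\mu^{(\mathrm I)}(\mathbf s)$. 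Introducing the relative value $\Psi_\mu(\mathbf s')\triangleq\mathbf V_\mu(\mathbf s')-\mathbf D_\mu(\mathbf s')\bar W_s(\mu)$, which satisfies $\Psi_\mu(1,0,\U)=0$ by Lemma~\ref{recursiveformulation}, and noting that the access reward contributes $+1$ while the idle reward contributes $0$,
\begin{align}
\mathbf V_\mu^\prime(\mathbf s)-\mathbf D_\mu^\prime(\mathbf s)\bar W_s(\mu)=1+\sum_{\mathbf s'}\left[\mathrm{Pr}_\mu(\mathbf s'|\mathbf s,\mathrm A)-\mathrm{Pr}_\mu(\mathbf s'|\mathbf s,\mathrm I)\right]\Psi_\mu(\mathbf s').
\end{align}
It thus suffices to show that the signed sum of successor relative-values cannot cancel the $+1$, which I would do by backward induction on $t$. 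For $t=D$ every successor is the regeneration state $(1,0,\U)$ (both for $\U$- and $\K$-states, by Table~\ref{txprob}), so the sum vanishes and the expression equals $1$. For $t<D$ the successors of $(t,b,\U)$ are $(t+1,b,\U)$, $(t+1,b+1,\U)$ and $(t+1,0,\K)$, with transition-probability differences read off from~(\ref{onesteptxprob1})--(\ref{onesteptxprob3}) (the $\K$-state recursion being the simpler single-successor case with difference $q_{pp}^{(\mathrm A)}-q_{pp}^{(\mathrm I)}$ toward $(t+1,0,\K)$). The inductive hypothesis would supply two-sided control on $\Psi_\mu$; writing $\Psi_\mu=(1-\bar W_s)\mathbf V_\mu-\bar W_s\,m_\mu$ with $m_\mu=\mathbf D_\mu-\mathbf V_\mu\geq 0$ the expected number of idle slots, and combining with $q_{pp}^{(\mathrm A)}\geq q_{pp}^{(\mathrm I)}$, $q_{ps}^{(\mathrm A)}\geq q_{ps}^{(\mathrm I)}$ and $\bar W_s\leq 1$, one checks the step. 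For instance at $t=D-1$, where $\Psi_\mu(D,b',\Phi)=\mu(D,b',\Phi)-\bar W_s\in[-\bar W_s,1-\bar W_s]$, this already yields the bound $1-q_{pp}^{(\mathrm A)}\geq 0$.

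I expect the inductive step to be the main obstacle. Unlike the buffer-increment term $q_{pp}^{(\mathrm A)}p_{s,\mathrm{buf}}\geq 0$, the differences toward $(t+1,b,\U)$ and $(t+1,0,\K)$---namely $q_{pp}^{(\mathrm A)}(q_{ps}^{(\mathrm A)}-p_{s,\mathrm{buf}})-q_{pp}^{(\mathrm I)}q_{ps}^{(\mathrm I)}$ and $q_{pp}^{(\mathrm A)}(1-q_{ps}^{(\mathrm A)})-q_{pp}^{(\mathrm I)}(1-q_{ps}^{(\mathrm I)})$---have indefinite sign, so a naive term-by-term bound fails and $|\Psi_\mu|$ can grow with the remaining cycle length $D-t$. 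The crux is therefore to choose an invariant that captures the cancellation among these terms, exploiting that the two actions share the successors $(t+1,b,\U)$ and $(t+1,0,\K)$ (so their relative values cannot sit at opposite extremes simultaneously) and that every cycle is pinned through $\Psi_\mu(1,0,\U)=0$. The strict inequality (rather than $\geq 0$) then follows from $q_{pp}^{(\mathrm A)}<1$, and~(\ref{p1})--(\ref{p2}) close via the gradient identity above.
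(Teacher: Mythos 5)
Your reduction is sound and matches the paper's skeleton: the gradient identity $\frac{\mathrm d\bar W_s(\mu)}{\mathrm d\mu(\mathbf s)}\propto \mathbf V_{\mu}^\prime(\mathbf s)-\mathbf D_{\mu}^\prime(\mathbf s)\bar W_s(\mu)$, with the proportionality factor being the probability of reaching $\mathbf s$ within a renewal cycle, is exactly how the paper argues (there written as $\mathrm{Pr}_\mu^{(t)}(t,b,\Phi|1,0,\U)\,\mathbf X_\mu^\prime(t,b,\Phi)$), and this correctly yields both~(\ref{p1}) and the strictness characterization once~(\ref{p2}) holds for all states. But the proof is not complete: you yourself flag that the backward induction on the relative value $\Psi_\mu(\mathbf s')=\mathbf V_\mu(\mathbf s')-\mathbf D_\mu(\mathbf s')\bar W_s(\mu)$ stalls because the transition-probability differences toward $(t+1,b,\U)$ and $(t+1,0,\K)$ have indefinite sign and $|\Psi_\mu|$ can grow with $D-t$, and you leave "choosing an invariant that captures the cancellation" as an open crux. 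That is the entire difficulty of the lemma, so the proposal has a genuine gap: a two-sided bound on $\Psi_\mu$ (your $\Psi_\mu=(1-\bar W_s)\mathbf V_\mu-\bar W_s m_\mu$ decomposition, the shared-successor observation, the pinning $\Psi_\mu(1,0,\U)=0$) does not by itself close the step for $t<D-1$.

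The paper's way around this is to eliminate the unknown multiplier $\bar W_s(\mu)$ \emph{before} inducting, rather than to control the signed sum you wrote down. It first proves $\mathbf D_{\mu}^\prime(\mathbf s)\geq 0$ (immediate for $\K$-states, since $\mathbf D_{\mu}^\prime(t,0,\K)=(q_{pp}^{(\mathrm{A})}-q_{pp}^{(\mathrm{I})})\mathbf D_\mu(t+1,0,\K)$; for $\U$-states via a separate induction on $C_\mu(t,b)\triangleq\mathbf D_{\mu}^\prime(t,b,\U)$ with base $C_\mu(D,b)=0$). Since $\bar W_s(\mu)\leq 1$, this gives $\mathbf V_{\mu}^\prime(\mathbf s)-\mathbf D_{\mu}^\prime(\mathbf s)\bar W_s(\mu)\geq \mathbf V_{\mu}^\prime(\mathbf s)-\mathbf D_{\mu}^\prime(\mathbf s)$, and the right-hand quantity ($A_\mu(t)$ for $\K$-states, $B_\mu(t,b)$ for $\U$-states) satisfies recursions in which, after regrouping, the indefinite-sign differences you identified cancel out: one gets $A_\mu(t)=1-q_{pp}^{(\mathrm{A})}+\mathrm{Pr}_\mu(t+2,0,\K|t+1,0,\K)A_\mu(t+1)$ and an analogous expression for $B_\mu(t,b)$ whose coefficients are genuine one-step transition probabilities (hence nonnegative) plus a nonnegative multiple of $A_\mu(t+1)$, with bases $A_\mu(D)=B_\mu(D,b)=1>0$. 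So the backward induction you attempted does go through, but only after replacing your target $\mathbf V^\prime-\mathbf D^\prime\bar W_s$ by $\mathbf V^\prime-\mathbf D^\prime$; no cancellation invariant on $\Psi_\mu$ is needed, and that substitution is the missing idea.
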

\begin{proof}
If state $\mathbf s$ is not accessible from state $(1,0,\U )$ under policy $\mu$,
 then the steady state distribution satisfies 
$\pi_\mu(\mathbf s)=0$, hence $\bar W_s(\mu)$ is unaffected by $\mu(\mathbf s)$. Otherwise,
from Lemma \ref{recursiveformulation} we have that
\begin{align}
\frac{\mathrm d\bar W_s(\mu)}{\mathrm d\mu(\mathbf s)}&=
\frac{
\frac{\mathrm d\mathbf V_\mu(1,0,\U )}{\mathrm d\mu(\mathbf s)}
-\frac{\mathrm d\mathbf D_\mu(1,0,\U )}{\mathrm d\mu(\mathbf s)}\bar W_s(\mu)
}{\mathbf D_\mu(1,0,\U )}
\nonumber\\&
\propto
\mathbf V_{\mu}^\prime(\mathbf s)-\mathbf D_{\mu}^\prime(\mathbf s)\bar W_s(\mu),
\end{align}
where $\propto$ represents equality up to a positive multiplicative factor,
and the right hand side holds since,
$\forall \mathbf X\in\{\mathbf V,\mathbf D\}$ and $(t,b,\Phi)\in\mathcal S$,
$
 \frac{\mathrm d\mathbf X_\mu(1,0,\U )}{\mathrm d\mu(t,b,\Phi)}=
\mathrm{Pr}_\mu^{(t)}\left(t,b,\Phi|1,0,\U\right)
\mathbf X_{\mu}^\prime(t,b,\Phi).
$

If $\mathbf s\in\mathcal S_{\K }$, \emph{i.e.},
$\mathbf s=(t,0,\K )$, we have
\begin{align}\label{at}
 \frac{\mathrm d\bar W_s(\mu)}{\mathrm d\mu(t,0,\K )}&
\propto
\mathbf V_{\mu}^\prime(t,0,\K )-\mathbf D_{\mu}^\prime(t,0,\K )\bar W_s(\mu)
\nonumber\\&\geq
\mathbf V_{\mu}^\prime(t,0,\K )-\mathbf D_{\mu}^\prime(t,0,\K )\triangleq A_\mu(t),
\end{align}
where, from (\ref{rec1}) we have used the fact that 
$\mathbf D_{\mu}^\prime(t,0,\K )=
+(q_{pp}^{(\mathrm{A})}-q_{pp}^{(\mathrm{I})})\mathbf D_\mu(t+1,0,\K )\geq 0$
 and $\bar W_s(\mu)\leq 1$.

We now prove by induction that $A_\mu(t)>0,\forall\ t\in\mathbb N(1,T)$, 
so that (\ref{p1}) and (\ref{p2}) follow for $\mathbf s\in\mathcal S_{\K }$.
From (\ref{rec1}), for $t<D$, after algebraic manipulation
we obtain
\begin{align}\label{x}
& A_\mu(t)=1+(q_{pp}^{(\mathrm{A})}-q_{pp}^{(\mathrm{I})})
[\mathbf V_\mu(t+1,0,\K )-\mathbf D_\mu(t+1,0,\K )]\nonumber\\&
 =1-q_{pp}^{(\mathrm{A})}+\mathrm{Pr}_\mu(t+2,0,\K |t+1,0,\K )A_\mu(t+1).
\end{align}
Since $A_\mu(D)=1>0$, we obtain $A_\mu(t)>0$  by induction.

If $\mathbf s\in\mathcal S_{\U}$, \emph{i.e.},
$\mathbf s=(t,b,\U)$, we have
\begin{align}
\frac{\mathrm d\bar W_s(\mu)}{\mathrm d\mu(t,b,\U)}
\propto
\mathbf V_{\mu}^\prime(t,b,\U)-\mathbf D_{\mu}^\prime(t,b,\U)\bar W_s(\mu).
\end{align}
We prove that 
$\mathbf V_{\mu}^\prime(t,b,\U)-\mathbf D_{\mu}^\prime(t,b,\U)\bar W_s(\mu)>0$ in two steps,
so that (\ref{p1}) and (\ref{p2}) follow for $\mathbf s\in\mathcal S_{\U}$.
First,
we prove that 
$C_\mu(t,b)\triangleq\mathbf D_{\mu}^\prime(t,b,\U)\geq 0$. Then,
since $\bar W_s(\mu)\leq 1$, we obtain
\begin{align}
\frac{\mathrm d\bar W_s(\mu)}{\mathrm d\mu(t,b,0)}
&\propto
\mathbf V_{\mu}^\prime(t,b,\U)-C_\mu(t,b)\bar W_s(\mu)
\nonumber\\&
\geq
\mathbf V_{\mu}^\prime(t,b,\U)-\mathbf D_{\mu}^\prime(t,b,\U)
 \triangleq B_\mu(t,b).
\end{align}
Finally, we prove that $B_\mu(t,b)>0$.

{\bf Proof of $C_\mu(t,b)\geq 0$}: from (\ref{rec1}), for $t<D$ we have
\begin{align}
C_\mu(t,b)=&
(q_{pp}^{(\mathrm{A})}(1-q_{ps}^{(\mathrm{A})})
-q_{pp}^{(\mathrm{I})}(1-q_{ps}^{(\mathrm{I})})
)
\mathbf D_\mu(t+1,0,\K)\nonumber\\&
+(
q_{pp}^{(\mathrm{A})}(q_{ps}^{(\mathrm{A})}-p_{s,\mathrm{buf}})
-q_{pp}^{(\mathrm{I})}q_{ps}^{(\mathrm{I})}
)\mathbf D_\mu(t+1,b,\U)
\nonumber\\&
+q_{pp}^{(\mathrm{A})}p_{s,\mathrm{buf}}\mathbf D_\mu(t+1,b+1,\U).
\end{align}
Using the recursions (\ref{rec1}) and rearranging the terms, we obtain the recursive expression
\begin{align*}
&C_\mu(t,b)\!=\!\mathrm{Pr}_\mu(t\!+\!2,b\!+\!2,\U|t+1,b+1,\U)C_\mu(t+1,b+1)
\nonumber\\&
+q_{pp}^{(\mathrm{A})}\!-\!q_{pp}^{(\mathrm{I})}
+\mathrm{Pr}_\mu(t+2,b,\U|t+1,b,\U)C_\mu(t+1,b)
\\&
+\left[
(1-\mu(t+1,0,\K))q_{pp}^{(\mathrm{I})}(1-q_{ps}^{(\mathrm{I})})
\right.
\nonumber\\&
\left.+
\mu(t+1,0,\K)q_{pp}^{(\mathrm{A})}(1-q_{ps}^{(\mathrm{A})})
\right]
(q_{pp}^{(\mathrm{A})}-q_{pp}^{(\mathrm{I})})
\mathbf D_\mu(t+2,0,\K).
\nonumber
\end{align*}
Since $C_\mu(D,b)=0,\ \forall\ b\in\mathbb N(0,D-1)$,
it follows by induction on $t$ that $C_\mu(b,t)\geq 0$.

{\bf Proof of $B_\mu(t,b)>0$}: From (\ref{rec1}), for $t<D$ we obtain
the following recursive expression for $B_\mu(t,b)$, after algebraic manipulation,
\begin{align}
&B_\mu(t,b)=
1-q_{pp}^{(\mathrm{A})}+\mathrm{Pr}_\mu(t+2,b,\U|t+1,b,\U)B_\mu(t+1,b)
\nonumber\\&
+\mathrm{Pr}_\mu(t+2,b+2,\U|t+1,b+1,\U)B_\mu(t+1,b+1)
\nonumber\\&
+\left[(1-\mu(t+1,0,\K))q_{pp}^{(\mathrm{I})}(1-q_{ps}^{(\mathrm{I})})
\right.\nonumber\\&
\left.+\mu(t+1,0,\K)q_{pp}^{(\mathrm{A})}(1-q_{ps}^{(\mathrm{A})})
\right]A_\mu(t+1)
,
\end{align}
here $A_\mu(t)$ is defined in (\ref{at}).
The result follows by induction, since $B_\mu(D,b)=1>0$
and $A_\mu(t+1)>0$.
\end{proof}

\section{}
\label{rigeta}
In this appendix, we give a rigorous definition of secondary access efficiency,
 thus complementing Def.~\ref{definitioneta}.
Moreover, in Lemma~\ref{computeeta}, we derive it.
We recall that $\mathrm{Pr}_\mu^{(n)}\left(\mathbf s|\mathbf s_0\right)$ is
 the $n$-step transition probability of the chain
from $\mathbf s_0$ to $\mathbf s$.
\begin{definition}
\label{defrigeta}
Let $\tilde\mu\in\mathcal U$ be a policy such that $\exists n>0:\mathrm{Pr}_{\tilde\mu}^{(n)}\left(\mathbf s|(1,0,\U )\right)>0$,
and $\mu_\upsilon=(1-\upsilon)\mu+\upsilon\tilde\mu$, where $\upsilon~\in~(0,1]$, $\mu\in\mathcal U$.
We define the \emph{secondary access efficiency} under policy $\mu$ in state $\mathbf s\in\mathcal S$ as
\begin{align*}
 \eta_\mu\left(\mathbf s\right)=\lim_{\upsilon\to 0^+}
\left.\frac{\frac{\mathrm d\bar T_s(\mu_\upsilon)}{\mathrm d\mu_\upsilon(\mathbf s)}}{\frac{\mathrm d\bar W_s(\mu_\upsilon)}{\mathrm d\mu_\upsilon(\mathbf s)}}
\right|_{\mu_\upsilon}.
\end{align*}
\qed
\end{definition}
\begin{remark}
Notice that the condition $\exists\ n>0:\ \mathrm{Pr}_{\tilde\mu}^{(n)}\left(\mathbf s|(1,0,\U )\right)>0$ guarantees that
state $\mathbf s$ is  accessible from state $(1,0,\U )$ under policy $\mu_\upsilon$, for $\upsilon>0$.
  Under this condition, $\frac{\mathrm d\bar W_s(\mu)}{\mathrm d\mu(\mathbf s)}>0$ (Lemma~\ref{increasingWs} in App. \ref{derivTW}),
hence the fraction within the limit is well defined for $\upsilon>0$ and in the limit $\upsilon\to 0^+$.
One such policy $\tilde\mu$ is $\tilde\mu(\mathbf s)=0.5,\forall\mathbf s\in\mathcal S$. 
\qed
\end{remark}

Using Lemma~\ref{recursiveformulation} and  Def.~\ref{defGVD}
in App.~\ref{derivTW}
 and Def.~\ref{defrigeta},
 $\eta_\mu\left(\mathbf s\right)$ can be derived according
to the following lemma. 
\begin{lemma}\label{computeeta}
We have
$
 \eta_\mu\left(\mathbf s\right)=
\frac{
\mathbf G_{\mu}^\prime(\mathbf s)
-\mathbf D_{\mu}^\prime(\mathbf s)\bar T_s(\mu)
}
{
\mathbf V_{\mu}^\prime(\mathbf s)
-\mathbf D_{\mu}^\prime(\mathbf s)\bar W_s(\mu)
}. 
$
\qed
\end{lemma}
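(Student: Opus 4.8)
The plan is to reduce the quantity in Def.~\ref{defrigeta} to the renewal-reward representation of Lemma~\ref{recursiveformulation} and then exploit the layered (acyclic) structure of the ARQ chain to factor out a common accessibility term that cancels in the ratio. First I would fix a state $\mathbf s=(t,b,\Phi)$ and work under the perturbed policy $\mu_\upsilon$ of Def.~\ref{defrigeta}, for which $\mathbf s$ is accessible from $(1,0,\U)$ so that the derivatives below are well defined. Writing (here and below) $\mu$ for $\mu_\upsilon$ and differentiating the two quotients $\bar T_s(\mu)=\mathbf G_\mu(1,0,\U)/\mathbf D_\mu(1,0,\U)$ and $\bar W_s(\mu)=\mathbf V_\mu(1,0,\U)/\mathbf D_\mu(1,0,\U)$ by the quotient rule gives
\begin{align*}
\frac{\mathrm d\bar T_s}{\mathrm d\mu(\mathbf s)}
&=\frac{1}{\mathbf D_\mu(1,0,\U)}\left[\frac{\mathrm d\mathbf G_\mu(1,0,\U)}{\mathrm d\mu(\mathbf s)}-\bar T_s(\mu)\frac{\mathrm d\mathbf D_\mu(1,0,\U)}{\mathrm d\mu(\mathbf s)}\right],\\
\frac{\mathrm d\bar W_s}{\mathrm d\mu(\mathbf s)}
&=\frac{1}{\mathbf D_\mu(1,0,\U)}\left[\frac{\mathrm d\mathbf V_\mu(1,0,\U)}{\mathrm d\mu(\mathbf s)}-\bar W_s(\mu)\frac{\mathrm d\mathbf D_\mu(1,0,\U)}{\mathrm d\mu(\mathbf s)}\right].
\end{align*}

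Second, I would invoke the chain-rule identity already established in the proof of Lemma~\ref{increasingWs}, namely $\frac{\mathrm d\mathbf X_\mu(1,0,\U)}{\mathrm d\mu(\mathbf s)}=\mathrm{Pr}_\mu^{(t)}(\mathbf s|1,0,\U)\,\mathbf X_\mu^\prime(\mathbf s)$, and note that it holds verbatim for all three cost structures $\mathbf X\in\{\mathbf G,\mathbf V,\mathbf D\}$ with the \emph{same} prefactor $\mathrm{Pr}_\mu^{(t)}(\mathbf s|1,0,\U)$. The justification is the acyclic structure encoded in~(\ref{rec1}): since the ARQ counter increases deterministically at every transition until the cycle resets, $\mathbf s$ is hit at most once per renewal interval, and by the strong Markov property at its hitting time one may write $\mathbf X_\mu(1,0,\U)=R_0+\mathrm{Pr}_\mu^{(t)}(\mathbf s|1,0,\U)\,\mathbf X_\mu(\mathbf s)$, where the reward $R_0$ accrued before reaching $\mathbf s$ and the hitting probability $\mathrm{Pr}_\mu^{(t)}(\mathbf s|1,0,\U)$ are both independent of $\mu(\mathbf s)$. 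Differentiating yields the identity, and the prefactor does not depend on which of $\mathbf G,\mathbf V,\mathbf D$ is used.

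Substituting into the two displayed derivatives, the common factor $\mathrm{Pr}_\mu^{(t)}(\mathbf s|1,0,\U)/\mathbf D_\mu(1,0,\U)$ appears in both numerator and denominator of the defining ratio; for $\upsilon>0$ it is strictly positive by the accessibility of $\mathbf s$ under $\mu_\upsilon$ (Lemma~\ref{increasingWs}), so it cancels and leaves
\begin{align*}
\left.\frac{\frac{\mathrm d\bar T_s(\mu_\upsilon)}{\mathrm d\mu_\upsilon(\mathbf s)}}{\frac{\mathrm d\bar W_s(\mu_\upsilon)}{\mathrm d\mu_\upsilon(\mathbf s)}}\right|_{\mu_\upsilon}
=\frac{\mathbf G_{\mu_\upsilon}^\prime(\mathbf s)-\mathbf D_{\mu_\upsilon}^\prime(\mathbf s)\bar T_s(\mu_\upsilon)}{\mathbf V_{\mu_\upsilon}^\prime(\mathbf s)-\mathbf D_{\mu_\upsilon}^\prime(\mathbf s)\bar W_s(\mu_\upsilon)}.
\end{align*}
Finally I would pass to the limit $\upsilon\to 0^+$. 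By the finite recursion~(\ref{rec1}), each $\mathbf G_\mu,\mathbf V_\mu,\mathbf D_\mu$ and the derivatives $\mathbf X_\mu^\prime(\mathbf s)$ are polynomials in the entries of $\mu$, while $\bar T_s,\bar W_s$ are ratios of such polynomials with denominator $\mathbf D_\mu(1,0,\U)\ge 1>0$; hence the right-hand side is continuous in the policy, and since $\mu_\upsilon\to\mu$ it converges to the same expression evaluated at $\mu$, which is Lemma~\ref{computeeta}.

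The main obstacle is the second step: one must be certain that the identical accessibility factor multiplies $\mathbf G^\prime$, $\mathbf V^\prime$ and $\mathbf D^\prime$, because it is precisely this common factor whose cancellation makes the ratio well defined even when $\mathbf s$ is \emph{inaccessible} under the unperturbed $\mu$ (there both raw derivatives vanish and the ratio is the indeterminate $0/0$, which is exactly why Def.~\ref{defrigeta} introduces the perturbation $\mu_\upsilon$). The cancellation is legitimate in the limit because, by~(\ref{p2}) in Lemma~\ref{increasingWs}, the denominator $\mathbf V_\mu^\prime(\mathbf s)-\mathbf D_\mu^\prime(\mathbf s)\bar W_s(\mu)$ is strictly positive for every $\mathbf s\in\mathcal S$, so the limiting expression never degenerates.
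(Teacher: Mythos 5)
Your proposal is correct and follows essentially the same route as the paper: the paper derives Lemma~\ref{computeeta} by combining the renewal-reward quotients of Lemma~\ref{recursiveformulation} with the chain-rule identity $\frac{\mathrm d\mathbf X_\mu(1,0,\U)}{\mathrm d\mu(\mathbf s)}=\mathrm{Pr}_\mu^{(t)}(\mathbf s|1,0,\U)\,\mathbf X_\mu^\prime(\mathbf s)$ from the proof of Lemma~\ref{increasingWs}, cancelling the common accessibility factor under the perturbed policy $\mu_\upsilon$ of Def.~\ref{defrigeta} and letting $\upsilon\to 0^+$, with the denominator kept nondegenerate by~(\ref{p2}). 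Your additional justifications (the one-visit-per-renewal structure behind the common prefactor, and continuity of the polynomial quantities in the limit) merely make explicit steps the paper leaves implicit.
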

\begin{remark}
This is well defined, since 
$ \mathbf V_{\mu}^\prime(\mathbf s)-\mathbf D_{\mu}^\prime(\mathbf s)\bar W_s(\mu)>0$
from Lemma~\ref{increasingWs} in App. \ref{derivTW}.
\qed
\end{remark}

 \section{}
\label{proofofalgo}
 \begin{proof}[Proof of Theorem~\ref{algooptimal}]
In the first part of the theorem, we prove that,
 by initializing Algorithm~\ref{algorithm}
with the idle policy $\mu^{(0)}$, $\mu^{(0)}(\mathbf s)=0,\ \forall\mathbf s\in\mathcal S$,
and with the set of idle states
$\mathcal S_{\mathrm{idle}}^{(0)}\equiv\mathcal S$,
we obtain an optimal policy.
In the second part of the proof, we prove the optimality of the specific initialization of Algorithm~\ref{algorithm}
for the high SU access rate regime.

 Let $\tilde\mu$ be a policy under which all states
 $\mathbf s\in\mathcal S$ are  accessible from state $(1,0,\U )$,
\emph{i.e.},  $\exists\ n>0:\ \mathrm{Pr}_{\tilde\mu}^{(n)}\left(\mathbf s|(1,0,\U )\right)>0$.
One  such policy is $\tilde\mu(\mathbf s)=0.5,\ \forall\mathbf s\in\mathcal S$.
Consider a modified Markov Decision Process, parameterized by $\upsilon\in (0,1)$,
obtained by applying the policy $(1-\upsilon)\mu+\upsilon\tilde\mu$ to the original system, where $\mu\in\mathcal U$.
Since $\mu,\tilde\mu\in\mathcal U$ and $\upsilon\in (0,1)$, it follows that 
$(1-\upsilon)\mu+\upsilon\tilde\mu\in\mathcal U$.
We define $\bar{T}_s(\mu,\upsilon)\triangleq\bar{T}_s((1-\upsilon)\mu+\upsilon\tilde\mu)$ and
$\bar{W}_s(\mu,\upsilon)\triangleq\bar{W}_s((1-\upsilon)\mu+\upsilon\tilde\mu)$,
 and we study the problem
\begin{align}
\label{refopt3}
\mu^{*(\upsilon)}=&
 {\arg\max}_{\mu\in\mathcal U}\bar{T}_s(\mu,\upsilon)\ \mathrm{s.t.\  }\bar{W}_s(\mu,\upsilon)\leq\epsilon_{\mathrm{W}},
\end{align}
where the parameter $\upsilon$ is
small enough to guarantee a feasible problem, \emph{i.e.}, $\exists\ \mu\in\mathcal U:\bar{W}_s(\mu,\upsilon)\leq\epsilon_{\mathrm{W}}$.
(\ref{refopt2}) is obtained in the limit $\upsilon\to 0^+$.
Notice that, $\forall\ \mu\in\mathcal U$,
 under policy $(1-\upsilon)\mu+\upsilon\tilde\mu$,
 all the states $\mathbf s\in\mathcal S$
are accessible from state $(1,0,\U )$, and the Markov chain is irreducible.
Hence, from Lemma~\ref{increasingWs} in App. \ref{derivTW},
$\bar W_s(\mu,\upsilon)$ is a strictly increasing function of $\mu(\mathbf s),\ \forall\mathbf s\in\mathcal S$.
 This is an important assumption in the following proof.

Let $\mathcal D\subset\mathcal U$ be the
set of all the deterministic policies, and 
$\mathcal G_{\upsilon}=\left\{\left(\bar W_s(\mu,\upsilon),
\bar T_s(\mu,\upsilon)\right),\mu\in\mathcal D\right\}$.
With the help of Fig.~\ref{fig:conv}, for any $\mu\in\mathcal U$, we have that 
$\left(\bar W_s(\mu,\upsilon),\bar T_s(\mu,\upsilon)\right)\in\mathrm{conv}(\mathcal G_\upsilon)$, where $\mathrm{conv}(\mathcal G_\upsilon)$ is the convex hull of the set $\mathcal G_\upsilon$.
In particular, for the optimal policy we have
$\left(\bar W_s(\mu^{*(\upsilon)},\upsilon),\bar T_s(\mu^{*(\upsilon)},\upsilon)\right)\in\mathrm{bd}(\mathcal G_\upsilon)$,
 where $\mathrm{bd}(\mathcal G_\upsilon)$ denotes the boundary of $\mathrm{conv}(\mathcal G_\upsilon)$.

\begin{figure}[t]
\centering  
\includegraphics[width=\linewidth,trim = 30mm 5mm 30mm 0mm,clip=true]{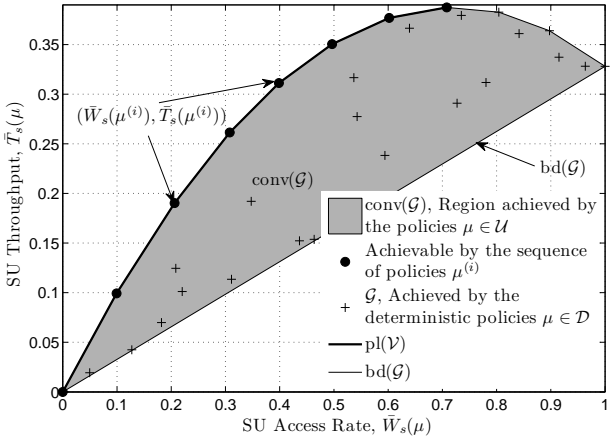}
\caption{Geometric interpretation of problem~(\ref{refopt3})}
\label{fig:conv}
\end{figure}

Algorithm~\ref{algorithm} determines the sequence of vertices of the polyline
 $\mathrm{bd}(\mathcal G_\upsilon)$ in the limit $\upsilon\to 0^+$ (bold line in Fig.~\ref{fig:conv}).
For $\upsilon>0$,
starting from the leftmost vertex of $\mathrm{bd}(\mathcal G_\upsilon)$, achieved
 by the \emph{idle} policy $\mu^{(0)}(\mathbf s)=0,\ \forall\mathbf s\in\mathcal S$
(this follows from the fact that  $\bar W_s(\mu,\upsilon)$
is a strictly increasing function of $\mu(\mathbf s)$, hence it is minimized
by the idle policy),
the algorithm determines iteratively the next vertex of $\mathrm{bd}(\mathcal G_\upsilon)$ as the maximizer of the slope
\begin{align}\label{maxim}
\mu^{(i+1)}=\!\!\!\!\!\!\!\underset{\mu\in\mathcal D:\bar W_s(\mu,\upsilon)>\bar W_s(\mu^{(i)},\upsilon)}{\arg\max} 
\frac{\bar T_s(\mu,\upsilon)-\bar T_s(\mu^{(i)},\upsilon)}{\bar W_s(\mu,\upsilon)-\bar W_s(\mu^{(i)},\upsilon)}.
\end{align}
Since~(\ref{refopt2}) has one constraint, 
the optimal policy $\mu^{*(\upsilon)}$ is randomized in one state~\cite{Ross1989},
and hence each segment on the boundary $\mathrm{bd}(\mathcal G_\upsilon)$ between pairs 
$(\bar W_s(\mu^{(i)},\upsilon),\bar T_s(\mu^{(i)},\upsilon))$
 achievable with deterministic policies is attained by a policy
  that is randomized in only one state.
It follows that $\mu^{(i)}$ and $\mu^{(i+1)}$ differ in only one state.
Moreover, in~(\ref{maxim}) the maximization is over $\mu\in\mathcal D$ such that $\bar W_s(\mu,\upsilon)>\bar W_s(\mu^{(i)},\upsilon)$,
\emph{i.e.}, since $\bar W_s(\mu,\upsilon)$ is a strictly increasing function of $\mu(\mathbf s)$
and $\mu^{(i+1)}$
and $\mu^{(i)}$ differ in only one position,
$\mu^{(i+1)}$ is obtained from $\mu^{(i)}$ by allocating one more secondary access
to a state which is idle under $\mu^{(i)}$.
In~(\ref{maxim}), the maximization is thus over
$\left\{\mu^{(i)}+\delta_{\mathbf s}:\mathbf s\in\mathcal S_{\mathrm{idle}}^{(i)}\right\}$,
 and, after algebraic manipulation, $ \mu^{(i+1)}$ in~(\ref{maxim}) maximizes
\begin{align*}
\max_{\mathbf s\in\mathcal S_{\mathrm{idle}}^{(i)}}
\!\!\frac{\bar T_s(\mu^{(i)}\!\!+\delta_{\mathbf s},\upsilon)-\bar T_s(\mu^{(i)}\!,\upsilon)}
{\bar W_s(\mu^{(i)}\!\!+\delta_{\mathbf s},\upsilon)-\bar W_s(\mu^{(i)}\!,\upsilon)}
\!=\!\!\max_{\mathbf s\in\mathcal S_{\mathrm{idle}}^{(i)}}\!\!\eta_{(1-\upsilon)\mu^{(i)}+\upsilon\tilde\mu}(\mathbf s).
\end{align*}
\emph{Stage $i$} of the algorithm is thus proved.
If $\eta_{(1-\upsilon)\mu^{(i)}+\upsilon\tilde\mu}(\mathbf s)\leq 0$,
we have $\bar W_s\left(\mu^{(i)}+\delta_{\mathbf s},\upsilon\right)>\bar W_s\left(\mu^{(i)},\upsilon\right)$ and 
$\bar T_s\left(\mu^{(i)}+\delta_{\mathbf s},\upsilon\right)\leq\bar T_s\left(\mu^{(i)},\upsilon\right)$.
If this condition holds $\forall\ \mathbf s\in\mathcal S_{\mathrm{idle}}^{(i)}$,
 any next vertex of the polyline $\mathrm{bd}(\mathcal G_\upsilon)$ yields
a decrease of the SU throughput and a larger SU access rate, hence a sub-optimal set of policies,
and the algorithm stops.

By construction, the algorithm returns a sequence of policies $(\mu^{(i)},i\in\mathbb N(0,N-1))$, characterized by strictly increasing values of
the SU throughput and of the SU access rate.
The optimal policy belongs to the polyline with vertices 
$\mathcal V_\upsilon\equiv\{(\bar W_s(\mu^{(i)},\upsilon),\bar T_s(\mu^{(i)},\upsilon)),i\in\mathbb N(0,N-1)\}$,
denoted by
$\mathrm{pl}(\mathcal V_\upsilon)$ in Fig.~\ref{fig:conv}.
Then,~(\ref{refopt2}) becomes equivalent to
$T_s^{*(\upsilon)}=\underset{(W_s,T_s)\in\mathcal V_\upsilon}{\max}T_s \mathrm{\ s.t.\  }W_s\leq \epsilon_{\mathrm{W}},
$
whose solution is given in the last step of Algorithm~\ref{algorithm}.
The result finally follows for $\upsilon\to 0^+$.

To conclude, we prove the initialization of Algorithm~\ref{algorithm} for the high SU access rate. 
Let $(\mu^{(0)},\dots,\mu^{(N-1)})$ and $(\mathbf s^{(0)},\dots,\mathbf s^{(N-1)})$ be the sequence of deterministic policies
and of states returned by Algorithm \ref{algorithm},
obtained by initializing the algorithm as in the first part of the proof.
Let $\mathcal D_0\equiv\left\{\mu\in\mathcal D:\mu(t,0,0)=0\ \forall\ t\in\mathbb N(1,T)\right\}$,
$\tilde{\mathcal D}_0\equiv\left\{\mu\in\mathcal D_0:\mu(\mathbf s)=1,\ \forall\mathbf s\in\mathcal S_{\K}\right\}$,
and $N_0\triangleq\max\{i\in\{0,\dots,N-1\}:\bar W_s(\mu^{(i)})<\epsilon_{\mathrm{th}}\}$.
We prove that $\mu^{(N_0+1)}\in\tilde{\mathcal D}_0$, \emph{i.e.},
$\mu^{(N_0+1)}(\mathbf s)=1,\ \forall \mathbf s\in\mathcal S_{\K}$.
From the definition of $\tilde{\mathcal D}_0$ and the construction of the algorithm,
it follows that, for $i>N_0$,
$\mu^{(i)}(\mathbf s)=1,\ \forall\ \mathbf s\in\mathcal S_{\K}$.
Moreover, from Lemma \ref{D0}, $\bar W_s(\mu^{(N_0+1)})=\epsilon_{\mathrm{th}}$.
Hence, for the high SU access rate $\epsilon>\epsilon_{th}$,
 the optimal policy $\mu^{*}$ obeys
$\mu^{*}(\mathbf s)=1,\forall\mathbf s\in\mathcal S_{\K}$.
Then, letting $\mathcal U_1\equiv\{\mu\in\mathcal U:\mu(\mathbf s)=1,\forall\mathbf s\in\mathcal S_{\K}\}$,
 the optimization problem (\ref{refopt2}) can be restricted to the set of randomized policies
$\mu\in\mathcal U_1\subset\mathcal U$ when $\epsilon>\epsilon_{th}$.
Equivalently,
secondary accesses taking place in $\mathcal S_{\U }$ can be obtained
by initializing the algorithm with
$\mu^{(0)}(\mathbf s)=0,\ \mathbf s\in\mathcal S_{\U}$,
$\mu^{(0)}(\mathbf s)=1,\ \mathbf s\in\mathcal S_{\K}$,
$S_{\mathrm{idle}}^{(0)}\equiv\mathcal S_{\U}$.
The initialization of Algorithm~\ref{algorithm} is thus proved.

{\bf Proof of $\mu^{(N_0+1)}\in\tilde{\mathcal D}_0$}:
We prove by induction that $\mu^{(i)}\in\mathcal D_0\setminus\tilde{\mathcal D}_0,\ \forall i\leq N_0$ and
 $\mu^{(N_0+1)}\in\tilde{\mathcal D}_0$.
Assume that, for some $i\geq 0$, $\mu^{(j)}\in\mathcal D_0\setminus\tilde{\mathcal D}_0,\ \forall j\leq i$.
From Lemma \ref{D0}, it follows that $N_0\geq i$.
This clearly holds for $i=0$.
We show that this implies that
 either $\mu^{(i+1)}\in\mathcal D_0\setminus\tilde{\mathcal D}_0$, hence $N_0>i$,
thus proving the induction step, or
 $\mu^{(i+1)}\in\tilde{\mathcal D}_0$, hence $N_0=i$,
thus proving the property.
The result follows since $N_0\leq 1+|\mathcal S|<\infty$ (\emph{i.e.}, $i=N_0$ is reached within a finite number of steps).

From Lemma \ref{useful}, $\eta_{\mu^{(i)}}(\mathbf s)=T_{s\K}>0,
\forall\mathbf s\in\mathcal S_{\K}\cap\mathcal S_{\mathrm{idle}}^{(i)}$
and $\eta_{\mu^{(i)}}(t,0,\U)<T_{s\K},\forall t\in\mathbb N(1,D)$,
hence, from the main iteration stage of the algorithm it follows that
$\mu^{(i+1)}\in\mathcal D_0$.
In particular, if 
$\mu^{(i+1)}\in\mathcal D_0\setminus\tilde{\mathcal D}_0$,
then $N_0>i$ from Lemma \ref{D0}.
On the other hand, if $\mu^{(i+1)}\in\tilde{\mathcal D}_0$,
then, from Lemma \ref{D0}, $N_0=i$. The property is thus proved.
\end{proof}

\begin{lemma}\label{D0}
$ $\\$\bar W_s(\mu)<\epsilon_{\mathrm{th}},\forall\mu\in\mathcal D_0\setminus\tilde{\mathcal D}_0$
and
$\bar W_s(\mu)=\epsilon_{\mathrm{th}},\forall\mu\in\tilde{\mathcal D}_0$.\qed
\end{lemma}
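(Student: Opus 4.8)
The plan is to exploit the fact that policies in $\mathcal D_0$ never fill the buffer, so that the access rate $\bar W_s$ is governed entirely by the behavior on $\mathcal S_{\K}$. First I would establish the following structural claim: under any $\mu\in\mathcal D_0$, starting from the recurrent state $(1,0,\U)$, the buffer state stays equal to $0$ on every sample path. Indeed, from Table~\ref{txprob} the buffer is incremented only through the transition $(t,b,\U)\to(t+1,b+1,\U)$, which carries positive probability only when the SU is \emph{active}; since $\mu\in\mathcal D_0$ prescribes the idle action in every state $(t,0,\U)$, a trivial induction on $t$ shows that the process, initialized with $b=0$, remains among the states $(t,0,\U)$ and $(t,0,\K)$ and never reaches any $(t,b,\U)$ with $b>0$. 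Consequently, for $\mu\in\mathcal D_0$ the states $(t,b,\U)$ with $b>0$ are inaccessible, $\pi_\mu$ is supported on $\{(t,0,\U)\}\cup\{(t,0,\K)\}$, and $\bar W_s(\mu)$ depends only on the values $\mu(t,0,\K)$, $t\in\mathbb N(2,D)$ (the contribution of the idle states $(t,0,\U)$ being null).

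For the equality statement, I would argue that every $\mu\in\tilde{\mathcal D}_0$ induces, on its accessible states, exactly the same Markov chain as the policy of Definition~\ref{threshold}. That policy is idle throughout $\mathcal S_{\U}$ and active throughout $\mathcal S_{\K}$; by the structural claim it too keeps $b=0$ and lives on $\{(t,0,\U)\}\cup\{(t,0,\K)\}$. A policy $\mu\in\tilde{\mathcal D}_0$ coincides with it on these accessible states -- idle on $(t,0,\U)$, active on $(t,0,\K)$ -- and differs only on the inaccessible states $(t,b,\U)$, $b>0$, which do not affect the transition kernel restricted to the recurrent class. Hence $\pi_\mu$ and the per-slot access reward $\mu(\mathbf s)$ agree with those of the threshold policy, giving $\bar W_s(\mu)=\epsilon_{\mathrm{th}}$ by the very definition of $\epsilon_{\mathrm{th}}$.

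For the strict inequality, take $\mu\in\mathcal D_0\setminus\tilde{\mathcal D}_0$. Since $\mu$ is deterministic, there is at least one state $\mathbf s^\star=(t^\star,0,\K)$ with $\mu(\mathbf s^\star)=0$. I would first check that $\mathbf s^\star$ is accessible from $(1,0,\U)$ under $\mu$: the path $(1,0,\U)\to(2,0,\U)\to\cdots\to(t^\star-1,0,\U)\to(t^\star,0,\K)$ uses only idle transitions in $\mathcal S_{\U}$, whose probabilities $q_{pp}^{(\I)}q_{ps}^{(\I)}$ and $q_{pp}^{(\I)}(1-q_{ps}^{(\I)})$ are strictly positive whenever $q_{pp}^{(\I)}>0$ and $0<q_{ps}^{(\I)}<1$, and it does not involve $\mu(\mathbf s^\star)$ itself, so accessibility holds for every value of $\mu(\mathbf s^\star)$. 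Then I would raise $\mu(\mathbf s^\star)$ from $0$ to $1$, holding all other coordinates fixed, obtaining $\hat\mu$; by Lemma~\ref{increasingWs}, $\frac{\mathrm d\bar W_s}{\mathrm d\mu(\mathbf s^\star)}>0$ throughout this segment (the state stays accessible), so $\bar W_s(\mu)<\bar W_s(\hat\mu)$. Finally, raising the remaining $\mathcal S_{\K}$-coordinates of $\hat\mu$ to $1$ produces a policy in $\tilde{\mathcal D}_0$ and, by the non-strict monotonicity of Lemma~\ref{increasingWs}, does not decrease $\bar W_s$. Chaining these gives $\bar W_s(\mu)<\bar W_s(\hat\mu)\le\epsilon_{\mathrm{th}}$.

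The main obstacle I anticipate is the rigorous handling of accessibility together with the strictness in Lemma~\ref{increasingWs}: one must verify that the offending state $(t^\star,0,\K)$ remains accessible all along the one-coordinate perturbation -- which it does, because its incoming path runs entirely through the fixed idle part of $\mathcal S_{\U}$ -- and that the positivity of the relevant idle transition probabilities, implicitly assumed throughout, indeed holds, so that the chain restricted to $\{(t,0,\U)\}\cup\{(t,0,\K)\}$ is irreducible and $\epsilon_{\mathrm{th}}>0$. The buffer-invariance claim and the coincidence-of-chains argument are then essentially bookkeeping once the transition structure of Table~\ref{txprob} is read off correctly.
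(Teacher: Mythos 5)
Your proof is correct and takes essentially the same route as the paper's: equality follows from the inaccessibility of the states $(t,b,\U)$, $b>0$, under any $\mu\in\tilde{\mathcal D}_0$ together with Definition~\ref{threshold}, and the strict inequality follows by raising the idle $\mathcal S_{\K}$ coordinates to reach a policy in $\tilde{\mathcal D}_0$ and invoking the strict monotonicity of Lemma~\ref{increasingWs} at an accessible state. The only difference is presentational: you perturb one coordinate at a time with an explicit accessibility path (and explicitly flag the positivity of the idle transition probabilities), whereas the paper raises all offending coordinates $\mathcal S_\mu$ simultaneously and asserts accessibility directly.
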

\begin{proof}
 Let $\mu\in\tilde{\mathcal D}_0$. Since the states $(t,b,\U)$ with $b>0$ are not 
accessible from $(1,0,\U)$ under $\mu$, 
the transmission probability $\mu(t,b,\U)$, $b>0$, does not affect $\bar W_s(\mu)$.
Then, from Def. \ref{threshold}, we have $\bar W_s(\mu)=\epsilon_{\mathrm{th}}$.

Let $\mu\in\mathcal D\setminus\tilde{\mathcal D}_0$.
Letting $\mathcal S_\mu=\{\mathbf s\in\mathcal S_{\K}:\mu(\mathbf s)=0\}$,
we have that $\mu+\sum_{\mathbf s\in\mathcal S_\mu}\delta_{\mathbf s}\in\tilde{\mathcal D}_0$.
Finally, since every $\mathbf s\in\mathcal S_\mu$ is accessible from $(1,0,\U)$ under $\mu$,
and $\mathcal S_\mu$ is non-empty,
from Lemma~\ref{increasingWs} in App. \ref{derivTW}
and the previous case,
 it follows that
$\bar W_s(\mu)<\bar W_s(\mu+\sum_{\mathbf s\in\mathcal S_\mu}\delta_{\mathbf s})=\epsilon_{\mathrm{th}}$.
\end{proof}

\begin{lemma}\label{useful}
Let $\mu\in\mathcal U$ such that 
$\mu(t,0,\U)=0\ \forall\ t\in\mathbb N(1,D)$.
 Then, $\eta_{\mu}(t,0,\U)<T_{s\K}$ and
$\eta_{\mu}(t,0,\K)=T_{s\K},\ \forall t$.\qed
\end{lemma}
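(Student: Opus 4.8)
The plan is to compute the secondary access efficiency $\eta_\mu(\mathbf s)$ from Lemma~\ref{computeeta} directly for the two types of states, using the recursive structure in Definition~\ref{defGVD}. Recall that
\begin{align*}
\eta_\mu(\mathbf s)=\frac{\mathbf G_{\mu}^\prime(\mathbf s)-\mathbf D_{\mu}^\prime(\mathbf s)\bar T_s(\mu)}{\mathbf V_{\mu}^\prime(\mathbf s)-\mathbf D_{\mu}^\prime(\mathbf s)\bar W_s(\mu)}.
\end{align*}
The key simplification comes from the hypothesis $\mu(t,0,\U)=0$ for all $t$: under such a policy, the states $(t,b,\U)$ with $b>0$ are not accessible, so the buffer is always empty, and I expect the relevant recursions to collapse considerably.

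First I would treat the states $\mathbf s=(t,0,\K)\in\mathcal S_{\K}$. Here I would differentiate the second recursion in~(\ref{rec1}) with respect to $\mu(t,0,\K)$. Since the reward is $g_\mu(t,0,\K)=\mu(t,0,\K)T_{s\K}$, I get $\mathbf G_{\mu}^\prime(t,0,\K)=T_{s\K}+(q_{pp}^{(\A)}-q_{pp}^{(\I)})\mathbf G_\mu(t+1,0,\K)$, while $\mathbf V_{\mu}^\prime(t,0,\K)=1+(q_{pp}^{(\A)}-q_{pp}^{(\I)})\mathbf V_\mu(t+1,0,\K)$ and $\mathbf D_{\mu}^\prime(t,0,\K)=(q_{pp}^{(\A)}-q_{pp}^{(\I)})\mathbf D_\mu(t+1,0,\K)$. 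The claim $\eta_\mu(t,0,\K)=T_{s\K}$ is then equivalent to showing that the numerator equals $T_{s\K}$ times the denominator, \emph{i.e.}, $\mathbf G_{\mu}^\prime-\mathbf D_{\mu}^\prime\bar T_s = T_{s\K}(\mathbf V_{\mu}^\prime-\mathbf D_{\mu}^\prime\bar W_s)$. Substituting and cancelling, this reduces to the identity $\mathbf G_\mu(t+1,0,\K)=T_{s\K}\mathbf V_\mu(t+1,0,\K)$, which I would prove by a backward induction on $t$ using~(\ref{rec1}): in the known-message states the only reward is $T_{s\K}$ per access, so throughput is exactly $T_{s\K}$ times the access count. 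The base case $\mathbf X_\mu(D+1,\cdot)=0$ makes this clean.

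Next I would treat $\mathbf s=(t,0,\U)\in\mathcal S_{\U}$ and establish the strict inequality $\eta_\mu(t,0,\U)<T_{s\K}$. The natural approach is to show the numerator minus $T_{s\K}$ times the denominator is strictly negative, \emph{i.e.},
\begin{align*}
\left[\mathbf G_{\mu}^\prime(t,0,\U)-T_{s\K}\mathbf V_{\mu}^\prime(t,0,\U)\right]-\mathbf D_{\mu}^\prime(t,0,\U)\left[\bar T_s(\mu)-T_{s\K}\bar W_s(\mu)\right]<0,
\end{align*}
recalling from Lemma~\ref{increasingWs} that the denominator $\mathbf V_{\mu}^\prime-\mathbf D_{\mu}^\prime\bar W_s$ is strictly positive. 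Differentiating the first recursion in~(\ref{rec1}) and the reward~(\ref{bufpot}) at $b=0$ gives $\mathbf G_{\mu}^\prime(t,0,\U)=T_{s\U}+(\text{transition derivatives})\cdot\mathbf X$-terms; the point is that accessing from a $\U$-state yields at most the instantaneous reward $T_{s\U}$ plus the option of acquiring the PU message and continuing, but each continuation path is itself valued against $T_{s\K}$, never exceeding it. Here I would use the marginal-throughput inequality~(\ref{tau}) — namely $T_{s\U}<T_{s\K}$ and the per-state gain captured by $\Delta_s$ — to force strictness.

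The main obstacle will be the $\mathcal S_\U$ case: unlike the known-message states, the efficiency $\eta_\mu(t,0,\U)$ genuinely depends on the downstream continuation values $\mathbf G_\mu,\mathbf V_\mu,\mathbf D_\mu$ at states $(t+1,\cdot,\cdot)$, and there is no exact identity to exploit — only a strict inequality. I expect the cleanest route is to establish an auxiliary bound of the form $\mathbf G_\mu(\mathbf s)\le T_{s\K}\mathbf V_\mu(\mathbf s)+(\text{idle-BIC throughput that is strictly dominated})$ across all reachable states, prove it by backward induction, and then feed it into the differentiated recursion; the strictness then propagates from the single slot in which $T_{s\U}<T_{s\K}$ is active. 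Balancing the bookkeeping of the three coupled recursions $(\mathbf G,\mathbf V,\mathbf D)$ simultaneously, while tracking where the inequality becomes strict, is the delicate part.
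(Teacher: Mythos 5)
Your proposal is correct and follows essentially the same route as the paper's proof: for $\mathcal S_{\K}$ you use the same backward-induction identity $\mathbf G_\mu(t,0,\K)=T_{s\K}\mathbf V_\mu(t,0,\K)$ (which, under the hypothesis, also yields $\bar T_s(\mu)=T_{s\K}\bar W_s(\mu)$ and so eliminates the $\mathbf D_\mu^\prime$ term), and for $\mathcal S_{\U}$ your proposed slack-augmented bound proved by backward induction is exactly the paper's quantity $M_\mu(t,b)=b(T_{s\K}-T_{s\U})+q_{pp}^{(\mathrm{A})}p_{s,\mathrm{buf}}\left[T_{s\K}\mathbf V_\mu(t,b,\U)-\mathbf G_\mu(t,b,\U)\right]>0$, with the induction step closed via inequality~(\ref{tau}) just as you anticipate. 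One passing remark of yours --- that continuation paths are ``valued against $T_{s\K}$, never exceeding it'' --- is literally false in the buffered states, where BIC rewards accrue without any channel access so that $\mathbf G_\mu(t,b,\U)$ can exceed $T_{s\K}\mathbf V_\mu(t,b,\U)$, but your own slack term $b(T_{s\K}-T_{s\U})$ combined with~(\ref{tau}) is precisely what absorbs this, so no gap remains.
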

\begin{proof}
Let $\mu\in\mathcal U$ such that 
$\mu(t,0,\U)=0\ \forall\ t\in\mathbb N(1,D)$.
It follows that
the states $(t,b,\U)$ with $b>0$ are not accessible, hence their steady state probability satisfies
$\pi_\mu(t,b,\U)=0,\ \forall\ t,\ \forall\ b>0$.
It is then straightforward to show, by using the recursion (\ref{rec1}), that
$\mathbf G_\mu(t,0,\U)=T_{s\K}\mathbf V_\mu(t,0,\K)$,
 $\mathbf G_\mu(t,0,\K)=T_{s\K}\mathbf V_\mu(t,0,\K)$ and
$\bar T_{s}(\mu)=T_{s\K}\bar W_s(\mu)$.
Then, using these expressions, the recursion (\ref{rec1}) and 
Lemma \ref{computeeta},
we obtain $\eta_{\mu}(t,0,\K)=T_{s\K}$ and
\begin{align}
&\eta_{\mu}(t,0,\U)=
T_{s\K}
-
\frac{
T_{s\K}\mathbf V_{\mu}^\prime(t,0,\U)-\mathbf G_{\mu}^\prime(t,0,\U)
}
{
\mathbf V_{\mu}^\prime(t,0,\U)
-\mathbf D_{\mu}^\prime(t,0,\U)\bar W_s(\mu)
}.
\end{align}
We now prove that $\eta_{\mu}(t,0,\U)<T_{s\K}$, which proves the lemma.
Equivalently,
using Lemma~\ref{increasingWs} in App. \ref{derivTW} and (\ref{rec1}),
we prove that
\begin{align}
\label{mt1}
&T_{s\K}\mathbf V_{\mu}^\prime(t,0,\U)-\mathbf G_{\mu}^\prime(t,0,\U)=
(T_{s\K}-T_{s\U})
\\&
+q_{pp}^{(\mathrm{A})}p_{s,\mathrm{buf}}
[T_{s\K}\mathbf V_{\mu}(t,1,\U)-\mathbf G_{\mu}(t,1,\U)]
>0.
\nonumber
\end{align}
Letting
\begin{align}
&M_\mu(t,b)=
b(T_{s\K}-T_{s\U})\\\nonumber&+q_{pp}^{(\mathrm{A})}p_{s,\mathrm{buf}}
[T_{s\K}\mathbf V_{\mu}(t,b,\U)-\mathbf G_{\mu}(t,b,\U)]>0,\ \forall\ t,b\geq 1,
\end{align}
(\ref{mt1}) is equivalent to $M_\mu(t,1)>0$.
We now prove by induction that $M_\mu(t,b)>0,\ \forall\ t,b\geq 1$, yielding (\ref{mt1})
as a special case when $b=1$.
For $t=D+1$ we have $M_\mu(D+1,b)=b(T_{s\K}-T_{s\U})>0$, since $T_{s\K}>T_{s\U}$ and $b\geq 1$.
Now, let $t\leq D$ and  assume  $M_\mu(t+1,b)>0$.
Using (\ref{rec1}), after algebraic manipulation we obtain
\begin{align}
&M_\mu(t,b)=b(T_{s\K}-T_{s\U})
+q_{pp}^{(\mathrm{A})}p_{s,\mathrm{buf}}\mu(t,b,\U)(T_{s\K}-T_{s\U})
\nonumber\\&
-q_{pp}^{(\mathrm{A})}p_{s,\mathrm{buf}}\left[
1
-\mu(t,b,\U)q_{ps}^{(\mathrm{A})}-(1-\mu(t,b,\U))q_{ps}^{(\mathrm{I})}\right]bR_{s\U}
\nonumber\\&
+\mathrm{Pr}_\mu(t+1,b,\U|t,b,\U)[M_\mu(t+1,b)-b(T_{s\K}-T_{s\U})]
\nonumber\\&
+\mathrm{Pr}_\mu(t+1,b+1,\U|t,b,\U)M_\mu(t+1,b)
\nonumber\\&
-\mathrm{Pr}_\mu(t+1,b+1,\U|t,b,\U)(b+1)(T_{s\K}-T_{s\U}).
\end{align}
Finally, since $M_\mu(t+1,b)>0$ by the induction hypothesis,
using inequality (\ref{tau})
 we obtain
\begin{align}
&M_\mu(t,b)
>
p_{s,\mathrm{buf}}bR_{s\U}\left(
1-q_{pp}^{(\mathrm{A})}
\right)
\nonumber\\&
+p_{s,\mathrm{buf}}bR_{s\U}
(1-\mu(t,b,\U))q_{ps}^{(\mathrm{I})}(q_{pp}^{(\mathrm{A})}-q_{pp}^{(\mathrm{I})})
>0,
\end{align}
which proves the induction step.
The lemma is proved.
\end{proof}

 \section{}
\label{proofofdegenerate}

\begin{proof}[Proof of Lemma~\ref{mainthm}]
Let $\mathcal D\subset\mathcal U$ be the
set of all the deterministic (non-randomized) policies.
 Let
 \begin{align}\label{strucxxx}
\begin{array}{ll}
\tilde{\mathcal D}\equiv&\left\{
  \mu\in\mathcal D:
  \mu(t,b,\mathrm{U})=1,\forall t,b<b(t);\right.\\&\left.\ 
  \mu(t,b,\mathrm{U})=0,\forall t,b\geq b(t);\ 
  \mu(\mathbf s)=1,\mathbf s\in\mathcal S_{\mathrm{K}};\right.\\&\left.
\    \exists\ b(\cdot):b(t+1)\leq b(t)\ \forall t
  \right\}.\nonumber
\end{array}
 \end{align}
By inspection, we have that
the sequences of policies~(\ref{struc}) are such that
 $\mu^{(i)}\in\tilde{\mathcal D},\ \forall i\in\mathbb N(0,N-1)$. 
Therefore, the first part of the lemma states that $\mu^{(i)}\in\tilde{\mathcal D},\ \forall i\in\mathbb N(0,N-1)$. 
We prove this property by induction. 
Namely, we show that $\mu^{(i)}\in\tilde{\mathcal D}\Rightarrow\mu^{(i+1)}\in\tilde{\mathcal D}$.
Then, since $\mu^{(0)}\in\tilde{\mathcal D}$ (initialization of Algorithm~\ref{algorithm})
 it follows that $\mu^{(i)}\in\tilde{\mathcal D},\ \forall i$.
Let $\mu^{(i)}\in\tilde{\mathcal D}$, \emph{i.e.}, $\mu^{(i)}$ is given by~(\ref{struc})
for some $b^{(i)}(t)$ non-increasing in $t$.
The set of idle states is then given by
\begin{align}
 \mathcal S_{\mathrm{idle}}^{(i)}\equiv
\left\{(t,b,\mathrm{U})\in\mathcal S_{\mathrm{U}}:t\in\mathbb N(1,D), b\geq b^{(i)}(t)\right\}.
\end{align}
We then prove that, under the hypotheses of the lemma, $\eta_{\mu^{(i)}}(t,b,\mathrm{U})>\eta_{\mu^{(i)}}(t,b+1,\mathrm{U})$
and $\eta_{\mu^{(i)}}(t,b,\mathrm{U})>\eta_{\mu}(t+1,b,\mathrm{U}),\ \forall (t,b,\mathrm{U})\in\mathcal S_{\mathrm{idle}}^{(i)}$.
It follows that 
the SU access efficiency is maximized by the state
in the idle set $\mathcal S_{\mathrm{idle}}^{(i)}$
with the lowest value of the primary ARQ state $t$, among
the states with the same buffer occupancy $b$,
and with the fewest number of 
buffered received signals $b$, among
the states with the same primary ARQ state $t$.
Therefore, in the main iteration stage of the algorithm,
the SU access efficiency is maximized by
$\mathbf s^{(i)}=\arg\max_{\mathbf s\in\mathcal S_{\mathrm{idle}}^{(i)}}\eta_{\mu^{(i)}}(\mathbf s)$,
where $\mathbf s^{(i)}=(t,b,\U)$ is such that 
$\tau\geq t$, $\beta\geq b$,
 $\forall\ (\tau,\beta,\U)\in\mathcal S_{\mathrm{idle}}^{(i)}$.
By inspection, we have that $\mu^{(i+1)}=\mu^{(i)}+\delta_{\mathbf s^{(i)}}\in\tilde{\mathcal D}$, hence the induction step is proved.

We thus need to prove the induction step, \emph{i.e.}, letting $\mu^{(i)}\in\tilde{\mathcal D}$,
we show that
\begin{align}
&\eta_{\mu^{(i)}}(t,b,\mathrm{U})>\eta_{\mu^{(i)}}(t,b+1,\mathrm{U}),\ \forall (t,b,\mathrm{U})\in\mathcal S_{\mathrm{idle}}^{(i)},
\nonumber\\
&\eta_{\mu^{(i)}}(t,b,\mathrm{U})>\eta_{\mu}(t+1,b,\mathrm{U}),\ \forall (t,b,\mathrm{U})\in\mathcal S_{\mathrm{idle}}^{(i)}.
\end{align}
To this end, note that, in the degenerate cognitive radio network scenario,
the primary ARQ process is not affected by the SU access scheme,
hence, using the notation in App.~\ref{derivTW}, $\mathbf D_{\mu^{(i)}}^\prime(t,b,\U)=0$.
By the definition of SU access efficiency~(\ref{computeeta}),
we thus obtain
\begin{align}\label{etaspecial}
 \eta_{\mu^{(i)}}\left(t,b,\U \right)=\frac{\mathbf G_{\mu^{(i)}}^\prime(t,b,\U )}
{\mathbf V_{\mu^{(i)}}^\prime(t,b,\U )},
\end{align}
where, using (\ref{rec1}), (\ref{onesteptxprob1}-\ref{onesteptxprob3}),
(\ref{bufpot}) and (\ref{vinsta}),
\begin{align}\label{Gprimex}
& \mathbf G_{\mu^{(i)}}^\prime(t,b,\U )=
T_{s\U }+\left(q_{ps}^{(\I)}-q_{ps}^{(\A)}\right)bR_{s\U }
\\&\nonumber
\quad+q_{pp}(q_{ps}^{(\A)}-p_{s,\mathrm{buf}}-q_{ps}^{(\I)})\mathbf G_{\mu^{(i)}}(t+1,b,\U )\\
&\quad+q_{pp}p_{s,\mathrm{buf}}\mathbf G_{\mu^{(i)}}(t+1,b+1,\U )
\nonumber\\
&\quad+q_{pp}(q_{ps}^{(\I)}-q_{ps}^{(\A)})\mathbf G_{\mu^{(i)}}(t+1,0,\K ),\nonumber\\ 
\label{Vprimex}
&\mathbf V_{\mu^{(i)}}^\prime(t,b,\U )=1
+q_{pp}(q_{ps}^{(\A)}-p_{s,\mathrm{buf}}-q_{ps}^{(\I)})\mathbf V_{\mu^{(i)}}(t+1,b,\U )\nonumber\\
&\quad+q_{pp}p_{s,\mathrm{buf}}\mathbf V_{\mu^{(i)}}(t+1,b+1,\U )
\nonumber\\&\quad+q_{pp}(q_{ps}^{(\I)}-q_{ps}^{(\A)})\mathbf V_{\mu^{(i)}}(t+1,0,\K ).
\end{align}
Using the fact that $\mu^{(i)}(\tau,\beta,\U)=0,\ \forall \tau\geq t,\beta\geq b$,
it can be proved that 
\begin{align}\label{v1}
&\mathbf V_{\mu^{(i)}}(\tau,\beta,\U)=A_1(\tau)-A_0(\tau),\\
\label{v2}
&\mathbf G_{\mu^{(i)}}(\tau,\beta,\U)=(1-q_{ps}^{(\I)})
\beta R_{s\U}A_0(\tau)\\&\nonumber\qquad+T_{s\K}(A_1(\tau)-A_0(\tau)),\\
\label{v3}
&\mathbf V_{\mu^{(i)}}(\tau,0,\K)=A_1(\tau),\\
\label{v4}
&\mathbf G_{\mu^{(i)}}(\tau,0,\K)=T_{s\K}A_1(\tau),
\end{align}
where $A_0(\cdot)$ and $A_1(\cdot)$ are defined in
(\ref{A0}) and (\ref{A1}), respectively.
The expressions~(\ref{v1}-\ref{v4}) can be easily verified by induction,
 starting from $\tau=D+1$ backward. 
In fact, for $\tau=D+1$, we have
$A_0(D+1)=A_1(D+1)=0$,
hence we obtain $
\mathbf V_{\mu^{(i)}}(D+1,\beta,\U)=
\mathbf G_{\mu^{(i)}}(D+1,\beta,\U)=
\mathbf V_{\mu^{(i)}}(D+1,0,\K)=
\mathbf G_{\mu^{(i)}}(D+1,0,\K)=0$,
which is consistent with Def.~\ref{defGVD}.
The induction step can be proved
by inspection, using the recursive expression~(\ref{rec1})
and the fact that $\mu(\tau,\beta,\U)=0,\ \forall\tau\geq t,\beta\geq b$. 
 Substituting the expressions~(\ref{v1}-\ref{v4}) in~(\ref{Gprimex})
and~(\ref{Vprimex}), we obtain
\begin{align}\label{Gprime}
&\mathbf G_{\mu^{(i)}}^\prime(t,b,\U )=
T_{s\U }+q_{pp}p_{s,\mathrm{buf}}(1-q_{ps}^{(\I)})R_{s\U}A_0(t+1)\nonumber
\\\nonumber &
 \quad+\left(q_{ps}^{(\I)}-q_{ps}^{(\A)}\right)bR_{s\U }\left[1-q_{pp}(1-q_{ps}^{(\I)})A_0(t+1)\right]
\\&
\quad 
 +q_{pp}(q_{ps}^{(\I)}-q_{ps}^{(\A)})T_{s\K}A_0(t+1),
\\
\label{Vprime}
&\mathbf V_{\mu^{(i)}}^\prime(t,b,\U )=1
-q_{pp}(q_{ps}^{(\A)}-q_{ps}^{(\I)})A_0(t+1).
\end{align}

\subsection*{Proof of $\eta_{\mu^{(i)}}(t,b,0)>\eta_{\mu^{(i)}}(t,b+1,0)$}
By substituting~(\ref{Gprime}) and~(\ref{Vprime}) in~(\ref{etaspecial}),
and noticing that
$\mathbf V_{\mu^{(i)}}^\prime(t,b,\U )=\mathbf V_{\mu^{(i)}}^\prime(t,b+1,\U )$
from~(\ref{Vprime})
and $\mathbf V_{\mu^{(i)}}^\prime(t,b,\U )>0$ (from Lemma~\ref{increasingWs} with $\mathbf D_{\mu}^\prime(\mathbf s)=0$),
the condition
$\eta_{\mu^{(i)}}(t,b,0)>\eta_{\mu^{(i)}}(t,b+1,0)$
is equivalent to
$\mathbf G_{\mu^{(i)}}^\prime(t,b,\U )>\mathbf G_{\mu^{(i)}}^\prime(t,b+1,\U )$,
which is readily verified from~(\ref{Gprime}), since
\begin{align}
&\mathbf G_{\mu^{(i)}}^\prime(t,b,\U )-\mathbf G_{\mu^{(i)}}^\prime(t,b+1,\U )
\\=&\nonumber
\left(q_{ps}^{(\A)}-q_{ps}^{(\I)}\right)R_{s\U}
\left[1-q_{pp}(1-q_{ps}^{(\I)})A_0(t+1)\right]
\nonumber
\\>&\nonumber
\left(q_{ps}^{(\A)}-q_{ps}^{(\I)}\right)
\frac{1-q_{pp}}{1-q_{pp}q_{ps}^{(\I)}}R_{s\U}
>0,
\end{align}
where the first inequality follows from the fact that
 $A_0(t+1)<\frac{1}{1-q_{pp}q_{ps}^{(\I)}}$,
the second from $q_{ps}^{(\I)}<q_{ps}^{(\A)}$.

\subsection*{Proof of $\eta_{\mu^{(i)}}(t,b,0)>\eta_{\mu}(t+1,b,0)$}
Since
$\mathbf V_{\mu^{(i)}}^\prime(t,b,\U)>0$, the condition 
$\eta_{\mu^{(i)}}(t,b,0)>\eta_{\mu}(t+1,b,0)$
is equivalent to
\begin{align}
& \mathbf G_{\mu^{(i)}}^\prime(t,b,\U )
\left(\mathbf V_{\mu^{(i)}}^\prime(t+1,b,\U )-\mathbf V_{\mu^{(i)}}^\prime(t,b,\U )\right)
\nonumber\\&
>
\mathbf V_{\mu^{(i)}}^\prime(t,b,\U )
\left(\mathbf G_{\mu^{(i)}}^\prime(t+1,b,\U )-\mathbf G_{\mu^{(i)}}^\prime(t,b,\U )\right).
\end{align}
Using~(\ref{Gprime}) and~(\ref{Vprime}), after algebraic manipulation we obtain
the equivalent condition
\begin{align}\label{condition}
&\left(1-q_{ps}^{(\A)}\right)p_{s,\mathrm{buf}}
+\left(1-q_{ps}^{(\A)}\right)
\left(q_{ps}^{(\A)}-q_{ps}^{(\I)}\right)b
\nonumber\\&
+\left(q_{ps}^{(\I)}-q_{ps}^{(\A)}\right)\Delta_s
>
0,
\end{align}
where we have used the fact that 
$T_{s\mathrm{K}}=\Delta_sR_{s\mathrm{U}}+T_{s\mathrm{U}}+p_{s,\mathrm{buf}}R_{s\mathrm{U}}$.
Since we require this condition to hold $\forall b\geq 0$
and the left hand expression is minimized by $b=0$,
the condition~(\ref{condition})
should be satisfied for $b=0$, yielding the equivalent condition
$\Delta_s
<
\frac{1-q_{ps}^{(\A)}}{q_{ps}^{(\A)}-q_{ps}^{(\I)}}p_{s,\mathrm{buf}}$,
which is an hypothesis of the lemma.

It is thus proved that the sequence of policies returned by Algorithm~\ref{algorithm}
 has the structure defined by~(\ref{struc}), where $b^{(i)}(t)$
satisfies the inequality~(\ref{incri}).
Moreover,
the inequality~(\ref{incrt}) holds since, by the algorithm construction, 
$\mu^{(i+1)}$ is obtained from $\mu^{(i)}$ by "activating" one additional state
from the set of idle states $\mathcal S_{\mathrm{idle}}^{(i)}$.

The second part of the lemma states that $b^{(N-1)}(t)=\bar b_{\max}(t)$,
 where $\bar b_{\max}(t)$ is given by
~(\ref{bmax}). This is a consequence of the fact that Algorithm~\ref{algorithm} stops if
the SU access efficiency becomes non-positive, \emph{i.e.},
 $\eta_{\mu^{(i)}}(\mathbf s)\leq 0$,
 $\forall \mathbf s\in\mathcal S_{\mathrm{idle}}^{(i)}$.
From~(\ref{etaspecial}),
this condition is equivalent to
$\mathbf G_{\mu^{(i)}}^\prime(t,b,\U)\leq 0$,
$\forall (t,b,\U)\in\mathcal S_{\mathrm{idle}}^{(i)}$.
By using~(\ref{Gprime})
and by solving $\mathbf G_{\mu^{(i)}}^\prime(t,b,\U)\leq 0$
with respect to $b$, the result follows.
\end{proof}

\bibliographystyle{IEEEtran}
\bibliography{IEEEabrv,biblio}

\begin{thebibliography}{10}
\providecommand{\url}[1]{#1}
\csname url@samestyle\endcsname
\providecommand{\newblock}{\relax}
\providecommand{\bibinfo}[2]{#2}
\providecommand{\BIBentrySTDinterwordspacing}{\spaceskip=0pt\relax}
\providecommand{\BIBentryALTinterwordstretchfactor}{4}
\providecommand{\BIBentryALTinterwordspacing}{\spaceskip=\fontdimen2\font plus
\BIBentryALTinterwordstretchfactor\fontdimen3\font minus
  \fontdimen4\font\relax}
\providecommand{\BIBforeignlanguage}[2]{{%
\expandafter\ifx\csname l@#1\endcsname\relax
\typeout{** WARNING: IEEEtran.bst: No hyphenation pattern has been}%
\typeout{** loaded for the language `#1'. Using the pattern for}%
\typeout{** the default language instead.}%
\else
\language=\csname l@#1\endcsname
\fi
#2}}
\providecommand{\BIBdecl}{\relax}
\BIBdecl

\bibitem{MichelusiITA11}
N.~Michelusi, O.~Simeone, M.~Levorato, P.~Popovski, and M.~Zorzi, ``{Optimal
  Cognitive Transmission Exploiting Redundancy in the Primary ARQ process},''
  in \emph{{Information Theory and Applications Workshop}}, Feb. 2011.

\bibitem{MichelusiBIC11}
N.~Michelusi, P.~Popovski, M.~Levorato, O.~Simeone, and M.~Zorzi, ``{Cognitive
  transmissions under a primary ARQ process via Backward Interference
  Cancellation},'' in \emph{49th Allerton Conference on Communication, Control,
  and Computing}, Sep. 2011, pp. 727--735.

\bibitem{Mitola}
J.~Mitola and G.~Maguire, ``{Cognitive radio: making software radios more
  personal},'' \emph{IEEE Personal Communications}, vol.~6, no.~4, pp. 13--18,
  Aug. 1999.

\bibitem{FCC}
{FCC Spectrum Policy Task Force}, ``{Report of the Spectrum Efficiency Working
  Group},'' FCC, Tech. Rep., available on http://www.fcc.gov/sptf/reports.html.

\bibitem{spectrumsharing}
J.~Peha, ``{Approaches to spectrum sharing},'' \emph{IEEE Communications
  Magazine}, vol.~43, no.~2, pp. 10--12, Feb. 2005.

\bibitem{Peha}
------, ``{Sharing Spectrum Through Spectrum Policy Reform and Cognitive
  Radio},'' \emph{Proceedings of the IEEE}, vol.~97, no.~4, pp. 708--719, Apr.
  2009.

\bibitem{cognitiveStart2}
I.~Akyildiz, W.-Y. Lee, M.~Vuran, and S.~Mohanty, ``{A survey on spectrum
  management in cognitive radio networks},'' \emph{IEEE Communications
  Magazine}, vol.~46, no.~4, pp. 40--48, Apr. 2008.

\bibitem{goldsmith}
A.~Goldsmith, S.~Jafar, I.~Maric, and S.~Srinivasa, ``{Breaking Spectrum
  Gridlock With Cognitive Radios: An Information Theoretic Perspective},''
  \emph{Proceedings of the IEEE}, vol.~97, no.~5, pp. 894 --914, May 2009.

\bibitem{DySpAN}
Q.~Zhao and B.~Sadler, ``{A Survey of Dynamic Spectrum Access},'' \emph{IEEE
  Signal Processing Magazine}, vol.~24, no.~3, pp. 79--89, May 2007.

\bibitem{Comroe}
R.~Comroe and D.~Costello, ``{ARQ Schemes for Data Transmission in Mobile Radio
  Systems},'' \emph{IEEE Journal on Selected Areas in Communications}, vol.~2,
  no.~4, pp. 472--481, July 1984.

\bibitem{Bertsekas}
D.~Bertsekas, \emph{{Dynamic Programming and Optimal Control}}.\hskip 1em plus
  0.5em minus 0.4em\relax Athena Scientific, 1995.

\bibitem{DJWhite}
D.~J. White, \emph{Markov Decision Processes}.\hskip 1em plus 0.5em minus
  0.4em\relax Wiley, 1993.

\bibitem{IT_ARQ}
M.~Levorato, U.~Mitra, and M.~Zorzi, ``{Cognitive Interference Management in
  Retransmission-Based Wireless Networks},'' \emph{IEEE Transactions on
  Information Theory}, vol.~58, no.~5, pp. 3023--3046, May 2012.

\bibitem{Nosratinia}
R.~Tannious and A.~Nosratinia, ``{Cognitive Radio Protocols Based on Exploiting
  Hybrid ARQ Retransmissions},'' \emph{IEEE Transactions on Wireless
  Communications}, vol.~9, no.~9, pp. 2833 --2841, Sep. 2010.

\bibitem{LT}
M.~Luby, ``{LT Codes},'' in \emph{Proceedings of the 43rd Symposium on
  Foundations of Computer Science}, ser. FOCS'02.\hskip 1em plus 0.5em minus
  0.4em\relax Washington, DC, USA: IEEE Computer Society, 2002.

\bibitem{Katabi}
S.~Gollakota and D.~Katabi, ``{Zigzag decoding: combating hidden terminals in
  wireless networks},'' in \emph{Proceedings of the ACM SIGCOMM conference on
  Data communication}.\hskip 1em plus 0.5em minus 0.4em\relax ACM, 2008, pp.
  159--170.

\bibitem{Nosratinia2}
J.~Li, W.~Zhang, A.~Nosratinia, and J.~Yuan, ``{Opportunistic Spectrum Sharing
  Based on Exploiting ARQ Retransmission in Cognitive Radio Networks},'' in
  \emph{IEEE Global Telecommunications Conference (Globecom)}, Dec. 2010, pp.
  1--5.

\bibitem{Jovicic}
A.~Jovicic and P.~Viswanath, ``{Cognitive Radio: An Information-Theoretic
  Perspective},'' \emph{IEEE Transactions on Information Theory}, vol.~55,
  no.~9, pp. 3945--3958, Sep. 2009.

\bibitem{Devroye}
N.~Devroye, P.~Mitran, and V.~Tarokh, ``{Achievable rates in cognitive radio
  channels},'' \emph{IEEE Transactions on Information Theory}, vol.~52, no.~5,
  pp. 1813--1827, May 2006.

\bibitem{Cover}
T.~M. Cover and J.~A. Thomas, \emph{{Elements of Information Theory}},
  2nd~ed.\hskip 1em plus 0.5em minus 0.4em\relax John Wiley \& Sons, Inc., New
  York, 2006.

\bibitem{Levonoisyobs}
M.~Levorato, S.~Firouzabadi, and A.~Goldsmith, ``{A Learning Framework for
  Cognitive Interference Networks with Partial and Noisy Observations},''
  \emph{IEEE Transactions on Wireless Communications}, 2012, accepted for
  publication.

\bibitem{Ross1989}
K.~W. Ross, ``{Randomized and Past-Dependent Policies for Markov Decision
  Processes with Multiple Constraints},'' \emph{{Operations Research}},
  vol.~37, no.~3, pp. 474--477, 1989.

\bibitem{Kemeny1960}
J.~G. Kemeny and J.~L. Snell, \emph{{Finite Markov Chains}}.\hskip 1em plus
  0.5em minus 0.4em\relax Springer, 1960.

\bibitem{rateRegion}
R.~Narasimhan, ``{Individual Outage Rate Regions for Fading Multiple Access
  Channels},'' in \emph{IEEE International Symposium on Information Theory,
  ISIT}, June 2007, pp. 24--29.

\bibitem{GallDSP}
R.~Gallager, \emph{Discrete Stochastic Processes}.\hskip 1em plus 0.5em minus
  0.4em\relax Kluwer, Boston, 1996.

\end{thebibliography}
\begin{IEEEbiography}[{\includegraphics[width=1in,height=1.25in,clip,keepaspectratio]{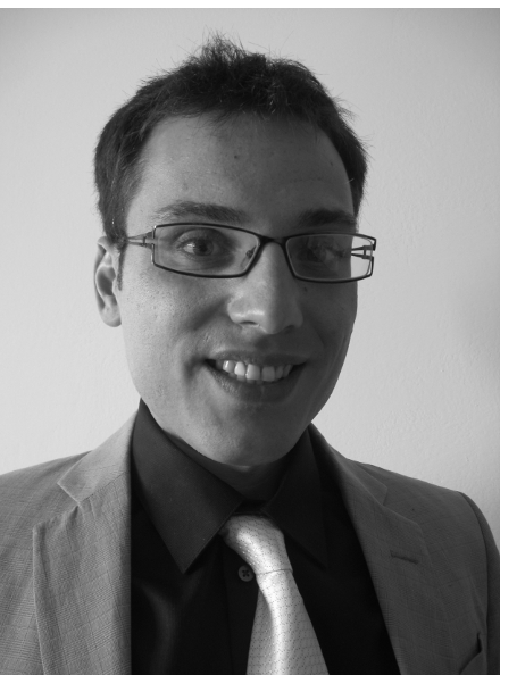}}]
{Nicol\`{o}~Michelusi} (S'09) received the B.S. (Electronics Engineering) and M.S.
 (Telecommunications Engineering) degrees summa cum laude from the University of Padova, Italy, in 2006 and 2009, respectively,
 and the M.S. degree in Telecommunications Engineering from Technical University of Denmark, Copenhagen, Denmark, in 2009, as part of the T.I.M.E. double degree program.
Since January 2009, he is a Ph.D. student at University of Padova, Italy.
In 2011, he was on leave at the University
of Southern California, Los Angeles, United States, as a visiting Ph.D. student.
 His research interests include ultrawideband communications, wireless networks, cognitive radio networks, stochastic optimization, optimal control,
 energy harvesting for communications.
\end{IEEEbiography}
\begin{IEEEbiography}[{\includegraphics[width=1in,height=1.25in,clip,keepaspectratio]{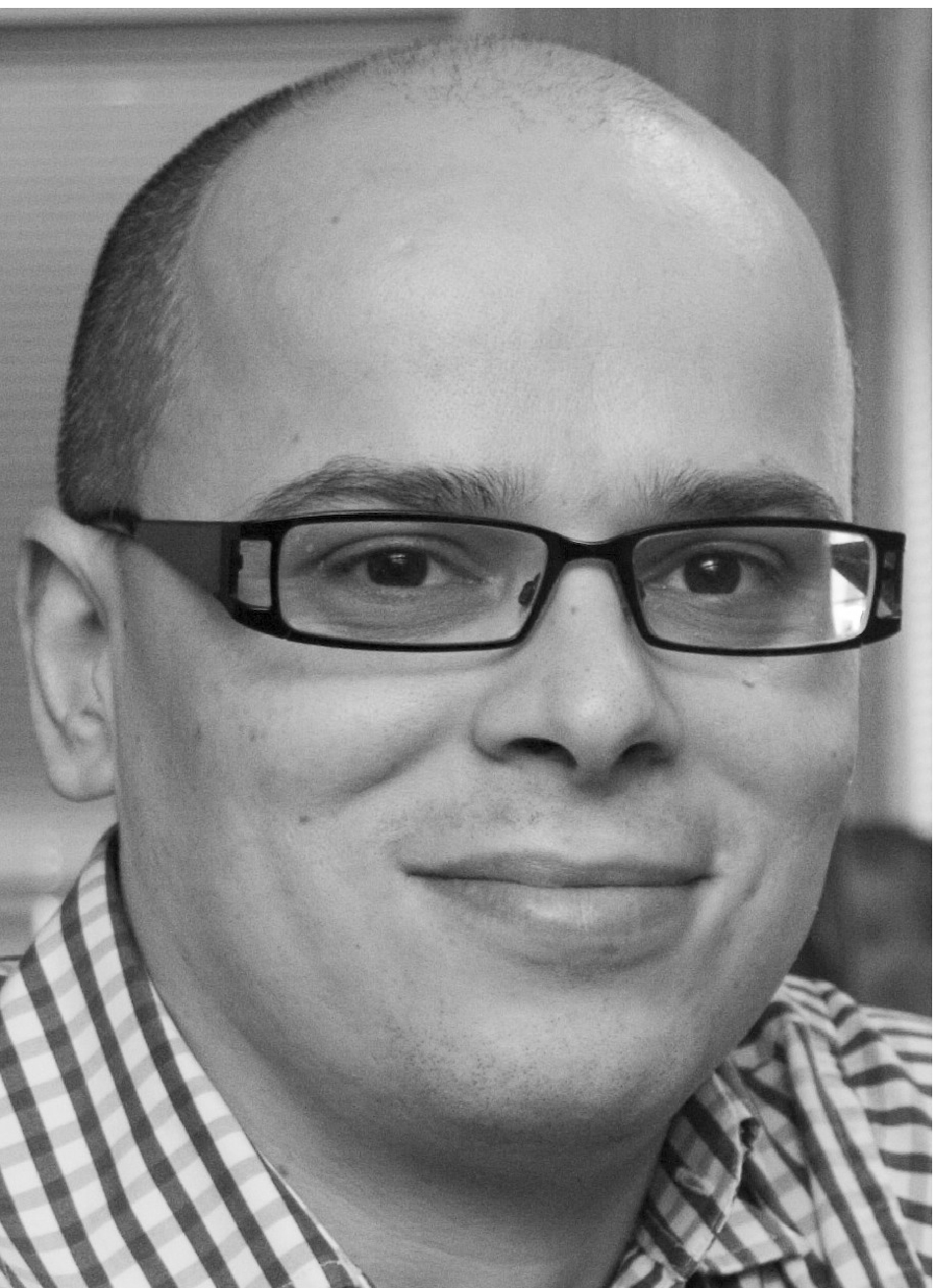}}]
{Petar Popovski} (S'97-A'98-M'04-SM'10) received the Dipl.-Ing. in
electrical engineering and Magister Ing. in communication engineering
from Sts. Cyril and Methodius University, Skopje, Macedonia, in 1997
and 2000, respectively and Ph. D. from Aalborg University, Denmark, in
2004. He was Assistant Professor (2004-2009) and Associate Professor
(2009-2012) at Aalborg University. From 2008 to 2009 he held part-time
position as a wireless architect at Oticon A/S. Since 2012 he is a
Professor at Aalborg University. He has more than 140 publications in
journals, conference proceedings and books and has more than 25
patents and patent applications. He has received the Young Elite
Researcher award and the SAPERE AUDE career grant from the Danish
Council for Independent Research. He has received six best paper
awards, including three from IEEE. Dr. Popovski serves on the
editorial board of several journals, including IEEE Communications
Letters (Senior Editor), IEEE Transactions on Communications and IEEE
Transactions on Wireless Communications.  His research interests are
in the broad area of wireless communication and networking,
information theory and protocol design.
\end{IEEEbiography}
\begin{IEEEbiography}[{\includegraphics[width=1in,height=1.25in,clip,keepaspectratio]{./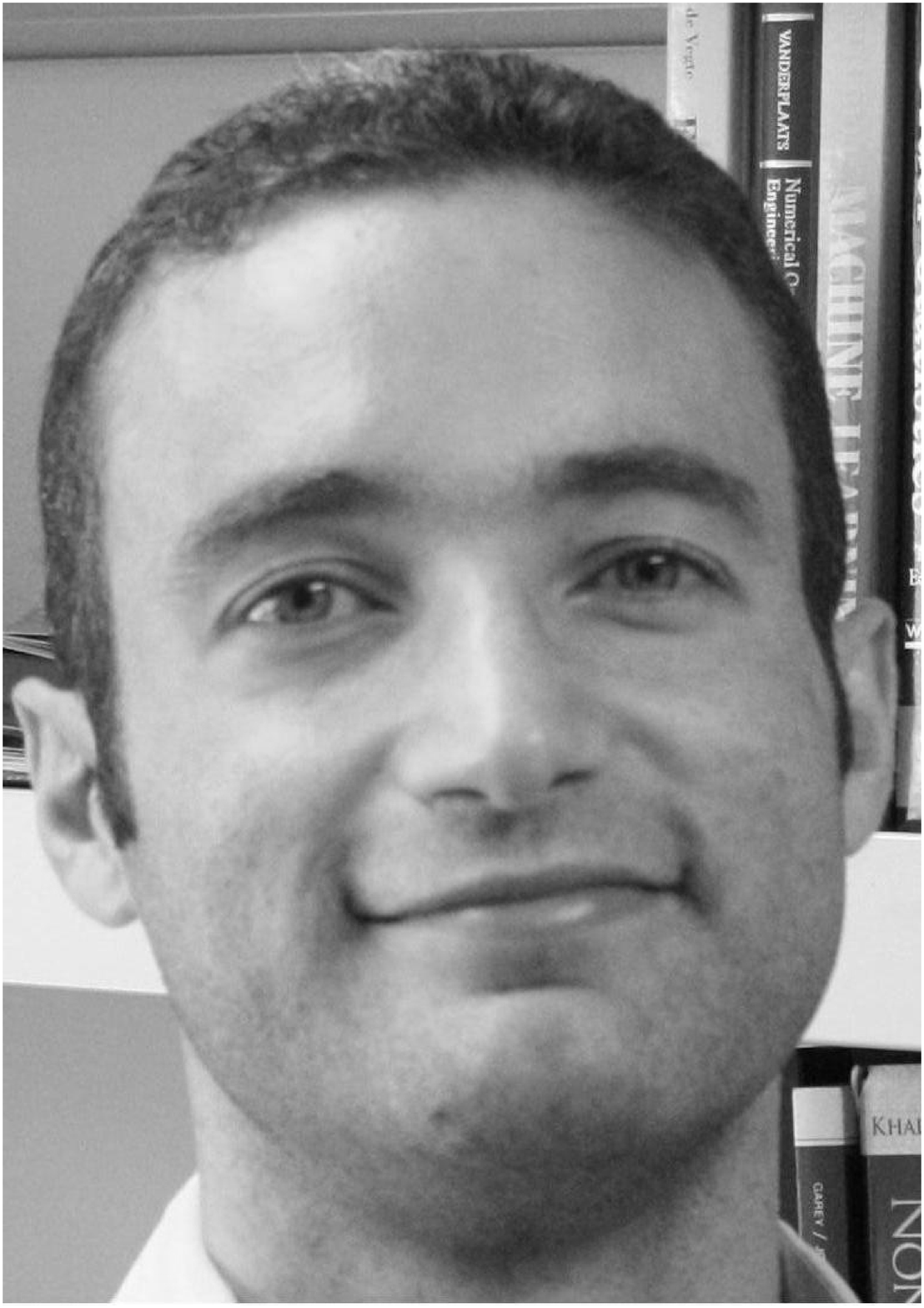}}]{Osvaldo Simeone} (M'02)
 received the M.Sc. degree (with honors) and the Ph.D. degree in information engineering from Politecnico di Milano, Milan,
Italy, in 2001 and 2005, respectively. He is currently with the Center for Wireless Communications and Signal Processing Research (CWCSPR),
New Jersey Institute of Technology (NJIT), Newark, where he is an Associate Professor. His current research interests concern the cross-layer analysis and design of wireless networks with emphasis on information-theoretic, signal processing, and queuing aspects.
Specific topics of interest are: cognitive radio, cooperative communications, rate-distortion theory, ad hoc, sensor, mesh and
hybrid networks, distributed estimation, and synchronization. Dr. Simeone is a co-recipient of Best Paper Awards of the IEEE SPAWC 2007
and IEEE WRECOM 2007. He currently serves as an Editor for IEEE TRANSACTIONS ON COMMUNICATIONS.
\end{IEEEbiography}
\begin{IEEEbiography}[{\includegraphics[width=1in,height=1.25in,clip,keepaspectratio]{./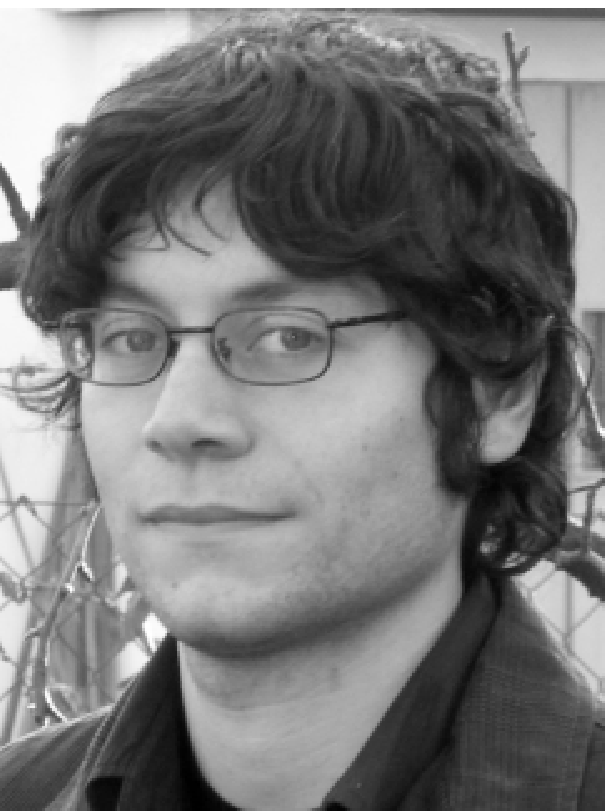}}]
{Marco Levorato} (S'06, M'09) obtained both the
BE (Electronics and Telecommunications Engineer-
ing) and the ME (Telecommunications Engineering)
summa cum laude from the University of Ferrara,
Italy, in 2002 and 2005, respectively. In 2009, he
received a Ph.D. in Information Engineering from
the University of Padova. During 2008 he was on
leave at the University of Southern California, Los
Angeles, United States. In 2009 he was a post
doctorate researcher at the University of Padova.
Since January 2010, he is a post doctorate researcher
at Stanford and the University of Southern California (USC).
\end{IEEEbiography}
\begin{IEEEbiography}[{\includegraphics[width=1in,height=1.25in,clip,keepaspectratio]{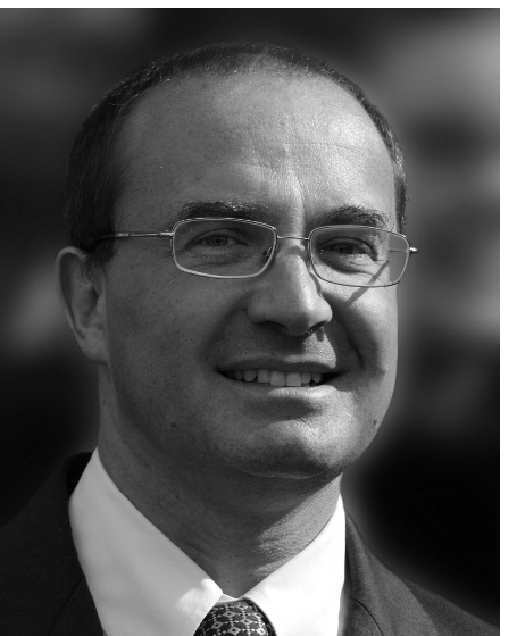}}]
{Michele Zorzi} (S'89, M'95, SM'98, F'07) was born in Venice, Italy, on December 6th, 1966. He received the Laurea and
 the PhD degrees in Electrical Engineering from the University of Padova, Italy, in 1990 and 1994, respectively.
 During the Academic Year 1992/93, he was on leave at the University of California, San Diego (UCSD) as a visiting PhD student,
 working on multiple access in mobile radio networks. In 1993, he joined the faculty of the Dipartimento di Elettronica e 
Informazione, Politecnico di Milano, Italy. After spending three years with the Center for Wireless Communications at UCSD,
 in 1998 he joined the School of Engineering of the University of Ferrara, Italy, where he became a Professor in 2000.
 Since November 2003, he has been on the faculty at the Information Engineering Department of the University of Padova.
 His present research interests include performance evaluation in mobile communications systems, random access in mobile radio
 networks, ad hoc and sensor networks, energy constrained communications protocols, broadband wireless access and underwater 
acoustic communications and networking.

Dr. Zorzi was the Editor-In-Chief of the IEEE WIRELESS COMMUNICATIONS MAGAZINE from 2003 to 2005 and the Editor-In-Chief of the
 IEEE TRANSACTIONS ON COMMUNICATIONS from 2008 to 2011, and currently serves on the Editorial Board of the WILEY JOURNAL OF
 WIRELESS COMMUNICATIONS AND MOBILE COMPUTING. He was also guest editor for special issues in the IEEE PERSONAL COMMUNICATIONS 
MAGAZINE (Energy Management in Personal Communications Systems) IEEE WIRELESS COMMUNICATIONS MAGAZINE (Cognitive Wireless
 Networks) and the IEEE JOURNAL ON SELECTED AREAS IN COMMUNICATIONS (Multi-media Network Radios, and Underwater Wireless 
Communications Networks). He served as a Member-at-large of the Board of Governors of the IEEE Communications Society 
from 2009 to 2011.
\end{IEEEbiography}
\end{document}